\documentclass[letter,11pt]{article}
\pdfoutput=1

\usepackage{tcs}
\newcommand{\BSS}{\mathrm{BSS}}

\usepackage{listings}
\title{A Simple Algorithm for Best Separable State}
\author{
Prashanti Anderson\\
MIT\\
\texttt{paanders@mit.edu}
\and
Samuel B. Hopkins\\
MIT\\
\texttt{samhop@mit.edu}
\and
Amit Rajaraman\\
MIT\\
\texttt{amit\_r@mit.edu}
}

\begin{document}

\maketitle

\thispagestyle{empty}

\begin{abstract}
We study the \emph{best separable state} problem (BSS), which asks for the maximum acceptance probability of a quantum measurement over unentangled states.
 In classical terms, the goal is to maximize $\iprod{(x \tensor y), M (x \tensor y)}$ over unit vectors $x,y$ where $0 \preceq M \preceq I$; we call this value $\BSS(M)$.
We study $\BSS$ in the ``perfect completeness'' regime, where given $M$ such that $\BSS(M) = 1$ the goal is to find the best possible solution $x,y$ -- this generalizes the problem of finding a rank-one matrix as close as possible to a given subspace of $\R^{n \times n}$ guaranteed to contain a rank-one matrix.

The strongest known algorithmic guarantees for this problem are: (1) an algorithm which finds a solution with value $1-\epsilon$ in time $\exp(\sqrt{n} (\log n)^{O(1)} / \epsilon^2)$, due to Barak, Kothari, and Steurer, and (2) an algorithm which finds a solution with value $q/n$ in time roughly $n^{O(q)}$, due to Bhattiprolu, Ghosh, Guruswami, Lee, and Tulsiani.

We give a much simpler approach to rounding the SoS relaxation, generalizing the canonical ``global correlation rounding'' technique, and obtain a better running time.
Given $M$ with $\BSS(M) = 1$, our algorithm finds a solution with value $1-\epsilon$ in time $n^{O(\sqrt{n/\epsilon})}$, and a solution of value $q/n$ in time $n^{O(\sqrt q)}$.

Using the same techniques, we prove a new variant of the ``pinning lemma,'' a measure-decomposition theorem widely used in LP/SDP rounding, high-dimensional probability, and statistical physics, which we believe is of independent interest.
\end{abstract}

\newpage

\thispagestyle{empty}

\tableofcontents

\clearpage
\setcounter{page}{1}
\pagestyle{plain}

\section{Introduction}

How can we detect quantum entanglement?
In the \emph{best separable state} problem (henceforth, \emph{BSS}), we are asked to take a description of a quantum measurement device and find the maximum probability that the measurement will accept over all un-entangled (``separable'') quantum states.
More formally, given a Hermitian matrix $M \in \C^{n^2 \times n^2}$ with $0 \preceq M \preceq I$, the goal is to find or approximate
\[
\BSS(M) = \max_{\|x\| = 1, \|y\| = 1} \iprod{x \tensor y, M (x \tensor y)}
\]
where $x,y \in \C^n$ and $\| \cdot \|$ is the Euclidean norm.\footnote{In this paper, we adopt the classical notation of $x \tensor y$ for the tensor product of the state $x$ with the state $y$. In bra-ket notation, one would instead write $\max_{x,y} \iprod{ xy | M |xy }$.}
BSS also amounts to a fundamental classical algorithmic problem: find the maximum of a degree-$4$ polynomial over the high-dimensional unit sphere.
While this problem is likely computationally intractable for worst-case polynomials, BSS restricts to an interesting special case: polynomials whose representations as matrices have bounded eigenvalues.

Algorithms for BSS are equivalent to classical upper bounds on the complexity class QMA[2], the class of computational problems whose solutions can be verified by a verifier with access to two un-entangled quantum provers.\footnote{The choice of $2$ provers is not essential here, as $\mathrm{QMA}[k] = \mathrm{QMA}[2]$ for any constant $k$ \cite{HarrowM13}.}
The relationship of $\textrm{QMA}[2]$ to classical complexity classes is a longstanding mystery in quantum complexity theory: there is a straightforward upper bound $\mathrm{QMA}[2] \subseteq \mathrm{NEXP}$, but $\mathrm{QMA}[2]$ is widely suspected to be contained in smaller complexity classes.
Beyond quantum complexity, BSS is intimately intertwined with several other important questions.
For example, the $2$-to-$4$ norm is exactly BSS on a structured family of measurement operators, and some reductions in the reverse direction are known \cite{HarrowM13, BarakBHKSZ12, BrandaoH15}; the $2$-to-$4$ norm is also closely related to Small-Set Expansion and the Unique Games Conjecture \cite{Khot02, BarakBHKSZ12}.
And in the special case that $M$ is the projector to a subspace of $n \times n$ matrices, BSS becomes the natural question of finding the closest rank-one matrix to the subspace.

It is a folklore result \cite{NatarajanH17} that the top eigenvalue $\lambda_{\max}(M)$ satisfies $\BSS(M) \leq \lambda_{\max}(M) \leq n \cdot \BSS(M)$, and hence gives a polynomial-time $n$-approximation of $\BSS(M)$.
An exponential-time brute-force algorithm based on discretizing the unit sphere solves $\BSS(M)$ exactly (up to numerical errors).
But these algorithms are very far from the best known lower bounds, which say only that a constant-factor approximation in $n^{o(\log n)}$ time would disprove ETH \cite{HarrowM13} and that computing $\BSS(M)$ up to additive $1/\poly(n)$ gaps is NP-hard \cite{BlierT12}.
Efficient algorithms closer to these lower bounds would have significant consequences, the most striking of which is that a quasipolynomial-time constant-factor approximation algorithm would imply that $\textrm{QMA}[2] \subseteq \textrm{EXP}$. 

The first algorithm to go beyond the spectral $n$-approximation and the brute force exponential-time algorithm, due to Barak, Kothari, and Steurer, uses the sum-of-squares (SoS) hierarchy to solve $\BSS$ in the ``perfect completeness'' case in subexponential time \cite{BarakKS17}.
Here ``perfect completeness'' refers to the case that the BSS value equals $1$.

\begin{theorem}[\cite{BarakKS17}]
  \label{thm:bks-bss}
  There is an algorithm which distinguishes between the cases $\BSS(M) = 1$ and $\BSS(M) \le 1-\epsilon$ in time $\exp\left( \sqrt{n} \log(n)^{O(1)}  \cdot \frac 1 {\epsilon^2} \right)$.
\end{theorem}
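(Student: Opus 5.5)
We would follow the Barak--Kothari--Steurer strategy: run the degree-$d$ sum-of-squares relaxation for the polynomial $p(x,y)=\iprod{x\tensor y, M(x\tensor y)}$ over the product of spheres, with $d \approx \sqrt{n}\,(\log n)^{O(1)}/\epsilon^2$. This costs time $n^{O(d)} = \exp(\sqrt n (\log n)^{O(1)}/\epsilon^2)$. If $\BSS(M)=1$, the relaxation finds a pseudo-distribution $\mu$ over $(x,y)$ with $\pE_\mu p = 1$. Since $0 \preceq M \preceq I$, this is equivalent to $\pE_\mu \|(I-M)^{1/2}(x\tensor y)\|^2 = 0$. We may therefore impose the constraint $(I-M)(x\tensor y)=0$, so that $x\tensor y$ lies in the subspace $W=\ker(I-M)$, up to the usual $\epsilon$-slack. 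The remaining task is to show that if $\BSS(M)\le 1-\epsilon$, no such degree-$d$ pseudo-distribution exists. Equivalently, we must extract an actual product state with value $>1-\epsilon$ from $\mu$.

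The rounding step is what we call \emph{conditioning / global correlation reduction}. View the $y$-marginal of $\mu$ through the pseudo-density matrix $\rho_A=\pE[xx^*]$, and view the joint object through $\pE[(x\tensor y)(x\tensor y)^*]$. We repeatedly condition $\mu$ on measurement outcomes, using reweightings by low-degree SoS polynomials such as $\iprod{v,x}^2$ for random or adversarially chosen $v$. Each such step uses $O(1)$ degree. We track a potential function: an entropy-like or $\ell_2$ measure of how mixed $\rho_A$ is, for instance $\mathrm{tr}(\rho_A^2)\in[1/n,1]$. The key dichotomy has two branches. Either $\pE[(x\tensor y)(x\tensor y)^*]$ is close to $\rho_A\tensor\rho_B$, so that $x$ and $y$ are approximately uncorrelated, or a conditioning step increases $\mathrm{tr}(\rho_A^2)$ multiplicatively.

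In the uncorrelated case we round as follows. The constraint gives $\mathrm{tr}\big(M\,\pE[(x\tensor y)(x\tensor y)^*]\big)\ge 1-o(\epsilon)$, hence $\mathrm{tr}(M(\rho_A\tensor\rho_B))\ge 1-\epsilon$. This says the separable mixed state $\rho_A\tensor\rho_B$ is accepted with probability $\ge 1-\epsilon$. Decomposing it into pure product states then yields $x,y$ with $p(x,y)\ge 1-\epsilon$, contradicting $\BSS(M)\le 1-\epsilon$.

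The main obstacle is bounding the number of conditioning steps by $\tilde O(\sqrt n/\epsilon^2)$ rather than the trivial $n$. Here we would exploit the subspace structure, as BKS do. We would use a rank/Schmidt-number argument: the strings $x\tensor y\in W$ are matrices $xy^\top$ of rank one lying in a subspace. For a pseudo-distribution whose $\rho_A$ has effective rank $r$, the ``correlation'' that must be removed is at most about $\sqrt{r}$-fold. This gives about $\sqrt n$ steps to reach effective rank $1$, or to reach the uncorrelated regime, each step incurring $1/\epsilon^2$ overhead from a Chebyshev-type argument.

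We expect the square-root saving to come from a second-moment argument. Conditioning on $\iprod{v,x}^2$ with $v$ drawn from $\rho_A$ raises $\mathrm{tr}(\rho_A^2)$ by an additive amount comparable to the squared correlation. Combined with the fact that correlations are at most $\mathrm{tr}(\rho_A^2)^{1/2}$, a potential argument then yields $\sqrt n$ total steps. This quantitative step is where we expect the difficulty, and the $(\log n)^{O(1)}$ factors would come from discretization and from handling approximate constraints.
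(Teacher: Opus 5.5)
Your proposal has a genuine gap, and it sits in the branch you treat as routine. To deduce $\tr(M(\rho_A\tensor\rho_B))\ge 1-\epsilon$ from $\tr(M\rho_{AB})\ge 1-o(\epsilon)$ for an arbitrary $0\preceq M\preceq I$, you need $\rho_{AB}=\pE[(x\tensor y)(x\tensor y)^*]$ to be $\epsilon$-close to $\rho_A\tensor\rho_B$ in trace norm. That target cannot be reached in $\sqrt n\,(\log n)^{O(1)}$ steps, even on perfect-completeness instances.

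Here is a counterexample. Let $\Pi=(I+F)/2$ be the projector onto the symmetric subspace, with $F$ the swap, so $\BSS(\Pi)=1$. Suppose the solver returns the genuine distribution with $x=y$ uniform on $\mathbb S^{n-1}$. Then $\tr(\Pi\rho_{AB})=1$, but $\tr(\Pi(\rho_A\tensor\rho_A))=(1+\tr\rho_A^2)/2$. So your uncorrelated branch needs $\tr\rho_A^2\ge 1-2\epsilon$, i.e.\ $x$ essentially determined. However, after reweighting by any sum of squares of degree $2k$, one still has $\lambda_{\max}(\rho_A)\le O(k/n)$. To see this, pass to Gaussian moments: multiplying a degree-$k$ polynomial by $\iprod{u,g}$ raises its second moment by at most a factor $4(k+1)$, and the sphere contributes a factor $1/(n+2k)$. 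Hence total degree $\Omega(n)$ is necessary.

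Coordinate pinning does no better. For $x=y$ uniform on $\{\pm 1/\sqrt n\}^n$, after any $k$ pins one has $\tr\rho_A^2=(k/n)^2+(n-k)/n^2$. Yet $\rho_{AB}$ is already $O(1/n)$-close to $\rho_A\tensor\rho_A$ in Frobenius norm from the start. Conditioning arguments give closeness in such weak norms (or in the one-way LOCC norm), not in trace norm. Your heuristics (``correlation at most $\sqrt r$-fold'', ``correlations at most $\tr(\rho_A^2)^{1/2}$''), which are supposed to give multiplicative purity growth at rate $1/\sqrt n$, are therefore unsupported. In the cube example, the $(k+1)$-st pin raises the purity by only $2k/n^2$.

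The missing idea is to use perfect completeness through first moments rather than density matrices. The constraint $\Pi(x\tensor y)=x\tensor y$ is linear, so $\pE(x\tensor y)$ lies in the range of $\Pi$. It then suffices to reach the far weaker condition $\norm{\Cov(x,y)}_F\le\sqrt\epsilon\,\norm{\pE x}\norm{\pE y}$. Once it holds, $(\pE x)\tensor(\pE y)$ is within relative distance $O(\sqrt\epsilon)$ of the subspace, and the normalized means are the output (\Cref{lem:asym-final-mean-rounding}).

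Sign symmetry can force $\pE x=0$, so the paper works over a grid where it can pin coordinates to chosen values. It first anchors $\norm{\pE x}^2,\norm{\pE y}^2\ge Q$ with $O(nQ)$ pins (\Cref{lem:mean-anchoring}). Afterwards, while the condition fails, some pin raises $\norm{\pE x}^2+\norm{\pE y}^2$ by
$\gtrsim\norm{\Cov(x,y)}_F^2>\epsilon\norm{\pE x}^2\norm{\pE y}^2\gtrsim\epsilon Q(\norm{\pE x}^2+\norm{\pE y}^2)$,
which is a multiplicative gain. Balancing $Q$ gives $O(\sqrt{n\log n/\epsilon})$ pins, sharpened to $O(\sqrt{n/\epsilon})$ with the potential $2(\norm u^2+\norm v^2+\norm u\norm v)$ (\Cref{lem:asymmetric-pinning}).

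On the cube example this criterion holds after $k\approx\sqrt{n/\epsilon}$ pins, since then $\norm{\Cov(x,x)}_F\le n^{-1/2}\le\sqrt\epsilon\,k/n$; your criterion needs $k\approx n$. Note that the paper does not reprove the cited theorem of Barak, Kothari, and Steurer. It follows from \Cref{thm:asym-bss-close-to-one-main}, since $n^{O(\sqrt{n/\epsilon})}\le\exp(\sqrt n\,(\log n)^{O(1)}/\epsilon^2)$.
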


\cite{BarakKS17} also solves the search version of this problem: given $M$ such that $\BSS(M) = 1$, find $x,y$ such that $\iprod{x \tensor y, M(x \tensor y)} \geq 1-\epsilon$.
To prove \cref{thm:bks-bss}, \cite{BarakKS17} introduces a sophisticated and challenging-to-analyze rounding scheme for the SoS relaxation of $\BSS(M)$. Digesting their algorithm is an ongoing challenge for researchers in quantum information and approximation algorithms.
Our main contribution is a very simple approach to rounding the SoS relaxation for $\BSS(M)$, using \emph{global correlation rounding}\footnote{While our first result for distinguishing $\BSS(M) = 1$ from $\BSS(M) \leq 1-\eps$ directly analyzes the canonical global correlation rounding algorithm, our second result (for distinguishing $\BSS(M) = 1$ and $\BSS(M) = q/n$) also uses a generalization of conditioning in the rounding procedure.} \cite{Montanari08, RaghavendraT12, BarakRS11},
which improves over \cref{thm:bks-bss}.

\begin{theorem}[Main result (see~\cref{thm:asym-bss-close-to-one-main} and~\cref{thm:asym-bss-low-soundness})]
    \label{thm:main-bss}
    There is an algorithm for best separable state with perfect completeness
    having the following guarantees.
    \begin{enumerate}
        \item For every $0<\epsilon<1$, the algorithm distinguishes
        $\BSS(M)=1$ from $\BSS(M)\leq 1-\epsilon$ in time
        $n^{O(\sqrt{n/\epsilon})}$.

        \item For every $1\leq q\leq n/2$, the algorithm distinguishes
        $\BSS(M)=1$ from $\BSS(M)\leq q/n$ in time $n^{O(\sqrt q)}$.
    \end{enumerate}
    Furthermore, in both cases the algorithm solves the corresponding search
    problem in the same running time.
\end{theorem}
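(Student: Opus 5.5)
We will use the degree-$O(\sqrt{n/\epsilon}\cdot\log n)$ (resp.\ $O(\sqrt q\log n)$) SoS relaxation of $\BSS(M)$. Its variables are the pseudo-distribution's moments of $(x,y)$. It carries the constraints $\|x\|^2=\|y\|^2=1$ and, by perfect completeness, $\iprod{x\tensor y, M(x\tensor y)}=1$. Since $0\preceq M\preceq I$, the latter is equivalent to $(I-M)(x\tensor y)=0$, so every consequence of it holds as an SoS identity. We round by global correlation rounding. We pick $t$ random coordinates (or random directions) $u_1,\dots,u_t$ of the $x$-register, condition the pseudo-distribution on the values of $\iprod{u_i,x}$ (or on a discretization of them), and output the top eigenvector pair of the conditioned second moments. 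Concretely, let $\rho_x=\pE[xx^*]$ and $\rho_y=\pE[yy^*]$ after conditioning, and sample $x,y$ independently from Gaussians with these covariances. The value of this independent product rounding is $\pE_x\pE_y\iprod{x\tensor y, M(x\tensor y)}=\Tr(M(\rho_x\tensor\rho_y))$. The error compared to the true joint moment $\pE[(xx^*)\tensor(yy^*)]$ is controlled by the ``correlation'' between the $x$ and $y$ registers.

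The key steps are the following. (i) \emph{Potential argument.} Define the potential $\Phi=\pE\|\rho_x\|_F^2$ or, better, the entropy-like quantity $\Tr\rho_x^2\in[1/n,1]$. Each conditioning step on a random linear functional of $x$ increases $\Phi$ in expectation by at least the squared global correlation $\Tr[(\pE[xx^*\tensor yy^*]-\rho_x\tensor\rho_y)\cdot(\text{test})]$. A pinning-lemma variant then shows that after $t$ steps some intermediate conditioning has correlation at most $\mathrm{poly}(1/t)$ in the appropriate norm. (ii) \emph{From low correlation to value.} Because $I-M\succeq0$ and the joint moment is fully supported in $\ker(I-M)$, we get
\[
1-\Tr(M(\rho_x\tensor\rho_y)) = \Tr\bigl((I-M)(\rho_x\tensor\rho_y-\pE[xx^*\tensor yy^*])\bigr),
\]
which is bounded by the trace-norm of the covariance term. (iii) \emph{Norm conversion.} The pinning argument naturally controls a Frobenius/$\ell_2$ quantity, while step (ii) needs trace norm. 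Converting between them costs a factor related to the effective rank of $\rho_x$, and the effective rank is $1/\Tr\rho_x^2$. Here we use that conditioning both raises $\Tr\rho_x^2$ and lowers correlation. Either $\Tr\rho_x^2$ becomes $\ge 1-\epsilon$, in which case $\rho_x$ is nearly rank one and we are done directly. Or each step makes $\Omega(\cdot)$ progress, and balancing the two gives $t=O(\sqrt{n/\epsilon})$ conditioning steps. Each step costs $n^{O(1)}$ in degree and enumeration, giving total time $n^{O(\sqrt{n/\epsilon})}$. (iv) \emph{Low-soundness regime.} We instead aim only for value $q/n$. We stop once $\Tr\rho_x^2\ge q/n$, or use the generalized conditioning on a subspace projection (conditioning on $\Pi x$ for a random rank-$k$ $\Pi$). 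The same balancing then gives $t=O(\sqrt q)$. The search version follows by outputting the top eigenvectors of $\rho_x$ and $\rho_y$ (or Gaussian samples from them) and checking their value.

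The main obstacle is step (iii): obtaining the square-root dependence. A naive pinning lemma gives correlation $1/t$ in Frobenius norm, and converting to trace norm loses $\sqrt{n}$ per use, which leads to $t\sim n/\epsilon$. To fix this, I would first try a potential that couples the two quantities, e.g.\ $\Phi=\log(1/\Tr\rho_x^2)$, which lies in $[0,\log n]$. The aim is to prove a per-step inequality of the form $\mathbb E[\Delta\Phi]\gtrsim(\text{trace-norm correlation})^2\cdot \Tr\rho_x^2$. If $\Tr\rho_x^2$ is small the effective rank is large but progress per step is also large, and summing over $t$ steps then yields the $\sqrt{n/\epsilon}$ (resp.\ $\sqrt q$) tradeoff.
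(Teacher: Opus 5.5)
Your proposal has a genuine gap, and it is the step you flag yourself: step (iii) is the entire content of the theorem, and it is left as a hope. Because you round through second moments, the error in (ii) is the trace norm of the $n^2\times n^2$ degree-$4$ correlation operator $D=\pE[xx^\top\otimes yy^\top]-\rho_x\otimes\rho_y$. Pinning/variance-drop arguments certify progress only through squared covariances of individual quantities, i.e.\ a Frobenius-type bound on $D$. The conversion $\|D\|_1\le n\|D\|_F$ can lose a factor as large as $n$, so this route needs far more than $\sqrt{n/\epsilon}$ conditionings without a genuinely new idea. The inequality $\E[\Delta\Phi]\gtrsim(\text{trace-norm correlation})^2\cdot\Tr\rho_x^2$ that would rescue it is asserted with no argument.

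There are further problems. Your potential depends only on $\rho_x$, but conditioning on a functional of $x$ reduces $x$--$y$ correlation by moving $\rho_y$, not $\rho_x$. So the claimed per-step gain ``by the squared global correlation'' is not justified even in Frobenius form. Step (ii) also never uses perfect completeness: it goes through verbatim if $\Tr((I-M)\pE[xx^\top\otimes yy^\top])\le\delta$. The paper stresses that with $\BSS(M)<1$ the relevant moment can be far from the subspace, so your route is at least as hard as an imperfect-completeness result. Finally, step (iv) is unsupported. Purity $\Tr\rho_x^2\ge q/n$ says nothing about the value of any product state unless $D$ is already controlled, and conditioning on a random rank-$k$ projection of $x$ is not analyzed.

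The missing idea is to round through \emph{first} moments, where perfect completeness enters as a degree-$2$ linear fact. Up to $n^{-\beta}$, $\pE(x\otimes y)=m_x\otimes m_y+\Cov(x,y)$ lies in the range of $\Pi$. Hence the purely Frobenius condition $\|\Cov(x,y)\|_F\le\gamma\|m_x\|\|m_y\|$ already gives $\|\Pi(m_x\otimes m_y)\|^2\ge(1-O(\gamma^2))\|m_x\otimes m_y\|^2$, with no trace-norm conversion. This cross-covariance is exactly what the variance-drop inequality controls.

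Since the means can vanish (e.g.\ by sign symmetry), the paper first greedily pins $O(Qn)$ coordinates so that pinned coordinates contribute at least $Q$ to $\|m_x\|^2$ and $\|m_y\|^2$. While the win condition fails, some pinning then increases $\Phi=2(\|m_x\|^2+\|m_y\|^2+\|m_x\|\|m_y\|)$ by $\gtrsim\gamma^2Q^{2/3}\Phi^{4/3}$. This can happen only $O(1/(\gamma^2Q))$ times, and choosing $Q\asymp1/(\gamma\sqrt n)$, $\gamma=\sqrt\epsilon/4$ gives $O(\sqrt{n/\epsilon})$ pinnings. For $1$ versus $q/n$ the paper additionally tests best responses to the normalized means and the top singular pair of $\pE xy^\top$. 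It also uses affine-square reweighings to make progress when $\sigma_1(\pE xy^\top)\le\frac12\|m_x\|\|m_y\|$. Nothing in your sketch plays either role.
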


In the $1$ versus $1-\epsilon$ regime, our algorithm improves over the state of the art by a $(\log n)^{O(1)} / \epsilon^{1.5}$ factor in the exponent.
This means that our algorithm retains subexponential running time (hence beating brute-force) for any $\epsilon = n^{-1+\Omega(1)}$, whereas the algorithm of \cite{BarakKS17} runs in subexponential time only when $\epsilon = n^{-1/4 + \Omega(1)}$.

In the $1$ versus $q/n$ regime, the fastest previously known algorithm is the general degree-$4$ polynomial optimization algorithm of \cite{BhattiproluGGLT17}, which has running time $n^{O(q)}$ (likely improvable to $2^{O(q)} n^{O(1)}$, per \cite{BhattiproluGGLT17}), while ours has running time $n^{O(\sqrt q)}$.
Thus, our algorithm improves the state of the art for any $q \gg \log^2 n$.

We remark that a simplified proof we present in the technical overview \Cref{sec:tech-overview}, which loses a $\sqrt{\log n}$ factor in the exponent, has a simple analysis spanning only a few pages (!) and improves on the guarantees of \cite{BarakKS17}.

We state \Cref{thm:main-bss} in the setting of optimization over $\C$, but for simplicity in the proofs, we work over $\R$ as the complex version of the problem reduces to our real-valued version (see~\cite[Section B]{BarakKS17}). We also work in the case where $M$ is a projector in the proof sections, as $\BSS$ with perfect completeness reduces to the case of projectors via considering the projector to the top eigenspace.

\paragraph{Open problems}
Many open problems remain, whose solutions would likely introduce new fundamental algorithmic ideas for polynomial optimization.
Two especially interesting examples are:
can the $\sqrt{n}$ in the exponent of the running time can be reduced, perhaps all the way to $O(\log n)$?
And, can $\BSS(M) = 1$ versus $\BSS(M) = 1-n^{-1+\Omega(1)}$ be distinguished in polynomial time?

\paragraph{AI usage statement}
The initial analysis of a pinning-based rounding for best separable state was hatched and executed by the authors without AI assistance, as part of a long-running project studying BSS.
We initially proved a weaker result than \Cref{thm:main-bss}, losing a logarithmic and $1/\sqrt{\epsilon}$ factor in the exponent, and restricted to the ``symmetric'' case, where the goal is to maximize $\iprod{(x \tensor x), M (x \tensor x)}$ rather than $\iprod{(x \tensor y), M (x \tensor y)}$.

AI (GPT 5.5 Pro, GPT 5.6 Sol) was then used to find arguments which remove the logarithmic factor and to generalize the result to the non-symmetric case (involving $\iprod{(x \tensor y), M (x \tensor y)}$).
GPT 5.6 Sol also found the version of the loose-by-$\sqrt{\log n}$ argument we present in the technical overview---our original version, preserved in \Cref{sec:fine-grained-pinning} because of its interesting connection to pinning lemmas, applies only to the symmetric case.
Codex was used to aid in typesetting the paper, filling out related work, and validating citations.

\subsection{Related work}

\paragraph{Sum of Squares Algorithms for Optimizing Constant-Degree Polynomials}
Known algorithmic guarantees for approximating constant-degree polynomial optimization depend strongly on the domain and on the class of polynomials.
For the simplex, de Klerk, Laurent, and Parrilo give a PTAS for fixed-degree polynomial optimization \cite{DeKlerkLP06}.
For the unit sphere, general guarantees about the convergence of the SoS hierarchy are provided by \cite{DohertyW12, FangF21}, but the degree of SoS needed by those works for nontrivial guarantees yield exponential-time algorithms in the dimension.
Worst-case multiplicative approximation for arbitrary even-degree polynomials over the sphere was studied by Bhattiprolu, Ghosh, Guruswami, Lee, and Tulsiani, who give approximation/runtime tradeoffs based on higher levels of SoS \cite{BhattiproluGGLT17}.
Average-case and planted versions of polynomial optimization over the sphere have also been a central testbed for SoS and related spectral or Kikuchi-type algorithms \cite{BhattiproluGL17, HopkinsSSS16, WeinAM25}.
For the especially important $2 \to 4$ norm, a particular class of degree-$4$ polynomials related but not identical to BSS, algorithmic guarantees are known for several families of instances \cite{BarakBHKSZ12, BrandaoH15, HopkinsT26}.

\paragraph{Global Correlation Rounding}
``Global correlation rounding'' refers to the iterated conditioning-based rounding scheme we use here.
It was developed in hierarchy algorithms for constraint satisfaction problems and related SoS rounding frameworks \cite{BarakRS11, RaghavendraT12, BarakKS14, GuruswamiS12}.
Since then, variants of the same idea have been used in scheduling via LP hierarchies \cite{LeveyR21}, mean-field and free-energy approximation for Ising models \cite{JainKR19}, and recent continuous optimization problems such as entrywise low-rank approximation and matrix $p \to q$ norms \cite{AndersonBH26}.
Our use follows this line at the level of the rounding primitive, while the main new ingredient is the fine-grained pinning analysis tailored to the BSS SoS relaxation.

\paragraph{QMA[2] and Best Separable State}
The literature on $\mathrm{QMA}[2]$ is too broad to survey completely here; we highlight the seminal work \cite{HarrowM13} which made the connection between product-state testing, $\mathrm{QMA}[2]$, and polynomial optimization, proving $\mathrm{QMA}[k]=\mathrm{QMA}[2]$ for constant $k$ and, assuming ETH, an $n^{\Omega(\log n)}$ lower bound for constant-gap BSS \cite{HarrowM13}.
The closest prior work to ours is the perfect-completeness SoS algorithm of Barak, Kothari, and Steurer \cite{BarakKS17}; their algorithm follows a different rounding strategy and analysis than ours.

\paragraph{Special Cases of BSS}
Several special cases of BSS admit stronger guarantees than the worst-case problem.
When $M$ is the projector onto a subspace, BSS is the problem of finding the closest rank-one matrix to that subspace; this also captures the perfect-completeness setting.
If the subspace is generic, we know stronger guarantees: Johnston, Lovitz, and Vijayaraghavan give a polynomial-time algorithm for this case \cite{JohnstonLV23}.
Another special case is if $M$ is entrywise nonnegative---then Barak, Kelner, and Steurer give a quasipolynomial-time SoS algorithm \cite{BarakKS14}.

\paragraph{Simultaneous Work} Simultaneous and independent work by Jeronimo, Wu, and Xu \cite{granha2026argmaxprinciplesumofsquaresrelaxations} also obtains an algorithm for $1$ versus $1-\epsilon$ BSS using $O(\sqrt{n/\epsilon})$ degree SoS, with rounding techniques differing substantially from ours.
Another simultaneous and independent work of the same authors \cite{jeronimo2026optimalquantumfinettitheorems} shows that $\BSS(M)$ can be solved up to any constant gap with \emph{im}perfect completeness in time $\exp{(\wt{O}(\sqrt n))}$.

\section{Technical Overview and Warmup}
\label{sec:tech-overview}

In this section we describe a simple algorithm and analysis which achieves the guarantees \Cref{thm:main-bss} up to a $\sqrt{\log n}$ factor in the exponent. We emphasise that these results already improve on the prior known guarantees of \cite{BarakKS17}.

\begin{theorem}[Warmup result]
    \label{thm:bss-tech-overview}
    Suppose $M \in \R^{n^2 \times n^2}$ has $\BSS(M) = 1$.
    Then, for every $0 < \epsilon < 1$, in time $n^{O(\sqrt{n/\epsilon \cdot \log n})}$ we can find unit vectors $x,y \in \R^n$ such that $\iprod{(x \tensor y), M(x \tensor y)} \geq 1-\epsilon$, and for every $1 \leq q \leq n/2$, in time $n^{O(\sqrt{q \log n})}$ we can find unit $x,y \in \R^n$ such that $\iprod{(x \tensor y), M (x \tensor y)} \geq q/n$.
\end{theorem}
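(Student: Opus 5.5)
We will run the degree-$D$ SoS relaxation for the perfect-completeness program and round it by global correlation rounding. Assume $M = \Pi$ is a projector. The relaxation takes a degree-$D$ pseudo-distribution $\mu$ over pairs $(x,y)$, subject to $\|x\|^2 = \|y\|^2 = 1$ and $\Pi(x\tensor y) = x \tensor y$. Since $\BSS(M)=1$, this is feasible, and it can be found in time $n^{O(D)}$.

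\textbf{Rounding.} Given $\mu$, we repeatedly condition (``pin''). At each step we pick a random direction, or a coordinate sampled proportionally to the pseudo-variance, and condition on the value of the linear form $\iprod{g,x}$. After $t$ steps we output the top eigenvectors of the pseudo-second-moments $\pE[xx^\top]$ and $\pE[yy^\top]$, or alternatively the factors of $\pE[x \tensor y]$ viewed as an $n\times n$ matrix. Setting $D = O(t)$ gives running time $n^{O(t)}$.

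\textbf{Key quantity.} The analysis tracks a potential such as $\Phi = \pE\|x\|^2 - \|\pE x\|^2 = \Tr \mathrm{Cov}(x)$, or more naturally the ``rank'' of $\pE[xx^\top]$, measured through $\|\pE xx^\top\|_F^2$. We use two facts:
\begin{enumerate}
\item Conditioning on a random linear form decreases $\Tr\mathrm{Cov}(x)$ by an amount proportional to $\|\mathrm{Cov}(x)\|_F^2/\Tr\mathrm{Cov}(x)$. This is the global correlation step.
\item If the state is not yet ``pinned'', meaning $\mathrm{Cov}(x)$ and $\mathrm{Cov}(y)$ are not nearly rank one, then the constraint $\Pi(x\tensor y)=x\tensor y$ forces correlation between $x$ and $y$. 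Conditioning on $x$ then reduces the variance of $y$, since $y$ is determined by $x$ up to the kernel structure.
\end{enumerate}

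\textbf{Potential argument and the $\sqrt{\cdot}$ tradeoff.} The effective rank $r = (\Tr C)^2/\|C\|_F^2$ of the pinned covariance $C$ starts at most $n$. Each conditioning step reduces the potential by a $1/r$ fraction. For value $1-\epsilon$ we need the remaining rank-one mass to be $\geq 1-\epsilon$, that is, $r \lesssim 1+\epsilon$ in suitable units. The rounded value is controlled by $\|\pE[x\tensor y]\|^2$-type quantities. Splitting into steps where $r > \sqrt{n/\epsilon}$ and steps where $r$ is smaller gives a total step count of $O(\sqrt{n/\epsilon})$. Similarly, reaching value $q/n$ requires only effective rank $\lesssim n/q$-type progress, which gives $\sqrt{q}$ steps. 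The $\log n$ loss comes from a union bound or a discretization of the values we condition on: pinning a continuous variable needs $O(\log n)$ bits of precision, or, alternatively, from a dyadic decomposition over scales of $r$.

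\textbf{Main obstacle.} The hardest step is fact 2: showing that whenever the pseudo-distribution still has large effective rank, conditioning makes substantial progress. This must be proved as a degree-$O(1)$ SoS inequality relating $\pE\|x\tensor y - \pE x\tensor \pE y\|^2$ to the covariances under the subspace constraint. I would first try the inequality $\Tr\mathrm{Cov}(x\tensor y) \le \Tr\mathrm{Cov}(x)+\Tr\mathrm{Cov}(y)$, combined with the fact that the projector preserves $\pE[x\tensor y]$. Together these give that the rounded product $\pE x\tensor \pE y$ has value at least $1 - O(\text{residual variance})$. The remaining work is to get the square-root balance between the number of steps and the residual variance, which I expect requires Cauchy–Schwarz on $\sum_t \|\mathrm{Cov}_t\|_F^2$.
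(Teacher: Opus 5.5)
There is a genuine gap. The success criterion your analysis aims for is far too strong, and your fact 2 is false.

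Take $\Pi = I$, and let $x,y$ be independent and uniform on $\{\pm 1/\sqrt n\}^n$. This is an honest distribution, hence a pseudo-distribution of every degree. Then $\Cov(x)=\Cov(y)=I/n$ is as far from rank one as possible, yet $\Cov(x,y)=0$: the constraint forces no correlation at all.

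After pinning any $t$ coordinates, the residual variance is $\Tr\Cov(x)=1-t/n$, the effective rank is $n-t$, and each further pin removes only $1/n$ of variance. So consider any argument that certifies value $1-O(\Tr\Cov(x)+\Tr\Cov(y))$, which is what your proposed route via $\Tr\Cov(x\tensor y)$ delivers, or that waits for the covariances or $\pE xx^\top$ to become nearly rank one. Such an argument needs $(1-O(\eps))n$ pins, not $O(\sqrt{n/\eps})$. Your split at $r\approx\sqrt{n/\eps}$ has no inequality behind it, and it fails on this example.

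(Your attribution of the $\log n$ to discretization is also off. The paper discretizes to a $\mathrm{poly}(n)$-size grid precisely so that pinning a coordinate's value is defined for a pseudo-distribution; conditioning on a real linear form $\iprod{g,x}$ is not. That costs only $n^{O(t)}$.)

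What is missing is to measure the cross-covariance \emph{relative to the means}. The vector $\pE x\tensor y$ lies in the range of $\Pi$ and equals $\pE x\tensor\pE y+\Cov(x,y)$. Hence rounding to the normalized means succeeds once $\norm{\Cov(x,y)}_F\le\sqrt\eps\,\norm{\pE x}\norm{\pE y}$. In the example above this holds after pinning one coordinate of each of $x$ and $y$. The paper's argument then has two phases.

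\begin{itemize}
\item It greedily pins $O(nQ)$ coordinates, choosing good values rather than sampling them, so that pinned coordinates contribute at least $Q$ to each of $\norm{\pE x}^2$ and $\norm{\pE y}^2$.
\item It tracks $\Phi=\norm{\pE x}^2+\norm{\pE y}^2=2-\Tr\Cov(x)-\Tr\Cov(y)$; this is where your fact 1 genuinely fits. Whenever rounding fails, some pin raises $\Phi$ by $\Omega(\norm{\Cov(x,y)}_F^2)\ge\Omega(\eps\norm{\pE x}^2\norm{\pE y}^2)\ge\Omega(\eps Q)\,\Phi$.
\end{itemize}

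This multiplicative growth from $\Phi\gtrsim Q$ up to $\Phi\le 2$ takes $O(\log(1/Q)/(\eps Q))$ steps. Choosing $Q=\sqrt{\log n/(\eps n)}$ balances the two phases at $O(\sqrt{n\log n/\eps})$. So the $\sqrt{\log n}$ loss comes from this $\log(1/Q)$.

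For the $q/n$ case, suppose the top singular pair of $R=\pE xy^\top$ fails. Then $\sigma_1(R)^2\le(q/n)\norm{R}_F^2$, so $\norm{\Cov(x,y)}_F^2\ge\norm{R}_F^2-\sigma_1(R)^2\gtrsim(n/q)\,\sigma_1(R)^2$. Meanwhile the anchoring gives $\sigma_1(R)^2\gtrsim Q\,\Phi$. With $Q=\sqrt{q\log n}/n$, $\Phi$ therefore grows by a factor $1+\Omega(\sqrt{\log n/q})$ per step.

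Your sketch has no counterpart of either mechanism.
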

At the end of this section we briefly describe how the $\sqrt{\log n}$ factor in the exponent can be removed.

\paragraph{The convex relaxation}
Let $\Pi$ be the projector to the span of eigenvectors having eigenvalue $1$ in $M$.
Since $\BSS(M) = 1$, there are nonzero $x,y$ such that $\Pi(x \tensor y) = x \tensor y$.
We will use an SoS relaxation of the following system of polynomial equations.
(See \cite{barak2016proofs} for background on SoS relaxations.)
\begin{align*}
\mathcal A =\left\{
\begin{gathered}
    \norm{x}_2^2=1,\qquad \norm{y}_2^2=1,\\
    \Pi(x\otimes y)=x\otimes y
\end{gathered}
\right\}.
\end{align*}
By picking a (not too large, $\poly(n)$-size) discrete set $\Sigma$ of $\mathbb{R}$, incurring negligible $1/\poly(n)$-size errors due to discretization, we pass to a discrete assignment problem: find vectors $x,y \in \Sigma^n$ with $\|x\|^2, \|y\|^2 = 1 \pm 1/\poly(n)$ such that $\|\Pi (x \tensor y) \|^2 \geq 1-1/\poly(n)$.
(For the rest of this overview we ignore this negligible $1/\poly(n)$ error.)
This discrete assignment version is a technically convenient departure from the standard SoS relaxation of $\mathcal{A}$ (used by \cite{BarakKS17}), because it allows the \emph{conditioning} operation described below.

Our algorithm first solves the degree $t=$ $O(\sqrt{n/\epsilon \cdot \log n})$ or $t=O(\sqrt{q \log n})$ SoS relaxation of this discrete assignment problem, which can be done in time $|\Sigma|^{O(t)} = n^{O(t)}$.

\subparagraph{SoS background -- pseudodistributions, local distributions, and conditioning}
We defer formal background to \cref{sec:prelims} and discuss here only what is needed for the technical overview.

A solution to this relaxation is a \emph{degree-$t$ pseudodistribution}.
It is often helpful to think of a pseudodistribution as if it were a probability distribution over assignments to $(x,y)$.
We will need two key ways in which pseudodistributions act like such distributions.

First, a degree-$t$ pseudodistribution contains, for every set $S \subseteq [n]$ of size at most $t$, a probability distribution, called a \emph{local distribution}, over assignments in $\Sigma^{|S|}$ to the variables in $S$.
These local distributions are are consistent on their intersections, in that the local distributions on $S, S'$ must have the same marginal on $S \cap S'$.

Second, we can \emph{condition} a pseudodistribution on the value of any single coordinate.
For any variable $x_i$ (or $y_i$), we can draw a sample $\hat{x}_i$ from the local distribution on $\{i\}$ and replace the local distribution on $S$ with the local distribution on $S \cup \{i \}$, conditioned on on the event that $x_i = \hat{x}_i$.
This yields a degree $t-1$ pseudodistribution. 

Finally, some notation: for a function $f(x,y)$ depending on at most $t$ variables, we write $\pE f(x,y)$ for the expected value of $f$ under the appropriate local distribution; we extend the operator $\pE$ linearly to all $f$ which can be written as linear combinations of such local functions.
Thus, for example, $\pE x \in \R^n$ is the vector whose entries are $\E x_i$ with expectation taken under the $\{i\}$-th local distribution.

\subsection{\texorpdfstring{$1$ versus $1-\epsilon$}{1 versus 1-ε}}

We describe and analyze a simple rounding procedure which takes a degree $O(\sqrt{n/\eps \cdot \log n })$ pseudodistribution and outputs vectors $\hat{x},\hat{y}$ such that $\norm{\Pi(\hat{x} \otimes \hat{y})}_2 \geq (1-\eps)\cdot \norm{\hat{x} \otimes \hat{y}}_2$.
In the next section we describe how to adapt the analysis to the $1$ versus $q/n$ setting.

The rounding procedure is simple: enumerate over all possible conditionings of $O\left(\sqrt{n/\eps \cdot \log n}\right)$ coordinates, for each conditioning compute the conditional means $\pE x, \pE y$, and output the best solution found in this manner.
The step where the algorithm outputs $\pE x, \pE y$ succeeds if any of the conditioned pseudoexpectations satisfy the following \emph{small covariance} condition:

\begin{lemma}[Small Covariance Success Condition (Informal)]
    \label{lem:mean-rounding-tech-overview}
    Let $\pE $ be a pseudodistribution satisfying $\mathcal A$.
    If
    \begin{align}
            \norm{\widetilde{\Cov}(x,y)}_F \leq \sqrt \eps \cdot \norm{\pE x}\norm{\pE y} \label{eq:mean-rounding-condition}
    \end{align}
    then
    \[ \norm{\Pi((\pE x) \otimes (\pE y))}_2 \geq (1-\eps) \cdot \norm{(\pE x) \otimes (\pE y)}\,.\] 
\end{lemma}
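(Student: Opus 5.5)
Write $\mu_x = \pE x$, $\mu_y = \pE y$, and $C = \widetilde{\Cov}(x,y) = \pE[x \otimes y] - \mu_x \otimes \mu_y$, viewed as a vector in $\R^{n^2}$ (equivalently an $n\times n$ matrix, so the Frobenius norm is the Euclidean norm in $\R^{n^2}$). The plan is to use the constraint $\Pi(x\otimes y) = x\otimes y$ from $\mathcal A$ in expectation. This is a degree-$2$ polynomial identity, so $\pE$ applies to it directly and gives
\[
\Pi\, \pE[x\otimes y] = \pE[x\otimes y].
\]
In other words, the vector $v := \pE[x \otimes y] = \mu_x\otimes \mu_y + C$ lies in the range of $\Pi$.

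Next, decompose $\mu_x \otimes \mu_y = v - C$. Since $v$ is in the range of $\Pi$, we get $(I-\Pi)(\mu_x\otimes\mu_y) = -(I-\Pi)C$. Because $I - \Pi$ is a projector, this gives
\[
\norm{(I-\Pi)(\mu_x\otimes \mu_y)}_2 \le \norm{C}_2 = \norm{C}_F \le \sqrt{\eps}\,\norm{\mu_x}\norm{\mu_y} = \sqrt\eps \norm{\mu_x\otimes\mu_y}_2 ,
\]
using the hypothesis \eqref{eq:mean-rounding-condition} and multiplicativity of norms under tensor products.

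Finally, apply Pythagoras. Write $w = \mu_x\otimes\mu_y$; then $\norm{\Pi w}^2 = \norm{w}^2 - \norm{(I-\Pi)w}^2 \ge (1-\eps)\norm{w}^2$. This yields $\norm{\Pi w} \ge \sqrt{1-\eps}\,\norm{w} \ge (1-\eps)\norm{w}$, which is the claim. It is in fact slightly stronger: in squared form, the rounded unit vectors $\hat x = \mu_x/\norm{\mu_x}$, $\hat y = \mu_y/\norm{\mu_y}$ satisfy $\iprod{\hat x\otimes \hat y, \Pi(\hat x\otimes\hat y)} \ge 1-\eps$. The statement requires $\mu_x,\mu_y\neq 0$ to be meaningful, but if either is zero the conclusion holds trivially.

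There is no serious obstacle. The only step needing care is justifying $\Pi\,\pE[x\otimes y] = \pE[x\otimes y]$ in the discretized setting. There the constraint holds only up to $1/\poly(n)$, in the sense that $\pE\norm{(I-\Pi)(x\otimes y)}^2 \le 1/\poly(n)$, so we use pseudo-Cauchy--Schwarz: $\norm{(I-\Pi)\pE[x\otimes y]}^2 \le \pE \norm{(I-\Pi)(x\otimes y)}^2$. This holds because $\norm{\pE u}^2 \le \pE\norm{u}^2$ for degree-$2$ vectors $u$, by positivity of $\pE$ on squares. The result is an additive negligible error term, which the overview ignores.
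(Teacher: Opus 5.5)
Your proposal is correct and follows essentially the same route as the paper's formal proof (\Cref{lem:asym-final-mean-rounding}). That proof writes $\pE(x\otimes y) = (\pE x)\otimes(\pE y) + \widetilde{\Cov}(x,y)$, bounds $\|(I-\Pi)\pE(x\otimes y)\|$ using the constraint (via Jensen in the slack version $\mathcal A'(\beta)$, giving exactly the additive $n^{-\beta}$ term you anticipate), applies the triangle inequality, and concludes with Pythagoras for the orthogonal projector $\Pi$.
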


Here $\widetilde{\Cov}(x,y) = \pE xy^\top - (\pE x)(\pE y)^\top$ and $\| \cdot \|_F$ is the Frobenius norm.
The proof of this lemma is straightforward and follows similar ideas to \cite{BarakKS17}; the main idea is that the upper bound on $\|\widetilde{\Cov}\|_F$ ensures that $\pE x \tensor y$ is close to $(\pE x) \tensor (\pE y)$, and that $\pE x \tensor y$, by virtue of the constraints of the SoS relaxation, must lie in the subspace $\Pi$ projects to.
(This last point is where $\BSS(M) = 1$ is crucial; even if $\BSS(M) = 1-\eps/2$, $\pE x \tensor y$ might actually be far from the subspace.)
Together, these imply $(\pE x)(\pE y)^\top$ must lie close to that subspace.
We defer the formal proof to~\cref{sec:bss-asym} and instead focus on the meat of the argument: that the ``small covariance success condition'' \eqref{eq:mean-rounding-condition} holds after conditioning on the values of at most $O(\sqrt{n/\epsilon \cdot \log n})$ coordinates.

\paragraph{A new pinning lemma}
The technical heart of our result is a \emph{pinning lemma}, which shows that we can establish the condition in~\cref{eq:mean-rounding-condition} via conditioning on (``pinning'') $O\left(\sqrt{n/\eps \cdot \log n}\right)$ coordinates.
\begin{lemma}[Warmup pinning]
\label{lem:pinning-tech-overview}
    Let $0<\eps\leq1$, and let $(x,y)$ be jointly-distributed random vectors supported on
    $\mathbb S^{n-1}\times\mathbb S^{n-1}$ with finite support.
    There exists a set $S \subseteq [2n]$ of at most $|S| \leq O(\sqrt{n/\epsilon \log n})$ coordinates and values $(\hat{x},\hat{y})_S \in \R^{|S|}$ for those coordinates such that $(\hat{x},\hat{y})_S$ is in the support of $(x,y)$ restricted to $S$, and such that 
    \[
        \norm{\Cov(x,y) \, | \, (x,y)_S = (\hat{x},\hat{y})_S}_F
        \leq
        \sqrt{\epsilon} \cdot \norm{\E x \, | \, (x,y)_S = (\hat{x},\hat{y})_S} \cdot \norm{\E y \, | \, (x,y)_S = (\hat{x},\hat{y})_S}\,.
    \]
    The lemma also applies to pseudodistributions of degree at least $O(\sqrt {n/\eps \cdot \log n})$.
\end{lemma}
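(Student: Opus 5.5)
The plan is a potential-function (``global correlation'') argument in two phases. Phase one pins about $t=\Theta(\sqrt{n\log n/\epsilon})$ coordinates to guarantee that both conditional means have squared norm at least $\Omega(t/n)$. Phase two keeps pinning single coordinates; whenever \eqref{eq:mean-rounding-condition} fails, each pin raises $\norm{\E x}^2$ or $\norm{\E y}^2$ by a definite multiplicative amount. Write $a=\norm{\E x\mid\cdot}^2$ and $b=\norm{\E y\mid\cdot}^2$, both in $[0,1]$ since $x,y$ are unit vectors. Everything below uses only conditional first and second moments of at most $O(t)$ coordinates, so the argument transfers verbatim to pseudodistributions of degree $O(t)$.

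\textbf{Phase one (making the means non-negligible).} A pinned coordinate stays pinned: after conditioning on $x_i=\hat x_i$, the conditional mean of $x_i$ is $\hat x_i$ under every further conditioning. Hence $a\ge \sum_{i\in S_x}\hat x_i^2$ and $b\ge\sum_{j\in S_y}\hat y_j^2$ forever afterwards. Pick $S_x$ and $S_y$ uniformly at random of size $t$ and condition on their sampled values. Then $\E\sum_{i\in S_x}x_i^2=t/n$, and likewise for $y$. Since the statement is existential, I will argue about a random run and at the end extract a good realization by averaging. Lower-tail control in this phase is the one detail to watch. I would either split $S_x,S_y$ into blocks and use Markov on $1-\sum$, or simply carry $\E[\sum_{S_x}x_i^2]$ into the phase-two accounting; the aim is $a,b\ge t/(2n)$ on a constant-probability event.

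\textbf{Phase two (variance-reduction step).} Suppose $\norm{\Cov(x,y)}_F^2>\epsilon ab$. Condition on $y_j$ for a coordinate $j$ chosen with probability $\Var(y_j)/\sum_k\Var(y_k)$. By the linear-regression (law of total variance) bound, pinning $y_j$ raises $\E\norm{\E x}^2$ by at least $\sum_i\Cov(x_i,y_j)^2/\Var(y_j)$. Averaging over $j$, and using $\sum_k\Var(y_k)=1-b\le 1$, the expected increase of $a$ is at least $\norm{\Cov}_F^2>\epsilon ab\ge \epsilon a\cdot t/(2n)$. So each step increases $a$ by a relative factor $1+\Omega(\epsilon t/n)$ in expectation. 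Organize the run into dyadic scales $a\in[2^k t/n,2^{k+1}t/n)$. Within one scale, $\Omega(n/(\epsilon t))$ steps suffice to double $a$ in expectation. There are only $O(\log(n/t))\le O(\log n)$ scales before $a$ exceeds $1$, which is impossible. (A symmetric step pinning $x_i$ works for $b$; pinning only $y$ already suffices.) The total number of phase-two pins is therefore $O(n\log n/(\epsilon t))$. Balancing this against the $2t$ pins of phase one gives $t=\Theta(\sqrt{n\log n/\epsilon})$ and $|S|=O(\sqrt{n/\epsilon\cdot\log n})$.

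\textbf{Main obstacle.} The difficulty is turning ``expected increase of $a$ is at least $\epsilon ab$'' into a deterministic existence statement across many adaptive steps, because the relevant potential is effectively $\log a$, which is concave. I would handle it with a stopping-time argument. Run the adaptive process, stopping as soon as \eqref{eq:mean-rounding-condition} holds or the pin budget is exhausted. Then show that $\E[a_{\text{stop}}]$ would exceed $1$ if, with positive probability, the process ran the full budget without stopping. I would sum the per-step expected gains $\epsilon a b$ scale by scale, using $a\ge$ the scale floor at each step to avoid Jensen losses. If that bookkeeping is awkward, the fallback is the additive potential $a$ itself with a fresh time budget per dyadic scale, applying Markov within each scale. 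Some realization of the pinned values then witnesses the lemma, and it lies in the support by construction.
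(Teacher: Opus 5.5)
Your skeleton is the paper's. Pinned coordinates give persistent floors on $a=\norm{\E x}^2$ and $b=\norm{\E y}^2$. The regression/variance-drop bound shows that a pin gains at least $\norm{\Cov(x,y)}_F^2>\epsilon ab$, and the floor turns this into multiplicative growth over $O(\log n)$ doublings. Balancing then gives $t=\Theta(\sqrt{n\log n/\epsilon})$. Tracking $a$ alone and pinning only $y$-coordinates, instead of the paper's $\Phi=a+b$ with $x$- or $y$-pins, is an inessential and slightly cleaner variation.

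The problems are all in your randomized framing, and two of your steps fail as stated.

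\emph{Phase one.} The target ``$a,b\ge t/(2n)$ on a constant-probability event'' is false in general. Take $x$ uniform on $\{\pm e_1,\dots,\pm e_n\}$. With probability $1-t/n$, all sampled values on a random $S_x$ are $0$, and then the conditional mean is exactly $0$. Markov on $1-\sum$ only yields probability about $t/(2n)$. Carrying only $\E\sum_{S_x}x_i^2$ into phase two does not help either. The phase-two rate $\epsilon b$ needs a floor on $b$ on every branch, not on average.

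\emph{Phase two.} Your stopping-time argument is also wrong as phrased. A positive-probability event of exhausting the budget cannot force $\E[a_{\mathrm{stop}}]>1$: stopped branches stop growing, and the bad event may have tiny probability.

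The missing observation is that none of this machinery is needed. The lemma asks only for \emph{one} pinning path, not a measure decomposition. At every step the average gain over the coordinate and its value is at least the bound, so some single coordinate and some value in its support achieve it. This is exactly what the paper does.

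\begin{itemize}
\item In phase one, apply it either greedily as the paper does, or by fixing one realization of your random $(S_x,\hat x_{S_x})$ with $\sum_{i\in S_x}\hat x_i^2\ge t/n$ and then doing the same for $y$ under the conditional law. The paper's greedy step picks $i$ with $\E x_i^2\ge 1/(2n)$ and a value with $\hat x_i^2\ge 1/(2n)$. Either way you get deterministic floors on $a$ and $b$.
\item In phase two, it gives the deterministic recursion $a'\ge a(1+\epsilon b)\ge a(1+\epsilon t/n)$ along a single path. There is then no concavity or Jensen issue, and the count $O((n/(\epsilon t))\log n)$ is immediate.
\end{itemize}

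If you prefer to keep a random process in phase two, first fix the phase-one floors deterministically as above. The correct contrapositive is then: if the condition failed on \emph{every} branch at every step, linearity of expectation gives $\E a_{\ell+1}\ge(1+\epsilon t/n)\E a_\ell$, contradicting $a\le 1$.

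One small omission: when $\epsilon<\log n/n$ you have $t>n$. In that case pin all $2n$ coordinates, which makes the covariance zero and satisfies the bound trivially.
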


We now outline the proof of~\cref{lem:pinning-tech-overview}.
We describe a simple algorithm which chooses coordinates $i \in [2n]$ and values $\hat{x}_i$ or $\hat{y}_i$ iteratively, at each step adding one coordinate to the conditioned set $S$.
This proceeds in two phases: after the first phase, both $\|\E x\|^2$ and $\|\E y\|^2$ will be at least $\sqrt{\log n / (\epsilon n)}$.
In the second phase will use these ``floors'' on the distances of the means from the origin to establish~\cref{eq:mean-rounding-condition}.

\subparagraph{First phase}
For the first phase, observe that via greedily choosing coordinates $i$ and values $\hat{x}_i$, we can ensure that the contribution to $\norm{\E x}^2$ from pinned coordinates is at least $r/n$ by pinning $r$ coordinates; the same holds for $y$.
Illustrating with the case $r=1$, since $\sum_{i \leq n} \E x_i^2 = 1$, there is some $i$ such that $\E x_i^2 \geq 1/n$, and hence some value $\hat{x}_i$ in the support of $x_i$ such that $\hat{x}_i^2 \geq 1/n$; we can pin $x_i = \hat{x}_i$ for this choice of $\hat{x}_i$.
The argument for general $r > 0$ can be proved by induction.
Importantly for what follows, the contribution of pinned coordinates to $\| \pE x\|^2$ cannot decrease under subsequent pinning operations, so we can apply this greedy pinning operation to $x$ and then $y$ in sequence.

\subparagraph{Second phase}
We turn to the second phase.
We show that if $\pE$ has the property that pinned coordinates contribute at least $\sqrt{\log n /(\epsilon n)}$ to both $\| \pE x\|^2, \| \pE y\|^2$, then after pinning at most $O(\sqrt{ n \log n  / \epsilon })$ additional coordinates, \cref{eq:mean-rounding-condition} holds.
We show this by analyzing the potential function
\[ \Phi(\pE) = \norm{\pE x}^2 + \norm{\pE y}^2 \,.\]
We will show that whenever~\cref{eq:mean-rounding-condition} does not hold that we can ``make progress'' on the goal via increasing $\Phi(\pE)$ by conditioning on a coordinate. Note that $0 \leq \Phi(\pE) \leq 2$ via our norm constraints on $x,y$ and thus $\Phi(\pE)$ cannot increase indefinitely.
At this level the argument is inspired by the analysis of \emph{global correlation rounding} for CSPs from \cite{BarakRS11}, but the ``success condition'' we are targeting is significantly stronger than what is needed in the CSP setting.

The main idea is that if the covariance $\| \Cov(x,y)\|_F$ is large, then there is some coordinate $x_i$ such that conditioning on $x_i$ significantly reduces the variance of $y$, which in turn increases $\| \pE y \|^2$ and hence increases $\Phi(\pE)$ (or, respectively, a coordinate $y_i$ which we can condition on to reduce the variance of $x$).
The quantitative relationship between covariance of two random variables $a,b$ and the reduction in variance in $a$ given by conditioning on $b$ is captured by the following elementary inequality \cite{BarakRS11}:
\begin{equation}
     \label{eq:variance-reduction-tech-overview}
    \Var(a) - \E_b \Var(a \vert b) \geq \Omega\left(\frac{\Cov^2(a,b)}{\Var(b)}\right) \,.
\end{equation}
Since $x$ is supported on unit vectors, $\norm{\E x}_2^2 = 1 - \tr(\Cov(x))$.
Thus, if we condition on a fixed $x_j$ (with value drawn from its marginal distribution) and let the conditioned pseudodistribution be $\pE'$, then we have that
\begin{align*}
    \Phi(\pE') - \Phi(\pE) &= \sum_{i \in [n]} \left( \Var(x_i) - \E_{x_j} \Var(x_i \vert x_j)\right)+ \sum_{i \in [n]} \left(\Var(y_i) - \E_{x_j} \Var(y_i \vert x_j) \right) \\
    &\geq \frac{\sum_{i \in [n]} \Cov^2(x_i, x_j) + \sum_{i \in [n]} \Cov^2(y_i, x_j)}{\Var(x_j)}\,.
\end{align*}
The same holds for $y_j$ by symmetry, and thus if we pick a random $x_j$ or $y_j$ with probabilities proportional to $\Var(x_y),\Var(y_j)$, via an averaging argument we have
\begin{equation}
    \E \left[\Phi(\pE') - \Phi(\pE)\right] \geq \frac{\norm{\Cov(x,x)}_F^2 + 2\norm{\Cov(x,y)}_F^2 + \norm{\Cov(y,y)}_F^2}{\Tr(\Cov(x)) + \Tr(\Cov(y))} \,. \label{eq:tech-overview-cov-progress}
\end{equation}
Thus there exists some coordinate (and choice of pinning for that coordinate) such that that gain in the potential is at least $\norm{\Cov(x,y)}_F^2$, using non-negativity of norms and the fact that the denominator is at most $2$.
Now, if~\cref{eq:mean-rounding-condition} fails for $\pE$, we have that for this choice of pinning,
\[ \Phi(\pE') - \Phi(\pE) \geq \eps \cdot \norm{\pE x}_2^2 \cdot \norm{\pE y}_2^2 \,.\]
We now aim to relate this lower bound to the current value of the potential $\Phi(\pE)$.
Using the lower bound we established in the first phase on $\| \pE x\|^2, \| \pE y\|^2$, we have that 
\[ \epsilon \cdot \norm{\pE x}_2^2 \norm{\pE y}_2^2 \geq \frac 12 \cdot \sqrt{\frac {\epsilon \log n}{n}} \cdot \left( \norm{\pE x}_2^2 + \norm{\pE y}_2^2\right) = \frac 1 2 \cdot \sqrt{\frac {\epsilon \log n}{n}} \cdot \Phi(\pE)\,.\]
Thus, whenever~\cref{eq:mean-rounding-condition} fails, we can increase the potential function by a \emph{multiplicative} factor of $1 + \Omega(\sqrt{(\eps \log n)/n}) \approx \exp(\Omega(\sqrt{(\epsilon \log n)/n}))$.
At the start of the second phase, $\Phi(\pE) \geq \sqrt{1/(\epsilon n)}$, and it cannot ever become larger than $2$.
So after at most $\sqrt{n \log n/\epsilon}$ pinning operation, this second phase must encounter $\pE$ which satisfies~\cref{eq:mean-rounding-condition}.

\subsection{$1$ versus $q/n$}
Next we describe how to adapt the foregoing analysis to round $\pE$ to $\hat{x},\hat{y}$ satisfying $\| \Pi(\hat{x} \tensor \hat{y})\|^2 \geq (q/n) \cdot \|\hat{x} \tensor \hat{y}\|^2$ using $O(\sqrt{q \log n})$ pinnings.

\paragraph{Rounding via singular vectors}
In this regime, rather than using the means $\pE x, \pE y$, after pinning we output the top left and right singular vectors of $R=\pE xy^\top$.
The constraints imply $\pE(x\tensor y)$ lies in the range of $\Pi$.
This gives the following analogue of \Cref{lem:mean-rounding-tech-overview}.

\begin{lemma}[Singular-Vector Rounding Success Condition (Informal)]
\label{lem:singular-rounding-tech-overview}
    Let $\pE$ be a pseudodistribution satisfying $\mathcal A$, and suppose $R=\pE xy^\top$ is nonzero. If $u,v$ are top left and right singular vectors of $R$, then
    \[
        \norm{\Pi(u\tensor v)}_2^2
        \geq
        \frac{\sigma_1(R)^2}{\norm R_F^2}.
    \]
\end{lemma}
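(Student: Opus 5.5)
We identify $R$ with the vector $r=\pE(x\otimes y)\in\R^{n^2}$, via the identification of $n\times n$ matrices with $\R^{n^2}$ under which $uv^\top$ corresponds to $u\otimes v$. Frobenius inner products then become Euclidean inner products, and $\norm{R}_F=\norm{r}_2$.

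The first step is to show that $\Pi r = r$. The constraint $\Pi(x\otimes y)=x\otimes y$ is a system of degree-$2$ polynomial equalities in $(x,y)$, so a pseudodistribution satisfying $\mathcal A$ respects them under $\pE$. In the discretized version the constraint holds up to negligible error, which we ignore here as in the overview. By linearity of $\pE$,
\[
\Pi\,\pE(x\otimes y)=\pE\,\Pi(x\otimes y)=\pE(x\otimes y).
\]
So $r$ lies in the range of $\Pi$, and $r/\norm{r}_2$ is a unit vector in that subspace.

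The second step is a variational lower bound. For a projector $\Pi$ and any unit vector $w$ in its range, $\norm{\Pi z}_2\ge \iprod{w,\Pi z}=\iprod{\Pi w,z}=\iprod{w,z}$. Apply this with $z=u\otimes v$ and $w=r/\norm{r}_2$. Taking $u,v$ to be unit top singular vectors of $R$, we have
\[
\iprod{r,u\otimes v}=u^\top R v=\sigma_1(R).
\]
Therefore
\[
\norm{\Pi(u\otimes v)}_2\ge \frac{\sigma_1(R)}{\norm{R}_F}.
\]
Squaring gives the claim. Since $u\otimes v$ is a unit vector, this is also the normalized statement $\norm{\Pi(u\otimes v)}^2\ge (\sigma_1^2/\norm{R}_F^2)\norm{u\otimes v}^2$.

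There is no real obstacle here. The only point needing care is justifying that $\pE$ commutes with the linear map $\Pi$ and that the degree-$2$ constraint is enforced. This holds at any degree $\ge 2$ by linearity of $\pE$ together with the local-distribution consistency described above. In the discretized setting the constraint holds only approximately, which introduces an additive $1/\poly(n)$ error that we would track separately.
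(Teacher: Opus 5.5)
Your proof is correct and is essentially the paper's argument: you use that $\pE(x\otimes y)$ lies in the range of $\Pi$, move $\Pi$ onto $u\otimes v$ in the identity $\sigma_1(R)=\iprod{u\otimes v,\operatorname{vec}(R)}$, and apply Cauchy--Schwarz before squaring. Your step $\norm{\Pi z}_2\ge\iprod{w,\Pi z}$ for a unit $w$ in the range of $\Pi$ is just that same Cauchy--Schwarz step written differently.
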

\begin{proof}
Since $\pE x \tensor y$ lies in the range of $\Pi$,
\[
    \sigma_1(R)
    =\iprod{u\tensor v,\operatorname{vec}(R)}
    =\iprod{\Pi(u\tensor v),\operatorname{vec}(R)}
    \leq
    \norm{\Pi(u\tensor v)}_2\norm R_F.
\]
Squaring proves the lemma. 
\end{proof}

\paragraph{A low-soundness pinning lemma}
The analogue of~\cref{lem:pinning-tech-overview} is the following.

\begin{lemma}[Warmup low-soundness pinning]
\label{lem:low-soundness-pinning-tech-overview}
    Let $1\leq q\leq n/2$, and let $(x,y)$ be jointly-distributed random vectors supported on $\mathbb S^{n-1}\times\mathbb S^{n-1}$ with finite support. There is a set $S\subseteq[2n]$ of at most $O(\sqrt{q\log n})$ coordinates and values $(\hat{x},\hat{y})_S$ in the support of $(x,y)_S$ such that, under the resulting conditional distribution,
    \[
         {\norm{\E[xy^\top\mid (x,y)_S=(\hat{x},\hat{y})_S]}_F} \leq \sqrt{\frac n q} \cdot {\sigma_1\bigl(\E[xy^\top\mid (x,y)_S=(\hat{x},\hat{y})_S]\bigr)} 
    \]
    The lemma also applies to pseudodistributions of degree at least $O(\sqrt{q\log n})$.
\end{lemma}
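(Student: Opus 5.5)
The plan is to reuse the two-phase structure of \Cref{lem:pinning-tech-overview}, replacing the mean-based success condition with the spectral one, $\norm{R}_F^2\leq (n/q)\,\sigma_1(R)^2$ for $R=\E[xy^\top\mid\cdot]$. Throughout, $\norm{R}_F^2=\norm{\E\, x\tensor y}^2\leq \E\norm{x\tensor y}^2=1$ by Jensen and the unit-norm support. So it suffices to reach a conditioning with
\[
\sigma_1(R)^2\geq \tfrac{q}{n}\,\norm{R}_F^2 .
\]
Since $\sigma_1(R)\geq \iprod{a,Rb}$ for unit $a,b$, a convenient sufficient witness is a pair of fixed unit directions with large bilinear form against $R$.

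\textbf{Phase 1 (greedy pinning).} Pin $r=\Theta(\sqrt{q\log n})$ coordinates of $x$ greedily, as in the first phase of the overview, so that pinned coordinates contribute at least $r/n$ to $\norm{\E x}^2$. Then pin $r$ coordinates of $y$ in the same way. The pinned contributions are deterministic after pinning, so they never decrease later.

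\textbf{Phase 2 (potential argument).} Run the global-correlation potential argument, now with a potential adapted to $R$. I would take
\[
\Psi=\log\norm{R}_F^2+\log\Phi,\qquad \Phi=\norm{\E x}^2+\norm{\E y}^2,
\]
and show that whenever the success condition fails, some single pinning increases $\Psi$ additively by $\Omega(\sqrt{\log n/q})$. Since $\Psi$ starts at least $-O(\log n)$ after Phase 1 and is at most $\log 2$, this gives $O(\sqrt{q\log n})$ steps. The progress comes from the variance-reduction inequality \eqref{eq:variance-reduction-tech-overview} applied both to the coordinates of $x,y$ and to the entries $x_iy_j$. Averaging over $x_k$ or $y_k$ chosen proportionally to variance gives an expected gain in $\E\norm{R}_F^2$ of at least $\sum_{i,j,k}\Cov(x_iy_j,x_k)^2$ plus the symmetric $y_k$ term, normalized by total variance, in the spirit of \eqref{eq:tech-overview-cov-progress}.

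When the condition fails, $R$ is spread out: its top singular value carries less than a $q/n$ fraction of $\norm R_F^2$. In particular $\iprod{\E x,R\,\E y}$ is small relative to $\norm{\E x}\norm{\E y}\norm R_F$. Writing $R=(\E x)(\E y)^\top+\Cov(x,y)$, this forces $\norm{\Cov(x,y)}_F$ to be comparable to $\norm{\E x}\norm{\E y}$, and in fact to dominate a $(1-q/n)$ fraction of $\norm{R}_F^2$.

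\textbf{Main obstacle.} Combining the Phase 1 floors with this covariance lower bound into a \emph{multiplicative} gain of order $\sqrt{q/n}\cdot(\text{current potential})$ is the crux, and I expect it to be the hard step. A naive use of the means, where success needs $\norm{\E x}^2\norm{\E y}^2\gtrsim q/n$ and the gain is $\Omega(\norm{\E x}^2\norm{\E y}^2)$, only gives about $n/r$ steps. This is too many, so the argument must instead compare $\norm{\Cov}_F^2$ against $\norm{R}_F^2$ and exploit the failure of the spectral condition rather than just a mean floor.

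What I would try first is a floor on $\norm{R}_F^2$ in place of the mean floors. After Phase 1, the pinned block $x_Sy_T^\top$ of $R$ is a fixed matrix with Frobenius norm at least $r/n$, which gives $\norm{R}_F^2\geq r^2/n^2$. Failure then means $\norm{R}_F^2-\sigma_1(R)^2\geq(1-q/n)\norm{R}_F^2$, and I would lower bound the one-step gain by $\Omega(\sqrt{\log n/q})\cdot\norm{R}_F^2$. This should follow by showing that some coordinate correlates with the non-top-singular part of $R$ at that scale. The $\sqrt{\log n}$ loss arises exactly as in the warmup, from the ratio between the initial floor and the final scale. The pseudodistribution version follows since every step only uses local distributions on $O(\sqrt{q\log n})$ variables.
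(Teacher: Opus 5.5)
Your Phase 1 and your Eckart--Young observation, $\|\Cov(x,y)\|_F^2\ge\|R\|_F^2-\sigma_1(R)^2\ge(1-q/n)\|R\|_F^2$ when the condition fails, match the paper. However, the step you flag as the crux is left unproved, and that is where the gap is. The missing idea is to use the pinned coordinates to lower bound $\sigma_1(R)$ in terms of the \emph{current} means, rather than by the constant $r^2/n^2$. For $i\in I_x$ (the pinned $x$-coordinates), $x_i=\hat x_i$ is deterministic. So the rows of $R$ indexed by $I_x$ are exactly $\hat x_{I_x}(\E y)^\top$, which gives $\sigma_1(R)^2\ge\|\hat x_{I_x}\|^2\|\E y\|^2\ge\frac rn\|\E y\|^2$. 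Symmetrically $\sigma_1(R)^2\ge\frac rn\|\E x\|^2$, so $\sigma_1(R)^2\ge\frac r{2n}\Phi$. Failure of the spectral condition then gives $\|R\|_F^2\ge\frac nq\sigma_1(R)^2\ge\frac r{2q}\Phi$, and your Eckart--Young bound gives $\|\Cov(x,y)\|_F^2\ge\frac r{4q}\Phi$.

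From here the plain potential $\Phi=\|\E x\|^2+\|\E y\|^2$ suffices. By \eqref{eq:tech-overview-cov-progress}, some single pinning increases $\Phi$ by $\Omega(\|\Cov(x,y)\|_F^2)=\Omega(\sqrt{\log n/q})\,\Phi$. Since $\Phi\ge r/n$ after Phase 1 and $\Phi\le 2$, this can happen only $O(\sqrt{q/\log n}\cdot\log n)=O(\sqrt{q\log n})$ times. So your claim that the mean floors ``only give about $n/r$ steps'' is mistaken. They are weak only if you compare against $\|\E x\|^2\|\E y\|^2$ instead of against $\Phi$ through the pinned rows and columns of $R$.

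The replacement you propose instead, $\Psi=\log\|R\|_F^2+\log\Phi$ with a one-step additive gain of $\Omega(\sqrt{\log n/q})$, does not go through as stated. You need a \emph{single} pinning that increases the sum. The variance-weighted averaging argument only controls expected gains of $\|R\|_F^2$ and of $\Phi$ separately. Because $\log$ is concave, expected gains do not transfer to $\E\Psi$. The pinning value that increases $\Phi$ may also decrease $\|R\|_F^2$. Finally, the gain in $\|R\|_F^2$ is governed by the third-order quantities $\sum_{i,j,k}\Cov(x_iy_j,x_k)^2$. You give no argument that these are of order $\sqrt{\log n/q}\,\|R\|_F^2$ when the spectral condition fails. ``Some coordinate correlates with the non-top-singular part of $R$ at that scale'' is precisely the statement that would need proof. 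As written, the proposal does not establish the lemma. The fix is to drop $\Psi$ and use the row and column structure of $R$ described above.
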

\begin{proof}[Proof sketch]
As in the $1$ versus $1-\epsilon$ case, the argument has two phases.
The first is the same as before: pin $O(\sqrt{q \log n})$ coordinates of each of $x$ and $y$ so that pinned coordinates contribute at least $\sqrt{q \log n} / n$ to each of $\| \pE x \|^2$ and $\| \pE y\|^2$.

Suppose that for some conditioned $\pE$,
\cref{lem:singular-rounding-tech-overview} fails, i.e., $\sigma_1(\pE xy^\top)^2 \leq \frac qn\norm{\pE xy^\top}_F^2$.
We will show that some further pinning increases the same potential function $\Phi(\pE) = \norm{\pE x}_2^2+\norm{\pE y}_2^2$ as before.

The rows of $\pE xy^\top$ corresponding to the pinned coordinates of $x$
equal the vector of pinned $x$ values times $(\pE y)^\top$.
Since the pinned
values contribute at least $\sqrt{q\log n}/n$ to $\norm{\pE x}_2^2$,
\[
    \sigma_1(\pE xy^\top)^2 \geq \frac{\sqrt{q\log n}}{n}\norm{\pE y}_2^2 \, ,
\]
by looking just at the submatrix of rows corresponding to pinned $x$ coordinates.
Symmetrically, $\sigma_1(\pE xy^\top)^2 \geq \frac{\sqrt{q\log n}}{n}\norm{\pE x}_2^2$.
Taking the average,
\[
    \sigma_1(\pE xy^\top)^2 \geq \frac{\sqrt{q\log n}}{2n}\left(\norm{\pE x}_2^2+\norm{\pE y}_2^2 \right)
    =
    \frac{\sqrt{q\log n}}{2n}\Phi(\pE).
\]

Failure of singular-vector rounding now gives
\[
    \norm{\pE xy^\top}_F^2
    \geq
    \frac nq\,\sigma_1(\pE xy^\top)^2\\
    \geq
    \frac12\sqrt{\frac{\log n}{q}}\,
    \Phi(\pE) \, .
\]
It remains to turn this lower bound into a lower bound on $\| \Cov(x,y)\|_F^2$, after which we can apply the same strategy as in the $1$ versus $1-\epsilon$ case (see \eqref{eq:variance-reduction-tech-overview}).
Since $(\pE x)(\pE y)^\top$ has rank one, subtracting it from $\pE xy^\top$ reduces the Frobenius norm by no more than subtracting the top singular vectors would:
\[
    \norm{\Cov(x,y)}_F^2
    =
    \norm{
        \pE xy^\top-(\pE x)(\pE y)^\top
    }_F^2
    \geq
    \norm{\pE xy^\top}_F^2
    -
    \sigma_1(\pE xy^\top)^2 \geq \left(1-\frac qn\right)
    \norm{\pE xy^\top}_F^2 \, ,
\]
where the last step again uses that singular-vector rounding fails so $\sigma_1(\pE xy^\top)^2 \leq (q/n) \| \pE xy^\top \|_F^2$.
Putting this together with our lower bound on $\| \pE xy^\top \|_F^2$,
\[
    \norm{\Cov(x,y)}_F^2
    \geq
    \frac14\sqrt{\frac{\log n}{q}}\,
    \Phi(\pE) \, .
\]
The same variance-decrease calculation as in the $1$ versus
$1-\epsilon$ case now shows that some coordinate and some value for that
coordinate increase $\Phi$ by at least $\norm{\Cov(x,y)}_F^2$. Therefore,
whenever singular-vector rounding fails, some pinning satisfies
\[
    \Phi(\pE')
    \geq
    \left(
        1+\Omega\left(\sqrt{\frac{\log n}{q}}\right)
    \right)
    \Phi(\pE).
\]
This multiplicative growth can therefore occur for at most
\[
    O\left(
        \sqrt{\frac q{\log n}}
        \log\frac{n}{\sqrt{q\log n}}
    \right)
    \leq
    O(\sqrt{q\log n})
\]
additional pinnings in the second phase.
Together with the first phase, this proves \cref{lem:low-soundness-pinning-tech-overview}.
\end{proof}

\subsection{Removing the $\sqrt{ \log n}$ factors}
We briefly outline how the $\sqrt{\log n}$ factors are removed.

\paragraph{1 versus $1-\epsilon$ case}
It turns out that \Cref{lem:pinning-tech-overview} can be directly improved to remove the $\sqrt{\log n}$ factor, which leads immediately to an algorithm with running time $n^{O(\sqrt{n / \epsilon})}$.
To obtain this improvement, we use a new potential function $\Phi'(\pE) = \| \pE x\|^2 + \| \pE y \|^2 + ( \| \pE x\| + \| \pE y\|)^2$.
The basic intuition behind this potential function is to improve on the roughly $(1 + \min(\| \pE x\|^2, \| \pE y\|^2))$ rate of multiplicative growth in the potential function $\Phi$ we used previously.
Note that if $\| \pE x\|^2 \approx \| \pE y\|^2$, then $\Phi$ actually increases much faster than the analysis we gave suggests.
Otherwise, roughly speaking, if $\| \pE x\|^2 \ll \| \pE y\|^2$ (or vice versa), the term $(\| \pE x\| + \| \pE y\|)^2$ rewards upward movement in $\min(\| \pE x\|^2, \| \pE y\|^2)$.
The details are in \Cref{sec:eps-pinning-lemma}.

\paragraph{$1$ versus $q/n$ case}
Removing the $\sqrt{\log n}$ factor in the $1$ versus $q/n$ case again uses the potential function $\Phi'(\pE)$ from the $1$ versus $1-\epsilon$ case, with two additional twists.
First, rather than always output the top singular vectors of $\pE xy^\top$, two additional possible outputs are considered -- one is to output the pair $(\pE x) / \| \pE x\|, v$,  where $v$ is the maximum eigenvector of the quadratic form $v \mapsto \iprod{ (x \tensor v), \Pi (x \tensor v)}$, and the other the same with $x$ and $y$ swapped.
The second twist is that in addition to the coordinate-pinning operation, the algorithm uses a reweighing operation on the pseudodistribution, replacing $\pE p(x,y)$ with $\pE f(x,y)^2 p(x,y) / \pE f(x,y)^2$ for a well-chosen degree-$2$ function $f$.

\subsection{Fine-grained pinning lemma}
We conclude this technical overview with a brief discussion of pinning lemmas, and state one new pinning lemma which goes slightly beyond \Cref{lem:pinning-tech-overview,lem:low-soundness-pinning-tech-overview}.
This section is not essential to the proofs of our main results (and can be safely skipped on first reading), but since pinning lemmas play a crucial role in rounding algorithms for SoS, as well as in statistical physics and high-dimensional probability, the new pinning lemma may be of independent interest.

\paragraph{Background on pinning lemmas}
Pinning lemmas are measure decomposition theorems which show that high-dimensional probability distributions can be written as a mixture of ``few'' distributions, most or some of which are ``simple,'' with various definitions of ``few'' and ``simple,'' by conditioning on, or ``pinning'', a small number of coordinates.

The original pinning lemma has been widely used across algorithms \cite{RaghavendraT12, BarakKS17, AndersonBH26, GuruswamiS12, BarakRS11}, high-dimensional probability, and statistical physics \cite{Montanari08, CojaOghlanKPZ17, JainKR19}.
It says that every distribution becomes approximately pairwise independent after conditioning on a small number of coordinates.
For simplicity in this discussion, we stick to mean-zero distributions on the unit sphere.

\begin{theorem}[Pinning Lemma (variance version), \cite{BarakRS11}]
\label{thm:pinning-variance}
For any distribution $\mu$ on the $n$-dimensional unit sphere and every $\epsilon > 0$ there is a subset of at most $t \leq O(1/\epsilon)$ coordinates $S \subseteq [n]$ such that $\E_{x_S \sim \mu_S} \| \Cov(\mu \, | \, x_S) \|_F^2 \leq \epsilon$.
\end{theorem}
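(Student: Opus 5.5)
The plan is a potential-function argument in the style of \cite{BarakRS11}, run \emph{non-adaptively}: the set $S$ is built by adding one coordinate index at a time, and that index never depends on the realized values of already-pinned coordinates. For $S\subseteq[n]$ define
\[
\Psi(S)=\E_{x_S\sim\mu_S}\sum_{i\in[n]}\Var(x_i\mid x_S).
\]
Since $\mu$ lives on the unit sphere, $0\le\Psi(S)\le\sum_i\E x_i^2=1$. By the law of total variance $\Psi$ is monotone non-increasing in $S$. Because $S$ will be a fixed subset of $[n]$, its size bound is deterministic, as the statement requires.

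\textbf{Step 1 (one-step progress from covariance).} Fix $S$ and a candidate index $j$. Apply the variance-reduction inequality \eqref{eq:variance-reduction-tech-overview} inside each conditional law $\mu\mid x_S$, using the best linear predictor of $x_i$ from $x_j$. This gives
\[
\Psi(S)-\Psi(S\cup\{j\})\;\ge\;\E_{x_S}\frac{a_j(x_S)}{m_j(x_S)},
\qquad a_j=\sum_i\Cov(x_i,x_j\mid x_S)^2,\quad m_j=\E[x_j^2\mid x_S].
\]
Denominators $m_j$ are used in place of conditional variances, so they satisfy both the pointwise bound $\sum_j m_j\le1$ and the averaged identity $\E_{x_S}m_j=\E x_j^2$.

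\textbf{Step 2 (choosing one $j$ for all $x_S$).} This is the crux and the step I expect to be the main obstacle. The index $j$ must be fixed before seeing $x_S$, whereas the natural averaging that produces $\sum_j a_j=\|\Cov(\mu\mid x_S)\|_F^2$ uses weights $m_j(x_S)$ that depend on $x_S$. I would first try a randomized fixed choice: draw $j$ with probability $q_j=\E x_j^2$, so that
\[
\E_j\bigl[\Psi(S)-\Psi(S\cup\{j\})\bigr]\ \ge\ \sum_j q_j\,\E_{x_S}\frac{a_j}{m_j}.
\]
The task is then to lower bound the right-hand side by $c\,\E_{x_S}\|\Cov(\mu\mid x_S)\|_F^2$.
\begin{itemize}
\item Writing $q_j=\E_{x_S}m_j$ and applying Cauchy--Schwarz jointly over $(j,x_S)$ should give such a bound.
\item Pointwise one has $\sum_j a_j/m_j\ge\sum_j a_j$, since $\sum_j m_j\le1$.
\item If this route only gives a squared bound $(\E\|\Cov\|_F^2)^2$, it would yield $O(1/\epsilon^2)$ rather than $O(1/\epsilon)$. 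In that case I would fall back to a cruder first attempt: use the bound $a_j\le m_j\sum_iV_i$ and pick $j$ greedily to maximize $\E_{x_S}[a_j/m_j]$, accepting a worse rate as a first version.
\end{itemize}
Whatever inequality Step 2 yields, the chosen $j$ can then be fixed deterministically as one achieving at least the average.

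\textbf{Step 3 (termination).} Build the chain $S_0=\emptyset\subset S_1\subset\cdots\subset S_T$ with $T=\lceil 1/(c\epsilon)\rceil$, adding at each stage the index chosen in Step 2. Suppose every $S_t$ with $t<T$ had $\E_{x_{S_t}}\|\Cov(\mu\mid x_{S_t})\|_F^2>\epsilon$. Then by Steps 1--2 each stage decreases $\Psi$ by more than $c\epsilon$, so the total decrease exceeds $Tc\epsilon\ge1\ge\Psi(\emptyset)$. This contradicts $\Psi\ge0$. Hence some $S_t$ with $|S_t|<T=O(1/\epsilon)$ satisfies the conclusion of \cref{thm:pinning-variance}. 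The mean-zero assumption is not needed beyond normalization; the whole argument uses only $\sum_i\E x_i^2=1$.
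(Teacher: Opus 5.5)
\textbf{Gap in Step~2.} Step~2 is the gap, and it cannot be closed. With $S$ chosen non-adaptively, the conclusion is false on the sphere. So no averaged or greedy choice of a single $j$ can give progress $c\,\E_{x_S}\Norm{\Cov(\mu\mid x_S)}_F^2$.

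Here is a counterexample. Take $n=K+2$. Draw $b\in[K]$ uniformly and independent uniform signs $\tau,\sigma$. Set $(x_{K+1},x_{K+2})=\tau g_b$ with $g_b=\tfrac{1}{\sqrt2}(\cos\theta_b,\sin\theta_b)$ for distinct $\theta_b\in(0,\pi/2)$. Set $x_b=\sigma/\sqrt2$ and all other coordinates to $0$. This is a mean-zero distribution on $\mathbb S^{n-1}$.
\begin{itemize}
\item If $S$ meets $\{K+1,K+2\}$, then $x_S$ reveals $b$. On the event $b\notin S$, which has probability at least $1/2$ when $|S|\le K/2$, the conditional covariance is $\tfrac12 e_be_b^\top$, whose squared Frobenius norm is $1/4$.
\item If $S\subseteq[K]$, then $\tau$ is independent of $x_S$. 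So $\Cov(\mu\mid x_S)$ contains the $2\times2$ block $\E[g_bg_b^\top\mid x_S]$. This block has trace $1/2$, hence squared Frobenius norm at least $1/8$.
\end{itemize}
Thus every fixed $S$ with $|S|\le K/2$ has $\E_{x_S}\Norm{\Cov(\mu\mid x_S)}_F^2\ge 1/8$. Yet two adaptive pins (first $x_{K+1}$, then $x_b$) make the covariance vanish.

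Concretely, at $S=\{K+1\}$ you have $\E\Norm{\Cov}_F^2=1/4$, while every fixed $j$ lowers $\Psi$ by at most $1/(2K)$. Your first bullet needs $\sum_j\E[m_j]\,\E[a_j/m_j]\gtrsim\sum_j\E[m_j\cdot a_j/m_j]$. Here the left side is $1/(4K)$ and the right side is $1/4$, because $m_j$ and $a_j/m_j$ are both supported on the event $\{b=j\}$.

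Your fallback fails as well. What a non-adaptive choice gives in general is the uniform-weight bound $\frac1n\E\Norm{\Cov}_F^2$ per step, i.e.\ $O(n/\epsilon)$ coordinates. The example shows $\Omega(n)$ fixed coordinates are unavoidable for constant $\epsilon$.

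\textbf{The missing idea: adaptivity.} The statement has to be read adaptively. The paper cites it from \cite{BarakRS11}; its own instance of the argument is \cref{claim:fine-grained-step}, and the averaging behind \eqref{eq:tech-overview-cov-progress}. Build $S$ branch by branch. On the current branch $\nu=\mu\mid x_S$, choose the next coordinate with probability proportional to $\Var_\nu(x_j)$, or to your $m_j(x_S)$, and sample its value from $\nu$. Then $|S|=t$ on every branch, but the identity of $S$ depends on the pinned values. The expectation in the conclusion is over $(S,x_S)$ jointly.

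With this change Step~2 is one line. Since $\sum_k m_k(x_S)=\E[\norm{x}_2^2\mid x_S]=1$, choosing $j$ with probability $m_j(x_S)$ gives expected decrease at least $\sum_j m_j\cdot a_j/m_j=\Norm{\Cov(\mu\mid x_S)}_F^2$ on every branch. So the expected total conditional variance drops by at least $\E\Norm{\Cov(\mu\mid x_S)}_F^2$ per step. Your Steps~1 and~3 then give some $t\le\lceil 1/\epsilon\rceil$ with $\E_{(S,x_S)}\Norm{\Cov(\mu\mid x_S)}_F^2\le\epsilon$.
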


Here, $\mu_S$ denotes $\mu$ restricted to the coordinates in $S$, $\mu \, | \, x_S$ denotes the conditional distribution of $\mu$ after fixing the coordinates in $S$ to the values in $x_S$, and $\| \cdot \|_F$ is the Frobenius norm, so that $\| \Cov(\mu \, | \, x_S) \|_F^2 = \sum_{i,j \leq n} \Cov(x_i, x_j \, | \, x_S)^2$ gives the average squared covariance across pairs of coordinates after conditioning.
Notice that the maximum possible value of $\|\Cov\|_F^2$ for a distribution on the unit sphere is $1$.
If the coordinates of $\mu$ are supported in a discrete set of size $q$ (e.g. $q = 2$ for the hypercube), then at most $q^{O(1/\epsilon)}$ different distributions can appear in the decomposition $\mu = \E_{x_S \sim \mu_S} (\mu \, | \, x_S)$; these conditional distributions give us the interpretation of the pinning lemma as decomposing $\mu$ into a mixture of not-too-many distributions.

\paragraph{A fine-grained pinning lemma}
Like \Cref{thm:pinning-variance}, \Cref{lem:pinning-tech-overview,lem:low-soundness-pinning-tech-overview} also show that conditioning on the values of a few coordinates of a high-dimensional probability distribution results in small covariance.
They differ from \Cref{thm:pinning-variance} in two important ways:
\begin{enumerate}
    \item They measure ``simplicity'' of a distribution not just in terms of $\| \Cov \|_F^2$, but in terms of the ratio between $\| \Cov \|_F^2$ and $\| \E x \|^2, \| \E y\|^2$, meaning that a distribution does not count as ``simple'' unless its covariance is small \emph{compared to its mean}, which is crucial for the application to BSS.
    \item To obtain this stronger notion of simplicity, they make two important sacrifices compared to \Cref{thm:pinning-variance}:
        \begin{itemize}
            \item They require conditioning on many more coordinates, and
            \item They don't offer a true measure decomposition, because the \emph{values} for the coordinates that they condition on are chosen greedily rather than being sampled from the marginal distributions of those coordinates.
        \end{itemize}
\end{enumerate}

While having a measure decomposition as opposed to just some set of values to condition on isn't important for our applications to BSS, this property has been very important in other uses of pinning-style lemmas, especially in the theory of localization schemes \cite{ChenE25}.
This motivates the question of how much of the guarantees of \Cref{lem:pinning-tech-overview,lem:low-soundness-pinning-tech-overview} can be obtained if we insist on a measure decomposition?

It turns out that via more or less the same techniques as we use to prove \Cref{lem:pinning-tech-overview,lem:low-soundness-pinning-tech-overview}, one can prove the following fine-grained pinning lemma, which does provide a measure decomposition.
The expense, compared to \Cref{lem:pinning-tech-overview,lem:low-soundness-pinning-tech-overview}, is that it applies only to a single random vector $x$ rather than a pair $(x,y)$ -- this is the reason we cannot use it to prove our main results about BSS.

\begin{theorem}[Fine-grained pinning lemma, see \Cref{sec:fine-grained-pinning}]
    \label{thm:pinning-fine-grained}
    For every mean-zero distribution $\mu$ on the $n$-dimensional unit sphere and every $0<\epsilon<1$, there is a set of $(S,x_S)$ where $S \subseteq [n]$ and $x_S \in \R^{|S|}$ with $|S| \leq O\left(\frac{\sqrt{n}}{\epsilon}\right)$,
    such that $\E_{(S,x_S)} (\mu \, | \, x_S) = \mu$, and
    \[
    \E_{(S,x_S)} \Norm{ \Cov(\mu \, | \, x_S) }_F \leq O(\epsilon) \cdot \E_{(S,x_S)} \Norm{ \E_{x \sim \mu \, | \, x_S} x }^2 \, .
    \]
\end{theorem}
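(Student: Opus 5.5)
We would construct the decomposition with a random pinning process, a stopping rule, and the potential $\Phi(\nu)=\norm{\E_{\nu} x}^2$ for the current conditional distribution $\nu$. At each step, if the current $\nu$ satisfies $\norm{\Cov(\nu)}_F\le \epsilon\,\Phi(\nu)$, we stop. Otherwise we choose a coordinate $j$ with probability $\Var_\nu(x_j)/\Tr\Cov(\nu)$, sample $\hat x_j\sim\nu_j$, and pass to $\nu\mid x_j=\hat x_j$. We also stop after $T=C\sqrt n/\epsilon$ steps. Every step samples the pinned value from the current marginal, and the stopping rule is adapted to the process. The leaves $(S,x_S)$ of this tree, weighted by their probabilities, therefore give $\E_{(S,x_S)}(\mu\mid x_S)=\mu$ automatically, and $|S|\le T$.

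The progress bound comes from the variance-reduction inequality \eqref{eq:variance-reduction-tech-overview}, averaged as in \eqref{eq:tech-overview-cov-progress} but now for the single vector $x$. Because $x$ lies on the sphere, $\Phi=1-\Tr\Cov$, and so
\[
\E[\Phi(\nu')-\Phi(\nu)\mid\nu]\ \ge\ c\,\frac{\norm{\Cov(\nu)}_F^2}{\Tr\Cov(\nu)}\ \ge\ c\,\norm{\Cov(\nu)}_F^2 .
\]
Since $\Phi$ is a bounded submartingale with $\Phi\in[0,1]$ and $\Phi(\mu)=0$, summing over steps gives $\sum_{t<T}\E[\norm{\Cov(\nu_t)}_F^2\,\mathbf 1_{\text{not stopped}}]\le 1/c$.

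To reach the stated inequality we first need $\Phi$ to be comparable to $1/\sqrt n$. Pinning a coordinate never removes its contribution to $\norm{\E x}^2$, so after $k$ pinnings $\Phi\ge\sum_{j\in S}\hat x_j^2$. We will instead use the martingale identity $\E\,\Phi(\nu_t)=\E\sum_{s<t}\Var$-reduction. The key additional observation is that $\norm{\Cov}_F\ge \Tr\Cov/\sqrt n=(1-\Phi)/\sqrt n$. At a non-stopped step we also have $\norm{\Cov}_F>\epsilon\Phi$, which gives the expected increment $\E\Delta\Phi\ge c\,\norm{\Cov}_F^2\ge c\,\norm{\Cov}_F\cdot\max\bigl(\epsilon\Phi,\ (1-\Phi)/\sqrt n\bigr)$.

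We then split according to whether $\Phi<1/2$. When $\Phi<1/2$, $\Phi$ grows additively by at least $c/(4n)$ while $\Phi\lesssim 1/(\epsilon\sqrt n)$, and multiplicatively by a factor $1+c\epsilon^2\Phi$ beyond that. In the regime $\Phi\le 1/(\epsilon\sqrt n)$ the growth is at least $c\epsilon\Phi/\sqrt n$, so $\Phi$ grows geometrically at rate $\epsilon/\sqrt n$ once it is nonzero. The additive phase and the geometric phase each take $O(\sqrt n/\epsilon)$ steps in expectation.

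Finally we bound the right-hand side. By Cauchy--Schwarz and the potential sum, $\E[\text{time spent non-stopped}]\le O(\sqrt n/\epsilon)$, so stopping at $T$ leaves only a small probability mass unstopped. We write
\[
\E\norm{\Cov}_F=\E[\,\cdot\,;\text{stopped}]+\E[\,\cdot\,;\text{truncated}].
\]
The stopped leaves contribute at most $\epsilon\,\E\Phi$. The truncated leaves contribute at most $\Pr[\text{truncated}]$, which must be shown to be $O(\epsilon)\cdot\E\Phi$, where $\E\Phi$ at the leaves is at least a constant times the expected total progress.

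This last comparison is where we expect the main obstacle. A Markov bound on the number of non-stopped steps only gives $\Pr[\text{truncated}]\le O(1/C)$, and we additionally need $\E\Phi$ at the leaves to be $\Omega(1)$ relative to that. The plan is to argue that any run lasting $T$ steps accumulates drift $\gtrsim T\epsilon/\sqrt n\cdot$(current $\Phi$), forcing $\Phi$ to be large. To convert this expected drift into a statement about truncated runs, we would apply an optional-stopping argument to the process $\log(\Phi+1/\sqrt n)$, using its drift of order $\epsilon/\sqrt n$ per step.
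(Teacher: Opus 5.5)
\textbf{There is a genuine gap, and it is at the truncation step.} The two claims that step rests on are false in general: the expected number of non-stopped steps is not $O(\sqrt n/\epsilon)$, and $\Pr[\text{truncated}]$ is not $O(\epsilon)\E\Phi$. The underlying problem is that you treat expected drift of $\Phi$ along a branch as if it were pathwise growth. The potential sum only gives $\E\sum_t\|\Cov(\nu_t)\|_F^2\le 1/c$. An active step with $\Phi\le 1/2$ contributes only $\gtrsim 1/n$ to that sum, so you get $O(n)$ active steps, not $O(\sqrt n/\epsilon)$.

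Here is a concrete failure. Take $\epsilon=1/2$ and $\mu=(1-p)\,\mathrm{Unif}\{\pm 1/\sqrt n\}^n+p\,\mathrm{Unif}\{\pm e_k : k\in[n]\}$ with $p$ a constant.
\begin{itemize}
\item Branches whose first pinned value is $\pm 1/\sqrt n$ are pure hypercube. They stop after about $2\sqrt n$ pins with $\Phi\approx 2/\sqrt n$.
\item Branches whose first pinned value is $0$ are uniform over the remaining spikes. They have $\Phi=0$ and covariance equal to the normalized identity on the unpinned coordinates until a spike is hit. That happens at an essentially uniform time in $\{1,\dots,n\}$.
\end{itemize}
So a $p(1-O(C/\sqrt n))$ fraction of the mass is truncated at $T=2C\sqrt n$ with $\Phi=0$, while $\E\Phi=\Theta(1/\sqrt n)$ over the leaves. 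Thus $\Pr[\text{truncated}]\approx p$ exceeds $O(\epsilon)\E\Phi$ by a factor of order $\sqrt n$. The truncated runs also do not have large $\Phi$, so no optional-stopping argument for $\log(\Phi+1/\sqrt n)$ can produce the pathwise statement you want.

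The theorem still holds in this example, but only because $\|\Cov\|_F\approx 1/\sqrt n$ on the truncated leaves. You bound it by $1$, and there is no easy pathwise replacement. $\|\Cov\|_F$ is not a supermartingale under pinning: conditioning can split a spread-out covariance into rank-one pieces. So small values earlier on a branch do not imply a small value at time $T$.

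\textbf{The missing idea is to move the stopping rule from branches to levels.} This is legitimate because the conclusion is only an in-expectation statement. The paper pins every branch for the same number of steps, choosing the coordinate with probability proportional to $\Var_\nu(x_i)$ and the value from the marginal. It tracks the averaged potential $s_\ell=\E_{\nu\sim\pi_\ell}\|\E_\nu x\|^2$. It stops at the first level $\ell$ with $\E_{\nu}\|\Cov(\nu)\|_F\le\epsilon\,\E_\nu\|\E_\nu x\|^2$, and all nodes at that level form the decomposition.

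Because $s_\ell$ is linear in the mixture, your per-branch drift bound averages exactly. Write $\Sigma_\nu=\Cov(\nu)$. If the averaged condition fails, Cauchy--Schwarz gives $\E_\nu\frac{\|\Sigma_\nu\|_F^2}{\Tr\Sigma_\nu}\ge\frac{(\E_\nu\|\Sigma_\nu\|_F)^2}{\E_\nu\Tr\Sigma_\nu}$. Combined with $\|\Sigma_\nu\|_F^2\ge(\Tr\Sigma_\nu)^2/n$, this yields the deterministic recurrence $s_{\ell+1}-s_\ell\ge\max\{(1-s_\ell)/n,\ \epsilon^2s_\ell^2/(1-s_\ell)\}$. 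Run an additive phase up to $s\approx 1/(\epsilon\sqrt n)$, then dyadic buckets. Since $s_\ell\le 1$, this forces the averaged condition to hold within $O(\sqrt n/\epsilon)$ levels, with no concentration or optional stopping needed.

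Your drift computations are exactly the right ingredients. What fails is applying them to individual random paths instead of to the level averages.
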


See \Cref{sec:fine-grained-pinning} for the variant of \Cref{thm:pinning-fine-grained} in the $\eps > 1$ case.

The statement $\E_{(S,x_S)}(\mu \, | \, x_S) = \mu$ captures the ``measure decomposition'' aspect of \cref{thm:pinning-fine-grained}.
It captures that, in the proof of \Cref{thm:pinning-fine-grained}, $(S,x_S)$ is obtained by iteratively choosing one coordinate $i$ at a time, sampling a value $x_i$ from the marginal distribution on that coordinate, and conditioning on that value.

To understand \cref{thm:pinning-fine-grained}, it helps to unpack the different ways in which the condition in~\cref{thm:pinning-variance} can be satisfied.
At a high level, the Frobenius norm of the conditional covariance, $\| \Cov(\mu \, | \, x_S)\|_F$, could be small for two different reasons.
One is that most coordinates have become nearly fixed around their means and hence have small variance, and consequently the mean of the conditioned distribution is noticeably far from the origin.
The other is that a typical pair of coordinates has become nearly independent, but each individual coordinate still has mean close to zero and retains large variance.
\cref{thm:pinning-fine-grained} tells us that, at least in an in-expectation sense, the latter does not always happen, $\| \E_{x \sim \mu \, | \, x_S} x \|^2$ is lower-bounded, so the average per-coordinate variance cannot be too large.
In other words, in expectation the mean drifts noticeably far from the origin.

The cost of this refined guarantee, however, is that we must condition on many more coordinates than in \cref{thm:pinning-variance}, as we must condition on a set of size $\sqrt{n}/\eps$ rather than $1/\eps$.
This factor is tight, as can be shown by the simple example of the uniform distribution on $\left\{ \pm \frac{1}{\sqrt{n}} \right\}^n$.

\begin{figure}[H]
\centering
\begingroup
\newcommand{\CovariancePane}[7]{%
  \begin{scope}[shift={(#1,#2)}]
    \draw[plotbox] (-1.18,-1.02) rectangle (1.18,1.02);
    \draw[axis] (-1.08,0) -- (1.08,0);
    \draw[axis] (0,-.92) -- (0,.92);
    \draw[origin] (-.045,0) -- (.045,0);
    \draw[origin] (0,-.045) -- (0,.045);
    \begin{scope}[shift={(#6,#7)}, rotate=#5]
      \foreach \s/\op in {.35/.45,.62/.65,.89/.85}{%
        \pgfmathsetmacro{\rx}{\s*#3}
        \pgfmathsetmacro{\ry}{\s*#4}
        \draw[contour, opacity=\op] (0,0) ellipse [x radius=\rx cm, y radius=\ry cm];
      }
    \end{scope}
    \fill[mean] (#6,#7) circle (1.4pt);
  \end{scope}
}
\resizebox{\textwidth}{!}{%
\begin{tikzpicture}[
  plotbox/.style={draw=black!30, line width=.35pt},
  axis/.style={draw=black!18, line width=.25pt},
  origin/.style={draw=black!35, line width=.35pt},
  contour/.style={draw=citeblue!75!black, line width=.7pt},
  mean/.style={fill=diffcolor!90!black},
  row-label/.style={font=\small, align=right, anchor=east, text=black!85}
]
\node[font=\small, text=black!70] at (4.5,1.62) {more conditioned coordinates};
\draw[->, draw=black!45, line width=.45pt] (1.1,1.42) -- (7.9,1.42);
\foreach \x/\label in {0/$t=0$,3/$t=1$,6/$t=2$,9/$t=3$} {
  \node[font=\scriptsize, text=black!65] at (\x,1.18) {\label};
}
\node[row-label] at (-1.55,0) {mean stays near $0$};
\node[row-label] at (-1.55,-2.65) {mean drifts};
\CovariancePane{0}{0}{1.02}{.42}{25}{0}{0}
\CovariancePane{3}{0}{.96}{.52}{18}{0}{0}
\CovariancePane{6}{0}{.88}{.66}{8}{0}{0}
\CovariancePane{9}{0}{.70}{.70}{0}{0}{0}
\CovariancePane{0}{-2.65}{1.02}{.42}{25}{0}{0}
\CovariancePane{3}{-2.65}{.96}{.39}{25}{.18}{.08}
\CovariancePane{6}{-2.65}{.90}{.36}{25}{.42}{.20}
\CovariancePane{9}{-2.65}{.84}{.33}{25}{.38}{.28}
\end{tikzpicture}%
}
\endgroup
\caption{Two ways that $\|\Cov\|_F$ can decrease as we condition on coordinates in a high-dimensional distribution. In the top row, correlations disappear while the conditional mean remains near the origin --  the distribution becomes isotropic but retains significant variance. In the bottom row, the overall variance decreases while the mean moves away from the origin. Our fine-grained pinning lemma shows that the second behavior must happen for a noticeable fraction of conditional distributions.}
\label{fig:pinning-mechanisms}
\end{figure}

\section{Preliminaries}
\label{sec:prelims}

We now provide an overview of the sum-of-squares proof system, taken from the corresponding section in~\cite{10.1145/3717823.3718218}.
We closely follow the exposition as it appears in the lecture notes of Barak and Steurer~\cite{barak2016proofs}.   

\paragraph{Pseudo-Distributions.}
A discrete probability distribution over $\R^m$ is defined by its probability mass function, $D\from \R^m \to \R$, which must satisfy $\sum_{x \in \mathrm{supp}(D)} D(x) = 1$ and $D \geq 0$.
We extend this definition by relaxing the non-negativity constraint to merely requiring that $D$ passes certain low-degree non-negativity tests.
We call the resulting object a pseudo-distribution.

\begin{definition}[Pseudo-distribution]
A \emph{degree-$\ell$ pseudo-distribution} is a finitely-supported function $D:\R^m \rightarrow \R$ such that $\sum_{x} D(x) = 1$ and $\sum_{x} D(x) p(x)^2 \geq 0$ for every polynomial $p$ of degree at most $\ell/2$, where the summation is over all $x$ in the support of $D$.
\end{definition}
Next, we define the related notion of pseudo-expectation.
\begin{definition}[Pseudo-expectation]
The \emph{pseudo-expectation} of a function $f$ on $\R^m$ with respect to a pseudo-distribution $\mu$, denoted by $\pexpecf{\mu(x)}{f(x)}$,  is defined as
\begin{equation*}
    \pexpecf{\mu(x)}{f(x)} = \sum_{x} \mu(x) f(x).
\end{equation*}
\end{definition}
We use the notation $\pexpecf{\mu(x)}{(1,x_1, x_2,\ldots, x_m)^{\otimes \ell}}$ to denote the degree-$\ell$ moment tensor of the pseudo-distribution $\mu$.
In particular, each entry in the moment tensor corresponds to the pseudo-expectation of a monomial of degree at most $\ell$ in $x$. 

\begin{definition}[Constrained pseudo-distributions]
\label{def:constrained-pseudo-distributions}
Let $\calA = \Set{ p_1\geq 0 , p_2\geq0 , \dots, p_r\geq 0}$ be a system of $r$ polynomial inequality constraints of degree at most $d$ in $m$ variables.
Let $\mu$ be a degree-$\ell$ pseudo-distribution over $\mathbb{R}^m$.
We say that $\mu$ \emph{satisfies} $\calA$ at degree $\ell \ge1$ if for every subset $\calS \subset [r]$ and every sum-of-squares polynomial $q$ such that $\deg(q) + \sum_{i \in \calS } \max\Paren{ \deg(p_i), d} \leq \ell$, $\pexpecf{\mu}{ q \prod_{i \in \calS} p_i } \geq 0$.
Further, we say that $\mu$ \emph{approximately satisfies} the system of constraints $\calA$ if the above inequalities are satisfied up to additive error $\pexpecf{\mu}{ q \prod_{i \in \calS} p_i } \geq -2^{-n^{\ell} } \norm{q} \prod_{i \in \calS} \norm{p_i}$, where $\norm{\cdot}$ denotes the Euclidean norm of the coefficients of the polynomial, represented in the monomial basis.  
\end{definition}

Crucially, there's an efficient separation oracle for moment tensors of constrained pseudo-distributions. 
Below gives the unconstrained statement; the constraint statement follows analogously.

\begin{fact}[\cite{shor1987approach, nesterov2000squared, parrilo2000structured, grigoriev2001complexity}]
    \label{fact:sos-separation-efficient}
    For any $m,\ell \in \N$, the following convex set has a $m^{\bigO{\ell}}$-time weak separation oracle, in the sense of \cite{grotschel1981ellipsoid}\footnote{
        A separation oracle of a convex set $S \subset \R^M$ is an algorithm that can decide whether a vector $v \in \R^M$ is in the set, and if not, provide a hyperplane between $v$ and $S$.
        Roughly, a weak separation oracle is a separation oracle that allows for some $\eta$ slack in this decision.
    }:
    \begin{equation*}
        \Set{  \pexpecf{\mu(x)} { (1,x_1, x_2, \ldots, x_m)^{\otimes \ell } } \Big\vert \text{ $\mu$ is a degree-$\ell$ pseudo-distribution over $\R^m$}}
    \end{equation*}
\end{fact}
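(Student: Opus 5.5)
The plan is to show that the convex set in \Cref{fact:sos-separation-efficient} is cut out by polynomially many (in $m^{\ell}$) linear equalities together with one positive-semidefiniteness constraint on a matrix of size $m^{O(\ell)}$. Separation then reduces to checking the equalities and computing an eigendecomposition. We may take $\ell$ even; otherwise replace $\ell/2$ by $\lfloor \ell/2\rfloor$ below.

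\textbf{Step 1 (characterization).} Index coordinates of a candidate tensor $T\in\R^{(m+1)^\ell}$ by tuples in $\{0,\dots,m\}^\ell$, with index $0$ standing for the constant $1$. I will show that $T$ lies in the set if and only if three conditions hold.
\begin{enumerate}
\item \emph{Consistency:} $T_\alpha=T_\beta$ whenever the tuples $\alpha,\beta$ give the same monomial after deleting the zeros.
\item \emph{Normalization:} $T_{0\cdots0}=1$.
\item \emph{Positivity:} the moment matrix $\mathcal M(T)$ is PSD. Its rows and columns are indexed by monomials $x^a$ of degree at most $\ell/2$, and its entries are $\mathcal M(T)_{a,b}=T_{a+b}$.
\end{enumerate}

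Necessity is immediate. If $T$ is the moment tensor of a pseudo-distribution $\mu$, consistency and normalization hold by construction. For positivity, take any $p=\sum_a c_a x^a$ of degree at most $\ell/2$; then $c^\top\mathcal M(T)c=\sum_x\mu(x)p(x)^2\ge 0$.

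For sufficiency, suppose $T$ satisfies the three conditions. It defines a linear functional $L$ on polynomials of degree at most $\ell$ with $L(1)=1$ and $L(p^2)\ge0$. I must realize $L$ by a finitely supported function $D$. The evaluation functionals $p\mapsto p(z)$, $z\in\R^m$, span the dual of the finite-dimensional space of polynomials of degree at most $\ell$. Otherwise some nonzero polynomial would vanish everywhere, which is impossible. So we can choose finitely many points $z_1,\dots,z_N$ and real weights $w_j$ with $L=\sum_j w_j\,\mathrm{ev}_{z_j}$, and set $D(z_j)=w_j$. Then $\sum_x D(x)=L(1)=1$ and $\sum_x D(x)p(x)^2=L(p^2)\ge0$. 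Hence $D$ is a degree-$\ell$ pseudo-distribution with moment tensor $T$. This also shows the set is convex, since all three conditions are convex.

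\textbf{Step 2 (oracle).} Given a query $v$, first check the $m^{O(\ell)}$ consistency equalities and the normalization entry. If one fails, say $\langle e_\alpha-e_\beta,v\rangle\neq0$, the hyperplane $\langle e_\alpha-e_\beta,\cdot\rangle=0$ (or $\langle e_{0\cdots0},\cdot\rangle=1$ for normalization) separates $v$ from the set.

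Otherwise, form $\mathcal M(v)$, whose dimension is $\binom{m+\ell/2}{\ell/2}\le m^{O(\ell)}$, and compute its minimum eigenvalue and an eigenvector $c$ in $m^{O(\ell)}$ time. If $c^\top\mathcal M(v)c<0$, the map $T\mapsto c^\top\mathcal M(T)c=\sum_{a,b}c_ac_bT_{a+b}$ is linear in $T$. It is nonnegative on the whole set and negative at $v$, so it gives a separating hyperplane.

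\textbf{Step 3 (weak version).} The main, though mild, obstacle is numerical. Eigenvalues can only be computed to accuracy $\eta$ in time polynomial in the matrix size and $\log(1/\eta)$. This is precisely why the fact asks only for a \emph{weak} oracle in the sense of \cite{grotschel1981ellipsoid}.

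If the computed minimum eigenvalue is at least $-\eta$, the oracle declares $v$ to be nearly in the set. To justify this, take a point of the set strictly feasible in the PSD direction, e.g.\ the moments of a suitably spread distribution, whose moment matrix has eigenvalues bounded below. Mixing $v$ with this point by a weight of order $\eta\cdot m^{O(\ell)}$ yields a genuine member of the set at distance $O(\eta\, m^{O(\ell)})$ from $v$. This is exactly the slack a weak oracle is permitted. If instead the computed minimum eigenvalue is below $-\eta$, the approximate eigenvector gives a hyperplane that separates up to $\eta$ slack.

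Choosing $\eta=2^{-\mathrm{poly}(m^{\ell})}$ keeps the running time at $m^{O(\ell)}$. The constrained version of the fact is handled analogously: one adds the localizing matrices of $q\prod_{i\in\calS}p_i$, all of which are PSD constraints that are linear in $T$.
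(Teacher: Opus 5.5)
The paper does not prove this fact; it quotes it from the cited works. Your argument is the standard one behind those citations.

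Your characterization by consistency, normalization and a PSD moment matrix is correct. Your sufficiency step, which writes the functional as a finite signed combination of point evaluations, fits the paper's definition of a pseudo-distribution as a finitely supported signed function. A negative direction $c$ of $\mathcal M(v)$ then gives the separating functional $T\mapsto c^\top\mathcal M(T)c$.

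\textbf{The slip in Step 3.} The set is unbounded. The mixture $(1-t)v+tT^*$, with $t\approx\eta/\lambda_{\min}(\mathcal M(T^*))$, lies at distance $t\|T^*-v\|$ from $v$. This distance grows with $\|v\|$, so it is not $O(\eta\, m^{O(\ell)})$ in general. For example, with $m=1$ and $\ell=2$, the query $(1,a,a^2-\epsilon)$ has minimum eigenvalue about $-\epsilon/(1+a^2)$ but lies at distance about $\epsilon/(2a)$ from the set. The repair is easy: a weak oracle may run in time polynomial in the encoding length of the query, so take $\log(1/\eta)$ polynomial in that length as well as in $m^\ell$.

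\textbf{Why the oracle is ``weak.''} Numerics are not really the reason. For rational queries, symmetric Gaussian elimination decides PSD-ness exactly and produces a rational certificate $c$. What the Gr\"otschel--Lov\'asz--Schrijver optimization-from-separation machinery needs is the weak notion together with two further steps. First, pass to the affine hull, using coordinates indexed by distinct monomials; there, for example, Gaussian moments give an interior point. Second, intersect with a bounded region. This is why the paper's subsequent optimization theorem assumes an explicitly bounded system.
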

This fact, together with the equivalence of weak separation and optimization \cite{grotschel1981ellipsoid} forms the basis of the sum-of-squares algorithm, as it allows us to efficiently approximately optimize over pseudo-distributions. 

Given a system of polynomial constraints, denoted by $ \calA$, we say that it is \emph{explicitly bounded} if it contains a constraint of the form $\{ \|x\|^2 \leq 1\}$. Then, the following fact follows from  \cref{fact:sos-separation-efficient} and \cite{grotschel1981ellipsoid}:

\begin{theorem}[Efficient optimization over pseudo-distributions]
    \label{fact:eff-pseudo-distribution}
There exists an $(m+r)^{O(\ell)} $-time algorithm that, given any explicitly bounded and satisfiable system $ \calA$ of $r$ polynomial constraints in $m$ variables, outputs a degree-$\ell$ pseudo-distribution that satisfies $ \calA$ approximately, in the sense of~\cref{def:constrained-pseudo-distributions}.\footnote{
    Here, we assume that the bit complexity of the constraints in $ \calA$ is $m^{O(1)}$.
}
\end{theorem}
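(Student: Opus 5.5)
The plan is to run the ellipsoid method, using the equivalence of weak separation and weak optimization from \cite{grotschel1981ellipsoid}, over the convex set $K_\calA$ of degree-$\ell$ moment tensors $\pexpecf{\mu}{(1,x_1,\ldots,x_m)^{\otimes \ell}}$ of pseudo-distributions satisfying $\calA$. The first step is to write $K_\calA$ as a spectrahedron in the $m^{O(\ell)}$ coordinates indexed by monomials of degree at most $\ell$. A point $L$ (viewed as a linear functional on polynomials of degree at most $\ell$) lies in $K_\calA$ exactly when:
\begin{itemize}
\item $L(1)=1$;
\item for every $\calS\subseteq[r]$ with $\sum_{i\in\calS}\max(\deg p_i,d)\le \ell$, the localizing matrix $\bigl(L(x^\alpha x^\beta\prod_{i\in\calS}p_i)\bigr)_{\alpha,\beta}$ is PSD, where $\alpha,\beta$ range over monomials of the admissible degree.
\end{itemize}
This is equivalent to $L(q\prod_{i\in\calS}p_i)\ge 0$ for every sum-of-squares $q$, since such a $q$ is a sum of squares $h^2$ and $L(h^2\prod p_i)$ is a quadratic form in the coefficients of $h$.

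There are at most $(r+1)^{\ell}$ admissible subsets, each giving a matrix of dimension $m^{O(\ell)}$, so membership can be tested in $(m+r)^{O(\ell)}$ time by (approximate) eigendecomposition. This is \cref{fact:sos-separation-efficient} extended to the constrained case. When some localizing matrix has a negative eigenvector $h$, the linear functional $L\mapsto L(h^2\prod_{i\in\calS}p_i)$ gives a separating hyperplane. The affine constraint $L(1)=1$ is handled by working in that affine subspace.

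Next I will verify the two geometric hypotheses of the ellipsoid method: an outer ball and nonemptiness. For the outer ball, explicit boundedness gives $1-\|x\|^2\ge0$ in $\calA$. Iteratively multiplying this constraint by squares and telescoping gives $L(x^{2\alpha})\le 1$ for all $|\alpha|\le \ell/2$. The $2\times2$ minors of the moment matrix (Cauchy--Schwarz for $L$) then bound every $|L(x^\alpha)|$ by $1$, so $K_\calA$ lies in a ball of radius $m^{O(\ell)}$. For nonemptiness, satisfiability provides a point $x^*$ satisfying $\calA$, and the point mass at $x^*$ is a genuine distribution, hence a member of $K_\calA$.

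The main obstacle is that $K_\calA$ need not contain a ball of non-negligible radius. Equality-type constraints (encoded as pairs of inequalities, as with $\Pi(x\otimes y)=x\otimes y$) typically make it lower-dimensional, so exact feasibility cannot be certified by volume arguments. I will handle this as \cite{grotschel1981ellipsoid} does: solve the weak feasibility problem for the $\eta$-enlargement of $K_\calA$ with $\eta=2^{-n^{\ell}}$ times polynomial factors in the coefficient norms. The enlargement contains a ball of radius roughly $\eta$ around the point-mass solution. The number of ellipsoid iterations is therefore polynomial in $\log(R/\eta)=n^{O(\ell)}$ bits. Under the $m^{O(1)}$ bit-complexity assumption, each iteration costs $(m+r)^{O(\ell)}$, for a total of $(m+r)^{O(\ell)}$.

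The output $L$ lies within $\eta$ of $K_\calA$. Unwinding the slack in each localizing matrix gives $L(q\prod_{i\in\calS}p_i)\ge -\eta\,\|q\|\prod_{i\in\calS}\|p_i\|$, which is exactly approximate satisfaction in the sense of \cref{def:constrained-pseudo-distributions}. The only bookkeeping I expect to need care with is relating the Euclidean norm of $q$'s coefficients to the operator-norm slack in the localizing matrices; this costs only a factor of $m^{O(\ell)}$, which can be absorbed into $\eta$.
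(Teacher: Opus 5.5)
Your proposal is correct and follows the same route as the paper. The paper obtains this fact in one line from the (constrained version of the) separation oracle in \cref{fact:sos-separation-efficient} together with the equivalence of weak separation and optimization \cite{grotschel1981ellipsoid}; you have unpacked that citation into the localizing-matrix oracle plus an ellipsoid run with an inner ball around the point mass at a satisfying solution.

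One claim should be corrected. Your oracle only lets the ellipsoid accept points whose localizing matrices all have minimum eigenvalue at least $-\eta'$, and it cuts only below that threshold. Such acceptance does not certify that the output lies within $\eta$ of the possibly lower-dimensional $K_{\calA}$, so drop that claim. It does, however, directly certify the slack inequalities, which is all the conclusion needs.
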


\paragraph{Conditioning and Reweighting.} Finally, we will also require the following \emph{conditioning} operation on pseudo-distributions.
\begin{definition}[Pseudo-distribution reweighting]
    Let $\pE$ be a pseudo-distribution of degree $d$ and let $f$ be a sum-of-squares polynomial of degree $k\leq d$, with $\pE[f] > 0$. Then, $\pE'$, the reweighting of $\pE$ with $f$, is defined by 
    \[ \pE' g = \frac{\pE [fg]}{\pE [f]}\]
    and is a degree $d-k$ pseudo-distribution. Furthermore, $\pE'$ satisfies all constraints that $\pE$ satisfies of degree at most $d-k$.
\end{definition}

Note that if $x$ is an indicator variable and $f = x^2$, then reweighting by $f$ corresponds to conditioning on $x=1$   .

\section{\texorpdfstring{Best Separable State: $1$ vs $1-\eps$}{Best Separable State: 1 vs 1-ε}}
\label{sec:bss-asym}

In this section we prove the $1$ vs $1-\eps$ regime of our main result. In particular, we prove the following theorem:
\begin{theorem}
    \label{thm:asym-bss-close-to-one-main}
    Let $0<\eps<1$ and let $\Pi \in \mathbb{R}^{n^2 \times n^2}$ be a projector to a subspace of $\mathbb{R}^{n^2}$. Then there exists an algorithm that runs in time
    \[
        n^{O\left(\sqrt {\frac n \eps}\right)}
    \]
    and distinguishes between the following cases:
    \begin{itemize}
        \item \textbf{YES:} There exists a nonzero $x, y \in \mathbb{R}^n$ such that $\Pi \left(x \otimes y\right) = x \otimes y$.
        \item \textbf{NO:} For all $x,y \in \mathbb{R}^n$ we have that $\norm{\Pi(x \otimes y)}_2^2 \leq (1-\eps) \cdot \norm{x \otimes y}_2^2$.
    \end{itemize}
    Furthermore, in the \textbf{YES} case, the algorithm produces $\hat{x}, \hat{y}$ such that $\norm{\Pi(\hat{x} \otimes \hat{y})}_2^2 > (1-\eps) \cdot \norm{\hat{x} \otimes \hat{y}}_2^2$.
\end{theorem}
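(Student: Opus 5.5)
We will prove \cref{thm:asym-bss-close-to-one-main} with the algorithm from the technical overview, at degree $t = C\sqrt{n/\eps}$. The algorithm has four steps.
\begin{enumerate}
\item Discretize coordinates to a $\poly(n)$-size grid $\Sigma$, incurring only $1/\poly(n)$ error.
\item Solve the degree-$t$ SoS relaxation of the discrete version of $\mathcal A$, with $\norm{x}^2=\norm{y}^2=1$ and $\Pi(x\otimes y)=x\otimes y$, in time $n^{O(t)}$ via \cref{fact:eff-pseudo-distribution}.
\item Enumerate all pinnings of at most $t/2$ coordinates of $(x,y)$ to values in $\Sigma$. There are $(2n|\Sigma|)^{t/2}=n^{O(t)}$ of them. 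For each pinning, form the conditioned pseudo-expectation and output $(\pE x,\pE y)$ if it satisfies $\norm{\Pi(\pE x\otimes\pE y)}^2>(1-\eps)\norm{\pE x\otimes \pE y}^2$.
\item If no pinning passes, declare NO.
\end{enumerate}
Soundness is immediate: any output pair certifies the inequality, so in the NO case nothing passes. It therefore suffices to show that in the YES case some pinning passes.

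The YES case rests on two ingredients. The first is a formal version of \cref{lem:mean-rounding-tech-overview}. Write $R=\pE xy^\top$ and $m=(\pE x)(\pE y)^\top$. The constraints give $\mathrm{vec}(R)\in\mathrm{range}(\Pi)$, so
\[
\norm{(I-\Pi)\mathrm{vec}(m)} = \norm{(I-\Pi)\mathrm{vec}(m-R)} \le \norm{\widetilde{\Cov}(x,y)}_F .
\]
Hence if $\norm{\widetilde\Cov(x,y)}_F\le \sqrt{\eps'}\norm{\pE x}\norm{\pE y}$ with $\eps'$ slightly less than $\eps$ (to absorb the discretization error), we get $\norm{(I-\Pi)(\pE x\otimes\pE y)}^2\le\eps'\norm{\pE x\otimes \pE y}^2$. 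By Pythagoras this yields the required bound.

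The second ingredient is the sharpened pinning lemma: pinning at most $O(\sqrt{n/\eps})$ coordinates (with values in the support) achieves this covariance condition. This statement is the main obstacle. It improves \cref{lem:pinning-tech-overview} by removing $\sqrt{\log n}$, and is the content of \Cref{sec:eps-pinning-lemma}, whose statement I would prove here along the following lines.

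Use the potential $\Phi'=a^2+b^2+(a+b)^2$ with $a=\norm{\pE x}$ and $b=\norm{\pE y}$. By \eqref{eq:variance-reduction-tech-overview} and the averaging argument giving \eqref{eq:tech-overview-cov-progress}, some pin of a coordinate $x_j$ or $y_j$ increases $a^2$ and $b^2$ by amounts $\Delta_a,\Delta_b\ge0$ with $\Delta_a+\Delta_b\gtrsim\norm{\Cov(x,y)}_F^2>\eps a^2b^2$. Neither $a^2$ nor $b^2$ decreases in expectation under a pin. I would then choose the pin to maximize a combined progress measure rather than $\Delta_a+\Delta_b$ alone.

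The key computation is that the cross term $2ab$ grows by roughly $\Delta_a b/a+\Delta_b a/b$. Assume without loss of generality $a\le b$. If the gain lands in $a$, then $\Phi'$ grows by at least $\Delta_a\, b/a\gtrsim \eps a b^3 \ge \eps a b \cdot b^2$. If it lands in $b$, it grows by $\Delta_b\gtrsim\eps a^2 b^2$. The aim is to show the multiplicative growth rate is $\gtrsim \sqrt{\eps}\cdot$(something) per step, or alternatively additive growth of order $\sqrt{\eps/n}$. Then the total number of steps, including a first phase that gives $a^2,b^2\gtrsim 1/\sqrt{\eps n}$ via $O(\sqrt{n/\eps})$ greedy pins, is $O(\sqrt{n/\eps})$.

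I expect to handle this by a two-regime argument. While $a^2\lesssim 1/\sqrt{\eps n}$, run the greedy first-phase pinning, which adds $1/n$ per step to $a^2$. After that, $\eps a^2 b^2$ relative to $\Phi'\asymp b^2$ gives rate $\eps a^2\gtrsim\sqrt{\eps/n}$, and the sum of $1/(\eps a^2)$ as $a^2$ grows is telescoped using the $\sqrt{\cdot}$ structure of the cross term. Concavity issues (expected versus realized gains) are handled by picking a specific realized value achieving at least the expected gain. Finally, I would verify that the conditioned pseudo-expectation retains degree at least $t/2$, so that all quantities used remain well defined.
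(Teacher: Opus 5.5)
\textbf{There is a genuine gap in the step you flag as the main obstacle.} Your outer structure matches the paper's: grid SoS, enumeration of pinnings, the mean-rounding lemma, an anchoring phase, and the potential $a^2+b^2+(a+b)^2=2(a^2+b^2+ab)$, which is exactly the paper's $\Phi$. But the potential-growth bound that removes the $\sqrt{\log n}$ is not established, and the heuristic you give for it is wrong. Your claim that the cross term grows by roughly $\Delta_a b/a$ confuses growth of $\E\|\wt m_x\|^2$ with growth of $\E\|\wt m_x\|$. You need expected growth here, since you then pick a realized value achieving the expected gain. Because $\E\wt m_x=m_x$, only the component of $\wt m_x-m_x$ orthogonal to $m_x$ increases $\E\|\wt m_x\|$. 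For example, if a pin moves $m_x$ by $\pm\delta\, m_x/\|m_x\|$ with $\delta<a$, then $\Delta_a=\delta^2$ but $\E\|\wt m_x\|=a$, so with $b$ unchanged the cross term gains nothing. Even after anchoring, the movement can be parallel to $m_x$ except for a $Q/\|m_x\|^2$ fraction, so your heuristic can overestimate by a factor $\|m_x\|^2/Q$. The only bound your sketch actually establishes is $\Delta\gtrsim\eps a^2b^2$. With it alone, consider $a^2$ staying at the anchoring level $Q\asymp 1/\sqrt{\eps n}$ while $b^2$ grows from $Q$ to $1$ by factors $1+O(\eps Q)$. That costs $\asymp\sqrt{n/\eps}\,\log(\eps n)$ steps, which is precisely the loss you want to avoid. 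Your ``telescoping of $\sum 1/(\eps a^2)$ as $a^2$ grows'' presupposes that $a$ grows, and nothing in your argument forces it to.

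\textbf{What the paper does to close this.} Assuming $\|m_x\|\le\|m_y\|$, it always pins a coordinate of $y$. Progress is then read off from movement of the \emph{smaller} mean, tied to the cross-covariance through $C_j=\pE_{a\sim\mu_{y_j}} a\,(m_x^{j\gets a}-m_x)$. The anchored coordinates $I_x$ are frozen and carry mass at least $Q$ of $m_x$, so at least a $Q/\|m_x\|^2$ fraction of the movement is orthogonal to $m_x$. \Cref{lem:mismatched-noise-helps-more} then gives a cross-term gain of order $\|m_y\|\cdot\frac{Q}{\|m_x\|^2}\,\E\frac{\|\wt m_x-m_x\|^2}{\|\wt m_x\|}$. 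This has both the factor $Q/\|m_x\|^2$ your heuristic omits and a \emph{realized} norm in the denominator. Those denominators are handled by the $a^2$-tilted measure decomposition (choose $j$ with probability $\pE y_j^2$, then $a$ with probability $\propto a^2\Pr[y_j=a]$) and a weighted Cauchy--Schwarz. This gives $\|C\|_F^2\lesssim\Delta\cdot\frac{\|m_x\|^2}{Q\|m_y\|}\sqrt{\Delta+\|m_x\|^2}$. Combining with $\|C\|_F^2\lesssim\Delta$ to eliminate $\|m_x\|$ yields $\Delta\gtrsim\gamma^2Q^{2/3}\Phi^{4/3}$. Hence $\Phi^{-1/3}$ drops by $\Omega(\gamma^2Q^{2/3})$ per step, and the second phase lasts $O(1/(\gamma^2Q))$ steps, which balances the $O(nQ)$ anchoring steps at $Q\asymp 1/(\gamma\sqrt n)$.

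\textbf{Two smaller points.} First, imposing the exact constraints over a grid can make the program infeasible in the YES case. You need the slack system \eqref{eq:eps_constraints}, and absorbing that slack in mean rounding uses the lower bounds on $\|\pE x\|,\|\pE y\|$ supplied by anchoring. Second, when $\eps<1/n$ a $1/\poly(n)$ slack is no longer negligible relative to $\eps$. The paper handles that regime separately by brute force over a net.
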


To prove~\cref{thm:asym-bss-close-to-one-main}, we use the constraint set
\begin{align*}
\mathcal A=\left\{
\begin{gathered}
    \norm{x}_2^2=1,\qquad \norm{y}_2^2=1,\\
    \Pi(x\otimes y)=x\otimes y
\end{gathered}
\right\}.
\end{align*}
Since we will work over a discretized domain it will be potentially impossible to satisfy the constraints above, even in the YES case. Thus, we instead replace $\mathcal A$ with a version which allows $1/\poly(n)$ slack:
\begin{align}
\mathcal A'(\beta)=\left\{
\begin{gathered}
    1-\frac1{n^\beta}\leq\norm{x}_2^2\leq1+\frac1{n^\beta},\\
    1-\frac1{n^\beta}\leq\norm{y}_2^2\leq1+\frac1{n^\beta},\\
    \norm{(I-\Pi)(x\otimes y)}_2^2\leq\frac1{n^{2\beta}}
\end{gathered}
\right\}. \label{eq:eps_constraints}
\end{align}

\begin{mdframed}
  \begin{algorithm}[Best Separable State: $1$ vs $1-\eps$]
    \label{algo:asym-final-bss}\mbox{}
    \begin{description}
    \item[Input:] A projector $\Pi\in\mathbb R^{n^2\times n^2}$ and
    $0<\eps<1$.
    \item[Operations:]\mbox{}
    \begin{enumerate}
        \item If $\eps<1/n$, search over all pairs $u,v$ in an
        $(\eps/20)$-net of $\mathbb S^{n-1}$ and output YES iff
        $\norm{\Pi(u\otimes v)}_2^2>1-\eps$ for some pair.
        \item Otherwise, let $\gamma=\sqrt\eps/4$ and choose $\beta=4$.
        \item Let $\Sigma$ be a grid of $[-1,1]$ of mesh size a sufficiently
        small constant multiple of $n^{-1/2-\beta}$.
        \item Let $T=\lceil C\sqrt n/\gamma\rceil$ for a sufficiently large
        absolute constant $C$. Check feasibility of the degree-$(2T+8)$
        sum-of-squares relaxation over $\Sigma$ satisfying
        $\mathcal A'(\beta)$.
        \item If the program is feasible, output YES. Otherwise, output NO.
    \end{enumerate}
    \end{description}
  \end{algorithm}
\end{mdframed}

In order to show that~\cref{algo:asym-final-bss} is correct, we show that whenever the program is feasible we can round the relaxation to $\hat{x}, \hat {y}$ such that $\norm{\Pi(\hat{x} \otimes \hat{y})}_2 > (1-\eps) \norm{\hat{x} \otimes \hat{y}}_2$. This means that in the NO case, the program cannot be feasible. The rounding algorithm used is a variant of global correlation rounding. We record it below as simply brute forcing over all possible choices of conditionings and indices to condition on; this is because the precise choice of coordinates to condition on and how to sample their values is not straightforward to describe. We defer further discussion of how to show the existence of a good set of coordinates and values to the proof of~\cref{lem:asymmetric-pinning}.

\begin{mdframed}
  \begin{algorithm}[Best Separable State Rounding]
    \label{algo:asym-final-bss-rounding}\mbox{}
    \begin{description}
    \item[Input:] A degree-$O(T)$ pseudodistribution satisfying
    $\mathcal A'(\beta)$.
    \item[Operations:]\mbox{}
        For each set $S_{x} \subseteq [n]$ and $S_{y} \subseteq [n]$ of indices of size at most $T$ and pinnings $\wh{x}_{S_x}$ and $\wh{y}_{S_y}$:
        \begin{enumerate}
            \item Let $\pE$ be the pseudodistribution obtained by conditioning $\pE$ on the event that $x_{S_x} = \wh{x}_{S_x}$ and $y_{S_y} = \wh{y}_{S_y}$.
            \item Let $m=\pE_r x$ and $q=\pE_r y$ at the resulting branch.
        \end{enumerate}
    \item[Output:] $m/\norm m_2$ and $q/\norm q_2$ with the highest objective value among all iterations of the loop.
    \end{description}
  \end{algorithm}
\end{mdframed}

\subsection{The rounding success condition}
We now describe a sufficient condition for the final mean rounding step to succeed in~\cref{algo:asym-final-bss-rounding}.

\begin{lemma}[Mean rounding]
\label{lem:asym-final-mean-rounding}
Let $\pE$ satisfy $\mathcal A'(\beta)$, and suppose that
\[
    \norm{\Cov(x,y)}_F
    \leq\gamma\norm{\pE x}_2\norm{\pE y}_2.
\]
Write $m=\pE x$ and $q=\pE y$. Then
\[
    \norm{\Pi(m\otimes q)}_2^2
    \geq
    \left(1-\left(\gamma+
    \frac{n^{-\beta}}{\norm m_2\norm q_2}\right)^2\right)
    \norm{m\otimes q}_2^2.
\]
\end{lemma}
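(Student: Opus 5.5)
We compare the vector $\pE(x\otimes y)$, which the constraints force to lie nearly in the range of $\Pi$, with the product of means $m\otimes q$. The Frobenius bound on the covariance says these two are close. Writing $\operatorname{vec}$ for the identification $\R^{n\times n}\cong\R^{n^2}$, we have
\[
    \pE(x\otimes y) = m\otimes q + \operatorname{vec}(\Cov(x,y)),
    \qquad
    \norm{\operatorname{vec}(\Cov(x,y))}_2=\norm{\Cov(x,y)}_F\le\gamma\norm m_2\norm q_2 .
\]

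The first step bounds the component of $\pE(x\otimes y)$ outside the range of $\Pi$. By linearity, $(I-\Pi)\pE(x\otimes y)=\pE[(I-\Pi)(x\otimes y)]$. For any unit vector $w$, pseudo-Cauchy--Schwarz (valid at degree $\geq 4$) gives
\[
    \iprod{w,\pE[(I-\Pi)(x\otimes y)]}^2\le \pE\iprod{w,(I-\Pi)(x\otimes y)}^2\le \pE\norm{(I-\Pi)(x\otimes y)}_2^2\le n^{-2\beta}.
\]
The middle inequality holds because $\norm{v}^2-\iprod{w,v}^2$ is a sum of squares when $\norm w=1$, and the last is the constraint in $\mathcal A'(\beta)$. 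Taking the supremum over $w$ yields $\norm{(I-\Pi)\pE(x\otimes y)}_2\le n^{-\beta}$. This step needs the most care, since we must confirm that the pseudo-expectation inherits the constraint at the relevant degree. The degree-$(2T+8)$ relaxation, even after conditioning on at most $2T$ coordinates, still leaves degree at least $8$, which suffices.

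The second step bounds the part of $m\otimes q$ outside the range of $\Pi$. Since $I-\Pi$ is a contraction,
\[
    \norm{(I-\Pi)(m\otimes q)}_2\le \norm{(I-\Pi)\pE(x\otimes y)}_2+\norm{(I-\Pi)\operatorname{vec}(\Cov)}_2\le n^{-\beta}+\gamma\norm m_2\norm q_2 .
\]
Dividing by $\norm{m\otimes q}_2=\norm m_2\norm q_2$ gives
\[
    \norm{(I-\Pi)(m\otimes q)}_2\le\Bigl(\gamma+\frac{n^{-\beta}}{\norm m_2\norm q_2}\Bigr)\norm{m\otimes q}_2 .
\]

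Finally, the Pythagorean identity $\norm{\Pi v}^2=\norm v^2-\norm{(I-\Pi)v}^2$ applied to $v=m\otimes q$ gives the claimed bound. We may assume $m,q\neq 0$, since otherwise the hypothesis is vacuous or the statement is trivially interpreted.
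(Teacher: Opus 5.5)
Your proof is correct and follows the paper's argument essentially line for line. You write $\pE(x\otimes y)=m\otimes q+\Cov(x,y)$, bound $\|(I-\Pi)\pE(x\otimes y)\|_2\le n^{-\beta}$ via pseudo-Cauchy--Schwarz against a unit test vector (an unpacking of the paper's appeal to Jensen), apply the triangle inequality with $\|I-\Pi\|\le 1$, and finish with the Pythagorean identity. One small nit: if $m=0$ or $q=0$, the hypothesis is not vacuous, since it forces $\Cov(x,y)=0$; the actual issue is that the right-hand side is then undefined, which is why the statement tacitly assumes both means are nonzero.
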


\begin{proof}
Let $M=\pE(x\otimes y)$. By the definition of cross-covariance,
\[
    M=m\otimes q+\Cov(x,y).
\]
Moreover, Jensen's inequality and the residual constraint give
\[
    \norm{(I-\Pi)M}_2
    \leq
    \left(\pE\norm{(I-\Pi)(x\otimes y)}_2^2\right)^{1/2}
    \leq n^{-\beta}.
\]
Therefore
\begin{align*}
    \norm{(I-\Pi)(m\otimes q)}_2
    &\leq\norm{(I-\Pi)M}_2+\norm{\Cov(x,y)}_F\\
    &\leq n^{-\beta}+\gamma\norm m_2\norm q_2.
\end{align*}
Since $\Pi$ is an orthogonal projector,
\[
    \norm{\Pi(m\otimes q)}_2^2
    =\norm{m\otimes q}_2^2-\norm{(I-\Pi)(m\otimes q)}_2^2.
\]
Substituting the previous bound proves the lemma.
\end{proof}

\subsection{A new pinning lemma}
\label{sec:eps-pinning-lemma}

	We now establish that we can achieve the desired ``win condition'' via $O(\sqrt{n/\eps})$ rounds of conditioning.

	\begin{lemma}
		\label{lem:asymmetric-pinning}
		Let $0 < \gamma \le 1$, and let $\pE$ be a pseudodistribution of degree $\Omega\left( \frac{\sqrt{n}}{\gamma} \right)$ on random variables $(x,y)$, satisfying the constraints in \eqref{eq:eps_constraints}. Then, there is a pinning of $O\left( \frac{\sqrt{n}}{\gamma} \right)$ coordinates such that denoting by $\pE'$ the conditional pseudodistribution obtained from these pinnings
		\begin{enumerate}[label=(\alph*)]
			\item $\displaystyle \left\| \wt{\Cov}'(x,y) \right\|_{F} \le \gamma \cdot \|\pE' x\| \cdot \|\pE' y\|$.
			\item $\|\pE' x\|$ and $\|\pE' y\|$ are bounded from below by $\frac{\gamma}{\sqrt{n}}$.
		\end{enumerate}
	\end{lemma}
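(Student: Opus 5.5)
The plan is to follow the two-phase structure of the warmup (\cref{lem:pinning-tech-overview}), replacing the potential $\Phi$ by the improved potential
\[
\Phi'(\pE)=\|\pE x\|^2+\|\pE y\|^2+(\|\pE x\|+\|\pE y\|)^2 ,
\]
as sketched in the overview. Throughout, $\Phi'\le O(1)$ because $\|\pE x\|^2\le \pE\|x\|^2\le 1+n^{-\beta}$, and likewise for $y$.

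\textbf{Phase one (mean floors).} Greedily pin coordinates of $x$. Since $\sum_i \pE x_i^2\ge 1-n^{-\beta}$, one can repeatedly find an unpinned $i$ and a value $\hat x_i$ in the support of the local distribution with $\hat x_i^2$ at least roughly $1/n$. Here one uses that conditioning leaves $\pE\|x\|^2$ within the constraint band, so the unpinned coordinates still carry mass about $1-r/n$. After $r=\lceil\gamma^2 \cdot\rceil$-many, more precisely $O(\gamma\sqrt n)\le O(\sqrt n/\gamma)$, pinnings, the pinned coordinates contribute at least $\gamma^2/n$ to $\|\pE x\|^2$. Do the same for $y$. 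Pinned coordinates are deterministic, so their contribution to $\|\pE x\|^2$ survives all later conditioning. This gives (b) permanently, and $\Phi'\ge \Omega(\gamma^2/n)$ at the start of phase two.

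\textbf{Phase two (multiplicative growth).} Suppose (a) fails for the current $\pE$. Write $a=\|\pE x\|$, $b=\|\pE y\|$, and let $D=\tr\Cov(x)+\tr\Cov(y)\le 2+o(1)$. By the variance-reduction inequality \eqref{eq:variance-reduction-tech-overview}, pinning a coordinate $z\in\{x_j,y_j\}$ with probability proportional to its variance, and sampling its value, gives
\[
\E[\Delta a^2]\ge \frac{\|\Cov(x,x)\|_F^2+\|\Cov(x,y)\|_F^2}{D},\qquad
\E[\Delta b^2]\ge\frac{\|\Cov(y,y)\|_F^2+\|\Cov(x,y)\|_F^2}{D}.
\]
This uses $\pE\|x\|^2\approx 1$, so that $a^2\approx 1-\tr\Cov(x)$ up to $n^{-\beta}$ errors. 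Both inequalities therefore gain at least $\Omega(\gamma^2a^2b^2)$.

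The cross term is the delicate part. We have $(a+b)^2=a^2+b^2+2ab$, and we need $\E[\Delta(ab)]$ to be large. This requires controlling $\E[a'b']$ under pinning, where $a',b'$ are the post-conditioning norms. Since $a'=\|\E[x\mid z]\|$ is a norm of a martingale, $\E[a'b']\ge ab$ fails to follow from Jensen alone. The plan is to instead use $\E[\langle \E[x|z],u\rangle\langle\E[y|z],v\rangle]$ with $u=\pE x/a$ and $v=\pE y/b$, which lower bounds $\E[a'b']$. Its increment equals $\Cov(\langle x,u\rangle\text{-part},\langle y,v\rangle\text{-part})$ of the conditional means, which can be controlled through $\Cov(x,y)$ as well.

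The target inequality is that the expected increase of $\Phi'$ is at least $\Omega(\gamma^2 ab(a+b)^2)\cdot$, or some comparable quantity, so that
\[
\E[\Delta\Phi']\ge \Omega\!\left(\gamma^2\, ab\right)\cdot\Phi'\ \ge\ \Omega(\gamma/\sqrt n)\,\Phi'\quad\text{once }\min(a,b)\gtrsim\sqrt{\gamma}\ldots
\]
I expect the main obstacle to be exactly this step: showing a gain of order $\gamma\min(a,b)\cdot\max(a,b)$ relative to $\Phi'\approx\max(a,b)^2$, and then using $\min(a,b)\ge\gamma/\sqrt n$. Doing so yields a multiplicative rate $1+\Omega(\gamma^2/\sqrt n)$ in the worst case. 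To instead reach the claimed $1+\Omega(\gamma/\sqrt n)$, one must exploit that the cross term rewards growth of the smaller norm. I would split into the cases $a\approx b$, where the naive bound already gives rate $\gamma^2 a^2\ge$ enough once $a$ is large, and $a\ll b$, where the $2ab$ term grows and I would amortize via the potential's logarithm. I am unsure whether this closes.

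\textbf{Counting and conversion.} Given a rate of $1+\Omega(\gamma/\sqrt n)$, $\Phi'$ goes from $\Omega(\gamma^2/n)$ to $O(1)$ within $O(\frac{\sqrt n}{\gamma}\log\frac n\gamma)$ steps, and a more careful phase-wise count should remove the logarithm. Since the expected gain is positive, some specific value of the chosen coordinate realizes it, which gives a deterministic pinning sequence. The degree $\Omega(\sqrt n/\gamma)$ assumption guarantees the conditioned pseudodistributions stay valid and satisfy $\mathcal A'(\beta)$ throughout, and phase two must halt with (a) holding.
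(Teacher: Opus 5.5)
\textbf{There is a genuine gap, at exactly the step you flag as uncertain.} Phase two never establishes a growth rate strong enough to give $O(\sqrt n/\gamma)$ pinnings.

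What your argument actually proves is the warmup bound. The gain is $\gtrsim \|C\|_F^2 > \gamma^2 a^2 b^2 \asymp \gamma^2 \min(a,b)^2\,\Phi'$. With your phase-one floor of only $\gamma^2/n$, this is a multiplicative rate of $1+\Omega(\gamma^4/n)$, which is far too slow.

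Your proposed handle on the cross term does not help. Under sampling from the marginal, $\langle \pE' x,u\rangle$ and $\langle \pE' y,v\rangle$ have expectations $a$ and $b$. So the increment of $\E\langle \pE' x,u\rangle\langle \pE' y,v\rangle$ is the covariance $\E[\langle \pE' x-\pE x,u\rangle\langle \pE' y-\pE y,v\rangle]$ of the along-mean movements. This has no sign and is negative whenever $\langle x,u\rangle$ and $\langle y,v\rangle$ are anticorrelated (compare \Cref{lem:corr-bound}).

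Finally, even a constant rate $1+\Omega(\gamma/\sqrt n)$ started from $\Phi'\approx\gamma^2/n$ costs a $\log(n/\gamma)$ factor. No phase-wise recount removes this; you need a rate that improves as $\Phi$ grows.

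\textbf{The paper supplies two ingredients you are missing.}

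First, the phase-one floor is a tunable parameter $Q$, not just what (b) requires. \Cref{lem:mean-anchoring} spends $O(nQ)$ pinnings to freeze pinned squared mass $\ge Q$ in both $x$ and $y$. Then $Q\asymp 1/(\gamma\sqrt n)$ is chosen to balance the two phases.

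Second, the cross term is exploited through motion \emph{orthogonal} to the smaller mean, not along it.
\begin{itemize}
\item Assume $\|m_x\|\le\|m_y\|$ and pin a $y$-coordinate. The frozen $x$-coordinates force every displacement $\bx=m_x^{j\gets a}-m_x$ to satisfy $\|\Pi_{m_x^\perp}\bx\|^2\ge (Q/\|m_x\|^2)\|\bx\|^2$.
\item Convexity of $(\|u\|+\|v\|)^2$ turns this into a gain of at least $\frac{Q\|m_y\|}{\|m_x\|^2}\,\E\frac{\|\bx\|^2}{\|m_x+\bx\|}$ (\Cref{lem:mismatched-noise-helps-more}).
\item Write the columns of the cross-covariance as $C_j=\pE_{a}\,a\,(m_x^{j\gets a}-m_x)$ and apply Cauchy--Schwarz against this weight.
\item Control $\sum_j\pE_a a^2\|m_x^{j\gets a}\|^2\approx\|m_x\|^2+O(\Delta)$ via the $a^2$-tilted measure decomposition. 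Here you choose $j$ with probability $\pE y_j^2$, then $a$ with probability proportional to $a^2\Pr[y_j=a]$.
\item Together these give $\|C\|_F^2\lesssim\Delta\,\frac{\|m_x\|^2}{Q\|m_y\|}\sqrt{\Delta+\|m_x\|^2}$, where $\Delta$ is the best gain over single $y$-pinnings.
\item Combined with the naive bound $\Delta\gtrsim\gamma^2\|m_x\|^2\|m_y\|^2$, this yields $\Delta\gtrsim\gamma^2Q^{2/3}\Phi^{4/3}$ (\Cref{lem:potential-growth-43}).
\end{itemize}

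Consequently $\Phi^{-1/3}$ drops by $\Omega(\gamma^2Q^{2/3})$ per step, starting from at most $O(Q^{-1/3})$. So phase two lasts $O(1/(\gamma^2Q))$ steps with no logarithm. Balancing against the $O(nQ)$ phase-one pinnings gives $O(\sqrt n/\gamma)$.

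As a minor point, your phase-one count is inconsistent. Squared mass $\gamma^2/n$ needs only $O(1)$ pinnings, while $O(\gamma\sqrt n)$ pinnings would give squared mass about $\gamma/\sqrt n$.
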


	\begin{remark}
		The more important of the two conditions is the first one. The second condition is needed only to control the error from our $1/\poly(n)$ slack constraints. It turns out that it is also naturally established in the process of bound $\norm{\widetilde{\Cov}'(x,y)}_F$.
	\end{remark}

		\Cref{lem:asymmetric-pinning} will follow from two lemmas. The first is easy, and establishes that it is possible to pin a small number of coordinates in order to ensure the required lower bounds on the norms of the mean.

		\begin{lemma}
			\label{lem:mean-anchoring}
			Let $0 < Q \ll 1$. There is a set $I_{x} \subseteq [n]$ and a pinning $a_{j}$ of the coordinates $x_{j}$ for $j \in I_{x}$ such that $\sum a_{j}^2 \ge Q$, and $|I_{x}| = O\left(nQ\right)$. The same claim holds for $y$.
		\end{lemma}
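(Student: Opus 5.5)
The plan is a greedy induction on the number of pinned coordinates, following the first-phase sketch in the technical overview. I will prove that for every $r \le$ (a constant fraction of) $n$ there is a set $I_x$ of $r$ coordinates, with values $a_j$ in the support of the corresponding local distributions, such that $\sum_{j \in I_x} a_j^2 \ge r(1-1/n^\beta)/(2n)$, or something of that form. Taking $r = \lceil 4nQ\rceil$ then gives $\sum a_j^2 \ge Q$ with $|I_x| = O(nQ)$. The assumption $Q \ll 1$ ensures $r \le n/2$, which the argument needs.

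The inductive step runs as follows. Suppose we have already pinned the coordinates in $I$ with $|I| = r < n/2$, obtaining a conditioned pseudodistribution $\pE$. This $\pE$ still has degree at least the original degree minus $r$, so for $T \gg nQ$ it still satisfies the norm constraint $\pE\norm{x}_2^2 \ge 1 - n^{-\beta}$ of $\mathcal A'(\beta)$, because conditioning preserves low-degree constraints. For $j\in I$ the value $x_j = a_j$ is deterministic under $\pE$, so
\[
\sum_{i\notin I}\pE x_i^2 \;\ge\; 1-n^{-\beta}-\sum_{j\in I}a_j^2 .
\]
We split into two cases. If $\sum_{j\in I}a_j^2 \ge 1/2$, we are already done, since $Q\ll1$. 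Otherwise the right-hand side is at least $1/2 - n^{-\beta}$, so some $i\notin I$ has $\pE x_i^2 \ge (1/2-n^{-\beta})/(n-r) \ge 1/(3n)$. Now $\pE x_i^2$ is an honest expectation over the local distribution on $\{i\}$, so some value $a_i$ in its support has $a_i^2 \ge 1/(3n)$. We pin $x_i = a_i$, which is a legitimate conditioning since the value lies in the support. The previously pinned values remain fixed, so the pinned sum increases by at least $1/(3n)$ each step. After $\lceil 3nQ\rceil$ steps the sum is at least $Q$ (or we exited earlier with sum $\ge 1/2\ge Q$).

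The claim for $y$ follows by the identical argument. The overview also notes that pinning $y$ coordinates afterwards cannot disturb the already pinned $x$ values, so the two pinnings can be applied in sequence.

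The main obstacle is bookkeeping. We must check that the degree budget covers these $O(nQ)$ conditionings, which holds because the lemma will be applied with $Q=O(\gamma/\sqrt n)$ or similar, giving $O(\sqrt n/\gamma)$ pinnings within the degree $\Omega(\sqrt n/\gamma)$. We must also check that the $n^{-\beta}$ slack is absorbed into the constants. Neither step seems hard.
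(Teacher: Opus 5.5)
Your proposal is correct and is essentially the paper's proof: while the pinned mass is below a constant, the norm constraint forces some unpinned $i$ with $\pE x_i^2 \gtrsim 1/n$, hence a support value $a_i$ with $a_i^2 \gtrsim 1/n$, and $O(nQ)$ such greedy pins suffice (your explicit handling of the $n^{-\beta}$ slack, and the constant $1/(3n)$ in place of $1/(2n)$, are cosmetic). One small slip in your bookkeeping remark: the paper takes $Q \asymp 1/(\gamma\sqrt n)$, not $\gamma/\sqrt n$, and that choice is what gives the $O(\sqrt n/\gamma)$ pinnings you cite.
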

		\begin{proof}
			Suppose $Q < \frac{1}{4}$, say. We grow the set $I_{x}$ one coordinate at a time. While $\sum a_{j}^2 \le Q$, we have $\sum_{i \not\in I_{x}} \pE x_{i}^2 \ge \frac{1}{2}$, so there exists a choice of coordinate $i \not\in I_{x}$ with $\pE x_{i}^2 \ge \frac{1}{2n}$, and thus a pinning $a_{i}$ occuring with nonzero probability such that $a_i^2 \ge \frac{1}{2n}$. Repeating this at most $2Qn$ times completes the proof.
		\end{proof}

		The second establishes the more interesting claim. Define the potential function $\Phi : \R^n \times \R^n \to \R$ by
		\[ \Phi(u,v) = 2 \left(\|u\|^2 + \|v\|^2 + \|u\| \cdot \|v\|\right) \mcom \]
		and denote by $C$ the cross-covariance
		\[ C = \pE \left( x - \pE x \right) \left( y - \pE y \right)^\top \mper \]

		\begin{restatable}{lemma}{potentialgrowth}
			\label{lem:potential-growth-43}
			Let $\pE$ be a pseudodistribution on variables $(x,y)$ satisfying the constraints \eqref{eq:eps_constraints}. Further suppose that all coordinates in some sets of indices $I_{x}, I_{y} \subseteq [n]$ have already been pinned such that they contribute squared $\ell_2$ mass of at least $Q$. Let $m_x$ and $m_y$ be the means of this pseudodistribution. Also suppose that
			\[ \| C \|_{F} > \gamma \cdot \|m_x\| \cdot \|m_y\| \mper \]
			Then, there exists a pinning such that if $\wt{m}_x$ and $\wt{m}_y$ are the updated means after this pinning, then
			\[ \Phi\left( \wt{m}_x , \wt{m}_y \right) - \Phi\left( m_x , m_y \right) \gtrsim \gamma^2 Q^{2/3} \cdot \Phi\left( m_x , m_y \right)^{4/3} \mper \]
		\end{restatable}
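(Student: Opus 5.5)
Write $a=\|m_x\|$, $b=\|m_y\|$, and let $\Delta_x=\wt m_x-m_x$, $\Delta_y=\wt m_y-m_y$ denote the change of the means under a pinning. The pinning will be chosen at random: pick a coordinate $x_j$ or $y_j$ with probability proportional to its pseudo-variance, then sample its value from its local marginal. I will lower bound the expected change of $\Phi$; some realization then does at least as well. Conditional means form a martingale, so $\E\Delta_x=\E\Delta_y=0$. The pinned coordinates in $I_x,I_y$ give $a^2,b^2\ge Q$ (up to the $n^{-\beta}$ slack), and $a^2,b^2\le 1+n^{-\beta}$. Write $m=\min(a,b)$ and $M=\max(a,b)$, so that $\Phi\asymp M^2$. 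The target $\gamma^2Q^{2/3}\Phi^{4/3}$ is then $\asymp\gamma^2Q^{2/3}M^{8/3}$.

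\emph{Step 1 (variance reduction).} Exactly as in \eqref{eq:tech-overview-cov-progress}, the inequality \eqref{eq:variance-reduction-tech-overview} gives
\[
\E\|\Delta_x\|^2+\E\|\Delta_y\|^2\gtrsim\|C\|_F^2>\gamma^2a^2b^2 ,
\]
because $\E\|\wt m_x\|^2-a^2=\E\|\Delta_x\|^2$ and the denominator satisfies $\Tr\Cov(x)+\Tr\Cov(y)\le 2+O(n^{-\beta})$. More precisely, pinning $y$-coordinates moves $m_x$ by $\gtrsim\|C\|_F^2$ in expectation, and pinning $x$-coordinates moves $m_y$ by the same amount. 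This alone handles the squared terms of $\Phi$. It suffices for the lemma when $m^2\gtrsim Q^{2/3}M^{2/3}$, since then $\gamma^2m^2M^2\gtrsim\gamma^2Q^{2/3}M^{8/3}$.

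\emph{Step 2 (the cross term $ab$).} The remaining regime is $m\ll M$; say $a=m$ is the small mean. Here the term $2\|u\|\|v\|$ must supply extra growth. I will pin a $y$-coordinate, which moves $m_x$ with $\E\|\Delta_x\|^2\gtrsim\gamma^2a^2b^2$. Decompose $\Delta_x=\Delta_\parallel+\Delta_\perp$ along and orthogonal to $m_x$. From
\[
\wt a=\sqrt{a^2+2\langle\Delta_x,m_x\rangle+\|\Delta_x\|^2}
\]
and $\E\Delta_x=0$, the expected gain in $\wt a$ is second order: roughly $\E\|\Delta_\perp\|^2/(2a)$, with the $\Delta_\parallel$ part contributing nonnegatively by concavity-type bookkeeping. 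Multiplying by $2b$, the cross term gains $\gtrsim(b/a)\,\E\|\Delta_\perp\|^2$. To control the accompanying change $\wt b-b$ in the product $\wt a\wt b$, I will use the decomposition $\Phi=a^2+b^2+(a+b)^2$ and Jensen's inequality for the convex function $(u,v)\mapsto(\|u\|+\|v\|)^2$. This shows the $(a+b)^2$ part never decreases in expectation, and I will track its strict gain through the affine minorant
\[
\wt a+\wt b\ \ge\ \frac{\langle\wt m_x,m_x\rangle}{a}+\frac{\langle\wt m_y,m_y\rangle}{b}.
\]

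\emph{Step 3 (balancing).} This gives two cases. If $\E\|\Delta_\perp\|^2\ge\theta\,\E\|\Delta_x\|^2$, the gain is $\gtrsim\theta\gamma^2m^2M^2\cdot M/m=\theta\gamma^2mM^3$. Otherwise the movement is mostly parallel to $m_x$, and I need the fluctuation of $\wt a$ itself to help. I plan to get this from the strict Jensen gain $\Var$ of the affine minorant, combined with $m^2\ge Q$. The parameter $\theta$ is then optimized against the constraint $m\ge\sqrt Q$. The exponents $2/3$ and $4/3$ should come out of this optimization, noting that $Q^{1/2}M^3\ge Q^{2/3}M^{8/3}$ whenever $M^2\ge Q$.

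The main obstacle is this parallel case. A mean-zero fluctuation of $a$ along $m_x$ can be offset by a negatively correlated fluctuation of $b$ in the product $ab$. Showing that the $(a+b)^2$ term still gains a definite amount there, instead of only being nonnegative by Jensen, is the step I expect to require the most care. If the argument is instead carried out by choosing between $x$- and $y$-pinnings, it may need to use that $\|C\|_F$ bounds the gains in both directions simultaneously.
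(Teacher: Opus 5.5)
There is a genuine gap, and it is the ``parallel case'' you flag yourself. You use the pinned coordinates only through the norm bound $m\ge\sqrt Q$, and with that alone this case cannot be closed. Suppose $\Delta_x$ is parallel to $m_x$ and $\langle\Delta_y,m_y\rangle/b=-\langle\Delta_x,m_x\rangle/a$. Then your affine minorant has zero variance, and the only guaranteed gain is the Step~1 amount $\gamma^2m^2M^2$. For $m^2\approx Q\ll M^2$ this is $\gamma^2QM^2\ll\gamma^2Q^{2/3}M^{8/3}$, so neither strict-Jensen bookkeeping nor a free choice of $\theta$ can rescue it.

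The missing idea is the ``in particular'' clause of the paper's mismatched-noise lemma: this case never occurs. Since $x_{I_x}$ is already fixed, any further pinning leaves $(m_x)_{I_x}$ unchanged, so $\Delta_x$ is supported on $I_x^c$, while $\|(m_x)_{I_x}\|^2\ge Q$. Hence for every realization $\langle\Delta_x,m_x\rangle^2\le\|\Delta_x\|^2\|(m_x)_{I_x^c}\|^2\le(a^2-Q)\|\Delta_x\|^2$, which gives $\|\Delta_\perp\|^2\ge(Q/a^2)\|\Delta_x\|^2$ deterministically. So $\theta=Q/m^2$ is forced, not optimized. With it, your Step~3 heuristic gives a gain of $\gamma^2QM^3/m$. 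Balancing this against Step~1's $\gamma^2m^2M^2$ at $m^3=QM$ is exactly where $\gamma^2Q^{2/3}M^{8/3}$ comes from. Your worry about $\wt b$ offsetting $\wt a$ is also handled more simply by linearizing: $(\wt a+\wt b)^2-(a+b)^2\ge 2(a+b)(\wt a-a+\wt b-b)$ and $\E\wt b\ge b$, so the cross term gains at least $2b\,\E[\wt a-a]$.

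There is also a second, more technical gap. Your Step~2 estimate $\E[\wt a-a]\approx\E\|\Delta_\perp\|^2/(2a)$ holds only on realizations with $\|\Delta_x\|\lesssim a$. In general one only has $\E[\wt a-a]\ge\frac12\E\bigl[\|\Delta_\perp\|^2/\|\wt m_x\|\bigr]$, and since $a$ is the small mean, large moves are possible. Combining this with $\E\|\Delta_x\|^2\gtrsim\|C\|_F^2$ and $\|\wt m_x\|\lesssim 1$ yields only $\gamma^2QM^3$, which is too weak.

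The paper avoids going through $\E\|\Delta_x\|^2$ altogether. It pins a coordinate of the larger-mean vector $y$, with $\Delta$ the best gain over single pinnings. It applies Cauchy--Schwarz to each column $C_j=\pE_{a\sim\mu_{y_j}}a\,(m_x^{j\gets a}-m_x)$ with weights $\|m_x^{j\gets a}\|$. It then bounds $\sum_j\pE_a a^2\|m_x^{j\gets a}\|^2\lesssim\|m_x\|^2+\Delta$ via the $a^2$-tilted measure decomposition: choose $j$ with probability $\pE y_j^2$, then $a$ with probability proportional to $a^2\Pr[y_j=a]$. This gives $\|C\|_F^2\lesssim\Delta\cdot\frac{\|m_x\|^2}{Q\|m_y\|}\sqrt{\Delta+\|m_x\|^2}$. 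Together with $\|m_x\|^2\lesssim\Delta/(\gamma^2\|m_y\|^2)$, which follows from $\|C\|_F^2\lesssim\Delta$, it yields $\Delta^{3/2}\gtrsim\gamma^3Q\|m_y\|^4$, which is the lemma.
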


		We defer a discussion on the choice of potential function and the above lemma to \Cref{subsec:potential-growth}.

		\begin{proof}[Proof of \Cref{lem:asymmetric-pinning}]
			Begin by pinning coordinates of both $x$ and $y$ according to \Cref{lem:mean-anchoring}, for some $Q$ to be decided. Let $m_x^0$ and $m_y^0$ be the means at this step. Following this, while the desideratum $\|C_{F}\| \le \gamma \|m_x\| \|m_y\|$ is not satisfied, progressively pin coordinates according to \Cref{lem:potential-growth-43}, and let $m_x^t$ and $m_y^t$ be the means after $t$ steps of this pinning. We shall show that this can happen for at most $O\left( \frac{1}{Q\gamma^2} \right)$ steps.
			Set $\Phi_{t} = \Phi\left( m_x^t , m_y^t \right)$. By the potential growth lemma, we have for some constant $c$ that
			\[ \Phi_{t+1} \ge \Phi_{t} + c\gamma^2 Q^{2/3} \cdot \Phi_{t}^{4/3} \mcom \]
			so
			\begin{align*}
				\Phi_{t+1}^{-1/3} &\le \Phi_{t}^{-1/3} \cdot \left( 1 + c\gamma^2 Q^{2/3} \cdot \Phi_t^{1/3} \right)^{-1/3} \\
					&\le \Phi_t^{-1/3} - c\gamma^2Q^{2/3} \\
				\Phi_t^{-1/3} - \Phi_{t+1}^{-1/3} &\ge c\gamma^2Q^{2/3} \mper
			\end{align*}
			Note now that $\Phi_0 \gtrsim Q$, by the lower bound on the norm of $m_x^0$. That is, $\Phi_0^{-1/3} \lesssim Q^{-1/3}$. Since $\Phi$ is non-negative, this implies that this can proceed for at most $O\left( \frac{ Q^{-1/3} }{ \gamma^2 Q^{2/3} } \right) = O\left( \frac{1}{\gamma^2 Q} \right)$ many steps.

			Now, we performed $O(Qn)$ rounds of conditioning in the first phases, and $O\left( \frac{1}{\gamma^2 Q} \right)$ many steps of conditioning in the second. Balancing the two by choosing $Q \asymp \frac{1}{\gamma \sqrt{n}}$ completes the proof.
		\end{proof}

	\subsection{A potential growth lemma}
		\label{subsec:potential-growth}

		The primary result of this section is the following.

		\potentialgrowth*

		The structure of $\Phi$ has two components that behave in different ways: the $\|u\|^2 + \|v\|^2$ part results in growth when some coordinate has large variance, and as we will see shortly, the mixed term rewards movement \emph{orthogonal} to the existing mean. In fact, the only ``orthogonal movement'' we will use will be the part that is orthogonal to coordinates $I_{x} \cup I_{y}$ that have already been pinned.

		In particular, the first part, $\|u\|^2 + \|v\|^2$, suffices to establish a potential growth of order $\gamma^2 Q \Phi$---this will give a bound on the degree of the required SDP of $O\left( \sqrt{ \frac{n}{\eps} \cdot \log n } \right)$. In order to remove the extraneous $\log$ factors here and improve this to $O\left( \sqrt{ \frac{n}{\eps} } \right)$, we will require the better potential growth above.
		
		Let us start by noting some properties of $\Phi$. The first is a consequence of its convexity.

		\begin{lemma}
			\label{lem:noise-helps}
			Let $\bx$ and $\by$ be (possibly correlated) mean-$0$ random variables on $\R^n$. Then,
			\[ \E \Phi\left( u + \bx , v + \by \right) - \Phi\left( u , v \right) \ge \E \left[\|\bx\|^2 + \|\by\|^2\right] \mper \]
		\end{lemma}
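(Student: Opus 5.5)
The plan is to split $\Phi$ into its quadratic part and its mixed part, and to bound the expected increment of each separately. Write
\[ \Phi(u,v) = \|u\|^2 + \|v\|^2 + (\|u\|+\|v\|)^2 \mcom \]
which agrees with the definition, since $2(\|u\|^2+\|v\|^2+\|u\|\|v\|) = \|u\|^2+\|v\|^2 + (\|u\|+\|v\|)^2$.

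For the quadratic part, expand $\E\|u+\bx\|^2 = \|u\|^2 + 2\iprod{u,\E \bx} + \E\|\bx\|^2 = \|u\|^2 + \E\|\bx\|^2$, using that $\bx$ has mean zero. The same computation applies to $v$ and $\by$. So the quadratic part increases in expectation by exactly $\E[\|\bx\|^2+\|\by\|^2]$, which is already the claimed bound.

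It therefore suffices to show that the mixed part $g(u,v) = (\|u\|+\|v\|)^2$ does not decrease in expectation. The function $(u,v)\mapsto \|u\|+\|v\|$ is a sum of norms, hence convex and nonnegative. The map $t\mapsto t^2$ is convex and nondecreasing on $[0,\infty)$, so the composition $g$ is convex on $\R^n\times\R^n$.

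Now apply Jensen's inequality to the jointly distributed pair $(u+\bx, v+\by)$, whose mean is $(u,v)$ because $(\bx,\by)$ is mean zero (correlation between $\bx$ and $\by$ does not matter here). This gives $\E g(u+\bx,v+\by) \ge g(u,v)$. Adding this to the exact identity for the quadratic part yields the lemma.

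The only point that needs care is the convexity of $g$, and the composition rule above settles it. If the lemma is used for pseudodistributions, the random variables are the conditional means after pinning. These are genuine random vectors, since the pinned value is sampled from an actual local distribution, so Jensen applies directly.
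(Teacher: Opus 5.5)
Your proposal is correct and is essentially the paper's own proof. The paper also writes $\Phi$ as the quadratic part plus $(\|u\|+\|v\|)^2$, uses that $\E\|u+\bx\|^2=\|u\|^2+\E\|\bx\|^2$ (and likewise for $v,\by$) because the noise is mean zero, and applies Jensen to the convex map $(u,v)\mapsto(\|u\|+\|v\|)^2$ to show that the mixed term does not decrease.
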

		\begin{proof}
			Note that $(u,v) \mapsto (\|u\|+\|v\|)^2$ is convex. We thus have
			\begin{align*}
				\E \Phi\left( u + \bx , v + \by \right) &= \E \left[\left( \|u + \bx\| + \|v + \by\| \right)^2 + \|u + \bx\|^2 + \|v + \by\|^2\right] \\
					&\ge \left( \|u\| + \|v\| \right)^2 + \E \left[\|u + \bx\|^2 + \|v + \by\|^2\right] \\
					&= \Phi(u,v) + \E\left[ \|\bx\|^2 + \|\by\|^2 \right] \mper \qedhere
			\end{align*}
		\end{proof}

		The second establishes that if one of the means has grown large, and the other moves upon conditioning in some direction orthogonal to its (before-conditioning) value, then one may obtain stronger growth bounds.

		\begin{lemma}
			\label{lem:mismatched-noise-helps-more}
			Let $\bx$ and $\by$ be (possibly correlated) mean-$0$ random variables on $\R^n$, and suppose that almost surely, $\left\| \Pi_{u^\perp} \bx\right\|_2^2 \ge \theta \cdot \|\bx\|^2$. Then,
			\[ \E \Phi\left( u + \bx , v + \by \right) - \Phi\left( u , v \right) \ge \theta \|v\| \cdot \E \left[ \frac{\|\bx\|^2}{\|u + \bx\|} \right] \mper \] 
			In particular, if there is some collection $I$ of indices such that (i) we almost surely have $\bx_{I} = \by_{I} = 0$, and (ii) $\|u_{I}\|^2 \ge Q$, then we may take $\theta = \frac{Q}{\|u\|^2}$.
		\end{lemma}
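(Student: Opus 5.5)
The plan is to expand $\Phi$ explicitly, peel off the part of the change that is nonnegative for elementary reasons, and isolate the term $\|v\|\cdot(\E\|u+\bx\| - \|u\|)$. That term carries the gain; it is lower bounded by a sharpened Jensen inequality that exploits the component of $\bx$ orthogonal to $u$.

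Write $a = \|u+\bx\|$ and $b = \|v+\by\|$. Since $\bx,\by$ have mean zero, $\E a^2 = \|u\|^2 + \E\|\bx\|^2$ and $\E b^2 = \|v\|^2+\E\|\by\|^2$. Hence
\[
\E\Phi(u+\bx,v+\by)-\Phi(u,v) = 2\E\|\bx\|^2+2\E\|\by\|^2 + 2\bigl(\E[ab]-\|u\|\|v\|\bigr).
\]
I will split $ab = \|v\|a + \|u\|(b-\|v\|) + (a-\|u\|)(b-\|v\|)$ and handle the three pieces as follows.
\begin{itemize}
\item By Jensen, $\E b \ge \|v+\E\by\| = \|v\|$, so the middle piece has nonnegative expectation.
\item By the triangle inequality $|a-\|u\||\le\|\bx\|$ and $|b-\|v\||\le\|\by\|$, so the last piece is at least $-\|\bx\|\|\by\| \ge -\tfrac12(\|\bx\|^2+\|\by\|^2)$. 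After multiplying by $2$, this is absorbed by $2\E\|\bx\|^2+2\E\|\by\|^2$.
\end{itemize}
It then remains to show
\[
2\|v\|\bigl(\E a - \|u\|\bigr) \ge \theta\|v\|\,\E\!\left[\frac{\|\bx\|^2}{a}\right].
\]

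For this key step, decompose $u+\bx$ along $\hat u = u/\|u\|$ and its orthogonal complement, and set $s = \|u\| + \iprod{\hat u,\bx}$. Then
\[
a^2 = s^2 + \|\Pi_{u^\perp}\bx\|^2, \qquad s \le a.
\]
Since $s\le a$ and $a+s\le 2a$, we get $a - s = (a^2-s^2)/(a+s) \ge (a^2-s^2)/(2a)$; when $s\le 0$ this also holds, because $a-s\ge a$ and $a^2-s^2\le a^2$. Therefore
\[
a - s \ge \frac{\|\Pi_{u^\perp}\bx\|^2}{2a} \ge \frac{\theta\|\bx\|^2}{2a}
\]
almost surely. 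Taking expectations and using $\E s = \|u\|$ gives $\E a-\|u\| \ge \tfrac{\theta}{2}\E[\|\bx\|^2/a]$, which completes the main bound. I expect the only delicate point to be this pointwise inequality: the case $s<0$, and $a=0$, where the right-hand side is interpreted as $0$ since $\bx=-u$ forces $\Pi_{u^\perp}\bx=0$. Everything else is triangle inequality and Jensen.

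For the ``in particular'' clause, assume $\bx_I=0$ almost surely and $\|u_I\|^2\ge Q$. Then $\iprod{u,\bx}=\iprod{u_{I^c},\bx}$, so by Cauchy--Schwarz
\[
\iprod{\hat u,\bx}^2 \le \frac{\|u_{I^c}\|^2}{\|u\|^2}\|\bx\|^2 \le \Bigl(1-\frac{Q}{\|u\|^2}\Bigr)\|\bx\|^2 .
\]
Consequently $\|\Pi_{u^\perp}\bx\|^2 = \|\bx\|^2 - \iprod{\hat u,\bx}^2 \ge \frac{Q}{\|u\|^2}\|\bx\|^2$, i.e.\ one may take $\theta = Q/\|u\|^2$. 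The hypothesis $\by_I=0$ is not needed for this clause; it only matters for the symmetric application with the roles of $x$ and $y$ swapped.
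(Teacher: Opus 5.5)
Your proof is correct and follows essentially the paper's route. You first reduce to $2\|v\|\,(\E\|u+\bx\|-\|u\|)$, then apply the same pointwise bound $\|u+\bx\|-\langle \hat u,u+\bx\rangle\ge \|\Pi_{u^\perp}\bx\|^2/(2\|u+\bx\|)$, and use the same Cauchy--Schwarz argument for the ``in particular'' clause. The only difference is cosmetic: for that first reduction the paper uses convexity of $t\mapsto t^2$ on $\|u+\bx\|+\|v+\by\|$, instead of your expansion of $\|u+\bx\|\,\|v+\by\|$ with the deviation product absorbed into $2\E\|\bx\|^2+2\E\|\by\|^2$.
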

		\begin{proof}
			We have
			\begin{align*}
				\E \Phi\left( u + \bx , v + \by \right) - \Phi\left( u , v \right) &\ge \E \left[\left( \|u+\bx\| + \|v+\by\| \right)^2\right] - \left( \|u\| + \|v\| \right)^2 \\
					&\ge 2 (\|u\|+\|v\|) \cdot \E \left( \|u+\bx\| - \|u\| + \|v+\by\| - \|v\| \right) \\
					&\ge 2 \|v\| \cdot \E \left( \|u+\bx\| - \|u\| \right) \mcom
			\end{align*}
			where the final inequality uses the convexity of $v \mapsto \|v\|$. Continuing,
			\begin{align*}
				\E \left( \|u+\bx\| - \|u\| \right) &= \E \left( \|u+\bx\| - \left\langle \frac{u}{\|u\|} , u+\bx \right\rangle \right) \\
					&= \E \frac{ \|u+\bx\|^2 - \left\langle \frac{u}{\|u\|} , u+\bx \right\rangle^2 }{ \|u+\bx\| + \left\langle \frac{u}{\|u\|} , u+\bx \right\rangle } \\
					&\ge \frac{1}{2} \cdot \E \frac{ \|u+\bx\|^2 - \left\langle \frac{u}{\|u\|} , u+\bx \right\rangle^2 }{ \|u+\bx\| } \\
					&= \frac{1}{2} \cdot \E \frac{ \left\| \Pi_{u^\perp} \bx \right\|^2 }{ \|u+\bx\| } \\
					&\ge \frac{\theta}{2} \cdot \E \frac{\|\bx\|^2}{\|u+\bx\|} \mcom
			\end{align*}
			as desired.
			For the second part of the lemma, the assumption on $I$ implies that
			\[ \left\| \Pi_{u^\perp} \bx \right\|^2 = \|\bx\|^2 - \left\langle \frac{u}{\|u\|} , \bx \right\rangle^2 \ge \left( 1 - \frac{\|u_{I^c}\|^2}{\|u\|^2} \right) \cdot \|\bx\|^2 \ge \frac{Q}{\|u\|^2} \cdot \|\bx\|^2 \mcom \]
			completing the proof.
		\end{proof}

		For the remainder of this section, let us assume without loss of generality that $\|m_x\| \le \|m_y\|$. Our pinning will pin a coordinate of $y$. For a given pinning $y_j \gets a$, denote $m_x^{j \gets a}$ (resp. $m_y^{j \gets a}$) as the new $x$-mean (resp. $y$-mean) after this pinning. Let us also denote
		\[ \Delta = \max_{j,a}\left\{ \Phi\left( m_x^{j \gets a} , m_y^{j \gets a} \right) - \Phi\left( m_x , m_y \right) \right\} \mper \]

		Let us also explicitly write the elements of the cross-covariance $C$. Its $j$th column is given by
		\[ C_{j} = \pE_{a \sim \mu_{y_j}} \left[ a \left( m_{x}^{j \gets a} - m_{x} \right) \right] \mper \]
		The following lemma controls various quantities related to these vectors.

		\begin{lemma}
			\label{lem:tilted-control}
			We have
			\begin{equation}
				\label{eq:centeredness-tilted-pinning}
				\left\| \sum_{j} \pE_{a \sim \mu_{y_j}} a^2 \left( m_x^{j \gets a} - m_x \right) \right\| = O(n^{-\beta})
			\end{equation}
			and
			\begin{equation}
				\label{eq:secondmoment-tilted-pinning}
				\sum_{j} \pE_{a \sim \mu_{y_j}} a^2 \left\| \left( m_x^{j \gets a} - m_x \right) \right\|^2 \lesssim \Delta + O(n^{-\beta}) \mper
			\end{equation}
		\end{lemma}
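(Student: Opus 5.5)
The plan is to express both quantities as moments of the original pseudodistribution via the identity $\pE_{a\sim\mu_{y_j}} a^2\, m_x^{j\gets a} = \pE[y_j^2\, x]$. This holds because $\mu_{y_j}(a)\, m_x^{j\gets a} = \pE[x\,\mathbf 1\{y_j=a\}]$, so multiplying by $a^2$ and summing over $a$ gives the claim. The first bound is then a direct use of the norm constraints in \eqref{eq:eps_constraints}. The second bound will come from reading the $a^2$-weights as a single \emph{randomized} pinning and applying the convexity argument of \Cref{lem:noise-helps}.

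\textbf{Step 1: the centeredness bound \eqref{eq:centeredness-tilted-pinning}.} Summing the identity over $j$ gives
\[
\sum_j \pE_{a\sim\mu_{y_j}} a^2\left(m_x^{j\gets a}-m_x\right)=\pE\bigl[\|y\|^2 x\bigr]-\pE\bigl[\|y\|^2\bigr]\,m_x=\pE\bigl[(\|y\|^2-1)\,x\bigr]-\pE\bigl[\|y\|^2-1\bigr]\,m_x .
\]
I will test this vector against an arbitrary unit vector $w$. The two-sided constraint on $\|y\|^2$ yields the degree-$4$ SoS consequence $(\|y\|^2-1)^2\le n^{-2\beta}$, namely $(1+n^{-\beta}-\|y\|^2)(\|y\|^2-1+n^{-\beta})\ge 0$. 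By pseudo-Cauchy--Schwarz, $\pE[(\|y\|^2-1)\iprod{w,x}]\le \sqrt{\pE(\|y\|^2-1)^2}\sqrt{\pE\iprod{w,x}^2}\le n^{-\beta}\sqrt{1+n^{-\beta}}$. The second term is at most $n^{-\beta}\|m_x\|\le n^{-\beta}(1+n^{-\beta})^{1/2}$. The same computation with $y$ in place of $x$ shows that $\sum_j\pE_a a^2(m_y^{j\gets a}-m_y)$ is also $O(n^{-\beta})$, and this will be needed in Step 2.

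\textbf{Step 2: the second-moment bound \eqref{eq:secondmoment-tilted-pinning}.} Set $Z=\pE\|y\|^2\in[1-n^{-\beta},1+n^{-\beta}]$. I will consider the random pinning that chooses $j$ with probability $\propto \pE y_j^2$ and then $a$ with probability $\propto a^2\mu_{y_j}(a)$. The joint law of $(j,a)$ is therefore $a^2\mu_{y_j}(a)/Z$. Write $\bx=m_x^{j\gets a}-m_x$ and $\by=m_y^{j\gets a}-m_y$. By Step 1, $\|\E\bx\|,\|\E\by\|=O(n^{-\beta})$. Every realization is an actual pinning, so $\E\,\Phi(m_x+\bx,m_y+\by)-\Phi(m_x,m_y)\le\Delta$. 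For the lower bound I will redo the proof of \Cref{lem:noise-helps} without assuming exactly mean zero. Jensen applied to the convex part gives $\E(\|m_x+\bx\|+\|m_y+\by\|)^2\ge(\|m_x+\E\bx\|+\|m_y+\E\by\|)^2$. The quadratic part satisfies $\E\|m_x+\bx\|^2=\|m_x\|^2+2\iprod{m_x,\E\bx}+\E\|\bx\|^2$. All means have norm $O(1)$, so $\Phi$ is $O(1)$-Lipschitz on the relevant region. Replacing $m+\E\bx$ by $m$ therefore costs only $O(n^{-\beta})$, which gives $\E[\|\bx\|^2+\|\by\|^2]\le \Delta+O(n^{-\beta})$. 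Multiplying by $Z\le 2$ and dropping the $\by$ term converts this into $\sum_j\pE_a a^2\|m_x^{j\gets a}-m_x\|^2\lesssim\Delta+O(n^{-\beta})$.

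\textbf{Main obstacle.} The delicate point is the $a^2$-weighting. A naive per-coordinate bound, using $a^2\le 1$ and then $\pE_a\|\bx\|^2\le\Delta$ for each $j$, loses a factor of $n$. The remedy is the tilted randomized pinning above, but it is no longer a martingale step, so \Cref{lem:noise-helps} cannot be quoted verbatim. This is why Step 1 must control the drift for both $x$ and $y$, and why the Lipschitz/Jensen error must be bounded carefully.
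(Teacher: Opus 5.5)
Your proposal is correct and follows essentially the same route as the paper. The paper also uses the $a^2$-tilted pinning (coordinate $j$ with probability $\propto \pE y_j^2$, value $a$ with probability $\propto a^2\mu_{y_j}(a)$), the identity $\sum_j \pE_{a\sim\mu_{y_j}} a^2 m_x^{j\gets a} = \pE[\|y\|^2 x]$ for \eqref{eq:centeredness-tilted-pinning}, and then \Cref{lem:noise-helps} with an $O(n^{-\beta})$ Lipschitz correction for the drift of both $m_x$ and $m_y$; your version is somewhat more careful, since it justifies the pseudo-Cauchy--Schwarz step via the product of the two norm constraints and keeps the $a^2$ weight that the paper's final display drops.
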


		The above will be a consequence of the following (slightly mysterious) measure decomposition when the distributions of $x$ and $y$ are supported exactly on the sphere $\mathbb{S}^{n-1}$. Pick a coordinate $j \in [n]$ with probability $\E y_j^2$, and then pin it to $a$ with probability $\frac{\Pr[y_j = a] \cdot a^2}{\E y_j^2}$. Observe that this does \emph{not} lie in the standard canon of pinning algorithms of the form ``pick a coordinate according to some distribution, then pin the value of that coordinate according to the associated marginal.''

		Let us verify that this is indeed a measure decomposition, before providing intuition. We have
		\begin{align}
			\sum_{j,a} \E\left[ f(x,y) \mid y_j = a \right] \cdot \frac{\Pr[y_j = a] \cdot a^2}{\E y_j^2} \cdot \E[y_j^2] &= \sum_{j,a} \E\left[ f(x,y) y_j^2 \mid y_j = a \right] \cdot \Pr[y_j = a] \nonumber \\
				&= \sum_{j} \E\left[ f(x,y) y_j^2 \right] \nonumber \\
				&= \E\left[ f(x,y) \right] \label{eq:magic-l2-measure-decomposition}
		\end{align}
		The above is more clearly seen to be a measure decomposition when interpreted in the following way. Sample $y$, pick a coordinate $j \sim p_{y}$ (in this case, $p_{y}(j) = y_j^2$, but this remains a measure decomposition even otherwise), and pin $\by_{j} = y_{j}$. Indeed, this can be viewed as the ``first step'' of the measure decomposition which eventually pins all the coordinates of $y$. When $p_{y}(j) = y_{j}^2$, both the marginal probability of $j$ (that is, $\pE y_j^2$) \emph{and} the marginal on $a$ conditioned on the coordinate $j$ being picked (that is, with probability $\frac{\Pr[y_j = a] \cdot a^2}{\E y_j^2}$) have particularly simple interpretations. We now move to the proof of the above lemma.

		\begin{proof}
			By essentially the manipulations in \eqref{eq:magic-l2-measure-decomposition}, picking $f(x,y) = x$,
			\[ \frac{\sum_{j} \pE_{a \sim \mu_{y_{j}}}\left[a^2 m_{x}^{j \gets a}\right]}{\pE \|y\|^2} = \frac{\pE\left[ x \|y\|^2 \right]}{\pE \|y\|^2} \mper \]
			\eqref{eq:eps_constraints} establishes that $\pE \vDash \|y\|^2 = 1 + O(n^{-\beta})$, which yields \eqref{eq:centeredness-tilted-pinning}.
			Towards the second inequality,
			\begin{align*}
				\Delta &\ge \frac{\sum_{j} \pE_{a \sim \mu_{y_j}} a^2 \Phi\left( m_x^{j \gets a} , m_y^{j \gets a} \right)}{\pE \|y\|^2} - \Phi(m_x, m_y) \mper
			\end{align*}
			Denote $m_x' = \sum_{j} \pE_{a \sim \mu_{y_j}} a^2 m_x^{j \gets a}$, and define $m_y'$ similarly, so $\|m_x' - m_x\| = O(n^{-\beta})$. Consequently, $\left| \Phi(m_x',m_y') - \Phi(m_x,m_y) \right| = O(n^{-\beta})$. As a result, by \Cref{lem:noise-helps},
			\begin{align*}
				\Delta &\ge \frac{\sum_{j} \pE_{a \sim \mu_{y_j}} a^2 \Phi\left( m_x^{j \gets a} , m_y^{j \gets a} \right)}{\pE \|y\|^2} - \Phi(m_x',m_y') - O(n^{-\beta}) \\
					&\ge \frac{\sum_{j} \pE_{a \sim \mu_{y_j}} \left\| m_x^{j \gets a} - m_x \right\|^2}{\pE \|y\|^2} - O(n^{-\beta}) \\
					&= \sum_{j} \pE_{a \sim \mu_{y_j}} \left\| m_x^{j \gets a} - m_x \right\|^2 - O(n^{-\beta}) \mper \qedhere
			\end{align*}
		\end{proof}

		 We will also use the same lower bound on $\Delta$ that was employed in the weaker $O(\sqrt{n \log n / \epsilon})$ argument:

		\begin{lemma}
			\label{lem:worse-linear-cov-bound}
			We have
			\[ \|C\|_F^2 \le \Delta + O(n^{-\beta}) \mper \]
		\end{lemma}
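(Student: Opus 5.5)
We prove the bound $\|C\|_F^2 \le \Delta + O(n^{-\beta})$ with the tilted measure decomposition \eqref{eq:magic-l2-measure-decomposition} and the second-moment bound \eqref{eq:secondmoment-tilted-pinning} of \Cref{lem:tilted-control}, plus one Cauchy--Schwarz. Recall the column formula
\[
C_j = \pE_{a \sim \mu_{y_j}}\left[ a\left(m_x^{j\gets a} - m_x\right)\right].
\]
So $C_j$ is an average of the displacement vectors $m_x^{j\gets a}-m_x$, weighted by $a$ under the marginal of $y_j$.

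First I would apply Cauchy--Schwarz inside each column. The marginal $\mu_{y_j}$ is a genuine probability distribution, since it is a local distribution of degree $1$. Hence
\[
\|C_j\|^2 \le \left(\pE_{a\sim\mu_{y_j}} \|a\,(m_x^{j\gets a}-m_x)\|\right)^2 \le \pE_{a\sim\mu_{y_j}} a^2\,\|m_x^{j\gets a}-m_x\|^2 .
\]
Here the first step is the triangle inequality (convexity of the norm) and the second is Jensen's inequality. Only a factor of $1$ is used, not $\E y_j^2$, because the factor $a$ is kept inside the square.

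Summing over $j$ gives
\[
\|C\|_F^2=\sum_j\|C_j\|^2 \le \sum_j \pE_{a\sim\mu_{y_j}} a^2\,\|m_x^{j\gets a}-m_x\|^2 .
\]
By \eqref{eq:secondmoment-tilted-pinning} the right-hand side is at most $\Delta+O(n^{-\beta})$, up to the constant hidden in $\lesssim$. That constant is harmless for how \Cref{lem:worse-linear-cov-bound} is used, and it can be taken to be $1$: the proof of \Cref{lem:tilted-control} actually yields $\Delta \ge \sum_j \pE a^2\|m_x^{j\gets a}-m_x\|^2 - O(n^{-\beta})$ after accounting for normalization by $\pE\|y\|^2 = 1\pm O(n^{-\beta})$.

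The main point to check is that the second-moment bound of \Cref{lem:tilted-control} really carries the weight $a^2$ inside the sum. Its proof as written drops the $a^2$ in the penultimate line. The correct chain applies \Cref{lem:noise-helps} to the tilted mixture $\sum_j\pE a^2(\cdot)/\pE\|y\|^2$. Under this mixture the noise $m_x^{j\gets a}-m_x'$ is mean zero, and its second moment is the $a^2$-weighted sum. Converting from $m_x'$ back to $m_x$ costs only $O(n^{-\beta})$, since the norms are bounded. With that weighting confirmed, the Cauchy--Schwarz step above closes the proof.
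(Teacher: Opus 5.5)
Your proof is correct, but it takes a different route from the paper. The paper proves \Cref{lem:worse-linear-cov-bound} by appeal to the warm-up computation \eqref{eq:tech-overview-cov-progress}, restricted to $y$-coordinates. It picks $j$ with probability $\Var(y_j)/\Tr\Cov(y)$, pins $y_j$ to a value drawn from its actual marginal, and uses the variance-drop bound. Column by column, that bound is the centered Cauchy--Schwarz inequality $\|C_j\|^2=\|\pE_{a}(a-\pE y_j)(m_x^{j\gets a}-m_x)\|^2\le \Var(y_j)\,\pE_a\|m_x^{j\gets a}-m_x\|^2$. Since this pinning is exactly mean-preserving, \Cref{lem:noise-helps} gives $\Delta\ge \|C\|_F^2/\Tr\Cov(y)$ with no additive error. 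The only input needed is $\Tr\Cov(y)\le 1+O(n^{-\beta})$.

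You instead keep the uncentered weight $a$ inside the expectation, so the total weight is $\pE y_j^2$ rather than $\Var(y_j)$. You then go through the $a^2$-tilted decomposition and \eqref{eq:secondmoment-tilted-pinning}. This costs an additive $O(n^{-\beta})$, because the tilted mixture is only approximately centered; that is where the norm constraint enters. In exchange, the whole subsection runs on a single decomposition, together with the same kind of Cauchy--Schwarz step used in \Cref{lem:better-nonlinear-cov-bound}.

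You are also right that the displayed proof of \Cref{lem:tilted-control} drops the $a^2$ in its penultimate line. The statement \eqref{eq:secondmoment-tilted-pinning} is correct with the weight. Your repair works and gives the bound with constant $1$: apply \Cref{lem:noise-helps} to the normalized tilted mixture centered at its own mean, recenter at $m_x$ at cost $O(n^{-2\beta})$, and absorb the normalization $\pE\|y\|^2=1\pm O(n^{-\beta})$ using $\Delta=O(1)$.
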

        (For the proof, see the warmup in the technical overview, in particular \eqref{eq:tech-overview-cov-progress}.)

		As mentioned, the above bound already yields non-trivial control on the growth of the potential---ignoring the $n^{-\beta}$ terms, it gives $\Delta \gtrsim \gamma^2 \|m_x\|^2 \|m_y\|^2 \gtrsim \gamma^2 Q \Phi(m_x,m_y)$. In our final proof, we will only use this to obtain an upper bound on how large the smaller mean $\|m_x\|$ can be.

		With the above in hand, we may prove the main input to \Cref{lem:potential-growth-43}.

		\begin{lemma}
			\label{lem:better-nonlinear-cov-bound}
			We have
			\[ \|C\|_F^2 \le \Delta \cdot \frac{\|m_x\|^2}{Q\|m_y\|} \cdot \sqrt{ \Delta + \|m_x\|^2 } \]
		\end{lemma}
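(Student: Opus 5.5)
The plan is to bound $\|C\|_F^2$ column by column, apply Cauchy--Schwarz inside each column with a weight $\|m_x+d\|$, and control the two resulting factors separately. The first factor is controlled by the ``mismatched noise'' gain of \Cref{lem:mismatched-noise-helps-more} under the tilted measure decomposition \eqref{eq:magic-l2-measure-decomposition}. The second is controlled by the ordinary gain of \Cref{lem:noise-helps} under the standard decomposition. Throughout I write $d_{j,a} = m_x^{j\gets a}-m_x$, recalling that we pin a coordinate of $y$ and that $\|m_x\|\le\|m_y\|$.

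\emph{Step 1 (Cauchy--Schwarz per column).} Since $C_j = \pE_{a\sim\mu_{y_j}}[a\, d_{j,a}]$, Cauchy--Schwarz gives
\[
\|C_j\|^2 \le \pE_{a\sim\mu_{y_j}}\left[\frac{a^2\|d_{j,a}\|^2}{\|m_x+d_{j,a}\|}\right]\cdot \pE_{a\sim\mu_{y_j}}\|m_x+d_{j,a}\| =: A_j B_j .
\]
Summing over $j$ yields $\|C\|_F^2\le \big(\sum_j A_j\big)\max_j B_j$.

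\emph{Step 2 (bounding $\max_j B_j$).} For each fixed $j$, pinning $y_j\gets a$ with $a$ drawn from its marginal is a genuine measure decomposition. Hence the perturbations $(d_{j,a}, m_y^{j\gets a}-m_y)$ are mean zero. \Cref{lem:noise-helps} then gives $\pE_a\|d_{j,a}\|^2\le\Delta$, because $\Delta$ is a maximum and so dominates this average gain. By Jensen,
\[
B_j\le\sqrt{\pE_a\|m_x+d_{j,a}\|^2}=\sqrt{\|m_x\|^2+\pE_a\|d_{j,a}\|^2}\le\sqrt{\|m_x\|^2+\Delta}.
\]

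\emph{Step 3 (bounding $\sum_j A_j$).} Here I use the tilted decomposition from the proof of \Cref{lem:tilted-control}. We pick $j$ with weight $\pE y_j^2$ and then $a$ with weight proportional to $a^2\Pr[y_j=a]$. By \eqref{eq:centeredness-tilted-pinning}, the averaged perturbation of $m_x$ under this decomposition is $O(n^{-\beta})$; the same holds for $m_y$. So, up to $O(n^{-\beta})$ errors, the pair $(d_{j,a}, m_y^{j\gets a}-m_y)$ is a mean-zero perturbation of $(m_x,m_y)$.

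Since $\Delta$ dominates every individual gain, it dominates the tilted average gain. I then apply \Cref{lem:mismatched-noise-helps-more} with $u=m_x$, $v=m_y$, and $I=I_x$. Coordinates in $I_x$ are already pinned, so $(d_{j,a})_{I_x}=0$, and they carry mass at least $Q$, so we may take $\theta=Q/\|m_x\|^2$. This gives
\[
\Delta \gtrsim \frac{Q\|m_y\|}{\|m_x\|^2}\sum_j \pE_{a\sim\mu_{y_j}}\frac{a^2\|d_{j,a}\|^2}{\|m_x+d_{j,a}\|} - O(n^{-\beta}) = \frac{Q\|m_y\|}{\|m_x\|^2}\sum_j A_j - O(n^{-\beta}).
\]
Rearranging, $\sum_j A_j\lesssim \Delta\|m_x\|^2/(Q\|m_y\|)$. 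Combining this with Step 2 gives the claimed inequality, up to absolute constants and the negligible $n^{-\beta}$ slack, which the statement suppresses.

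\emph{Main obstacle.} I expect the delicate point to be Step 3. \Cref{lem:mismatched-noise-helps-more} is stated for exactly mean-zero noise around $(u,v)$, whereas the tilted decomposition is centered at $(m_x',m_y')$, which is only $O(n^{-\beta})$-close to $(m_x,m_y)$. My first attempt would be to apply the lemma around $(m_x',m_y')$ and use that $\Phi$ is Lipschitz on the bounded region. I would then check that replacing $m_x'$ by $m_x$ in $\theta$ and in the denominators $\|m_x'+d\|$ changes things only by lower-order terms. This is where the lower bounds $\|m_x\|^2,\|m_y\|^2\gtrsim Q\gg n^{-\beta}$ from the first phase (\Cref{lem:mean-anchoring}) are needed. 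A secondary check is that \Cref{lem:mismatched-noise-helps-more} tolerates the simultaneous, correlated movement of $m_y$. It does, since its proof only drops the nonnegative $y$-contribution via convexity of the norm.
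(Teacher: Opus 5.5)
Your argument is correct, and it mirrors the paper's. The paper uses the same per-column Cauchy--Schwarz with weight $\|m_x^{j\gets a}\|$, but it puts the $a^2$ in the \emph{other} factor: $\|C_j\|^2\le\bigl(\pE_a\|d_{j,a}\|^2/\|m_x^{j\gets a}\|\bigr)\bigl(\pE_a a^2\|m_x^{j\gets a}\|\bigr)$. It then applies \Cref{lem:mismatched-noise-helps-more} to the ordinary marginal pinning of a single $y_j$. That pinning is exactly centered, so it gives a uniform-in-$j$ bound with no error. The paper controls $\sum_j\pE_a a^2\|m_x^{j\gets a}\|$ by a second Cauchy--Schwarz and both parts of \Cref{lem:tilted-control}. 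You swap which decomposition feeds which lemma. What you gain is that your norm factor $B_j$ is handled exactly by Jensen and \Cref{lem:noise-helps}, so you never need \eqref{eq:secondmoment-tilted-pinning} or the second Cauchy--Schwarz. What you pay is exactly the obstacle you flagged: \Cref{lem:mismatched-noise-helps-more} must now run on the tilted decomposition, which under $\mathcal A'(\beta)$ is only $O(n^{-\beta})$-centered. Your fix works, since centering enters that lemma's proof in only three places, each costing $O(n^{-\beta})$. Note, though, where the slack lands. You obtain $\|C\|_F^2\lesssim(\Delta+O(n^{-\beta}))\frac{\|m_x\|^2}{Q\|m_y\|}\sqrt{\Delta+\|m_x\|^2}$, with the error additive to $\Delta$. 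In the paper's arrangement the slack sits inside the square root next to $\|m_x\|^2\ge Q\gg n^{-\beta}$ and is absorbed unconditionally. To recover the stated form you therefore need an a priori bound $\Delta\gg n^{-\beta}$. That bound is available where the lemma is used: by \Cref{lem:worse-linear-cov-bound} and the failed win condition in \Cref{lem:potential-growth-43}, $\Delta\ge\gamma^2Q^2-O(n^{-\beta})$, which is polynomially larger than $n^{-\beta}$. Still, it is an extra input, not something the statement silently suppresses. With exact sphere constraints the two routes are equally clean.
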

		Among the terms within the square-root, the $\|m_x\|^2$ will be the relevant one.
		\begin{proof}
			First off, for any coordinate $j$,
			\begin{equation}
				\label{eq:mismatched-eq1}
				\pE_{a \sim \mu_{y_j}} \left[ \frac{\|m_x^{j \gets a} - m_x\|^2}{\|m_x^{j \gets a}\|} \right] \le \Delta \cdot \frac{\|m_x\|^2}{Q \|m_y\|} \mper
			\end{equation}
			Indeed, this follows immediately from \Cref{lem:mismatched-noise-helps-more} with the observation that $m_x^{j \gets a} - m_x$ is a $0$-mean random variable.

			Let $C_j$ be the $j$th column of $C$. As observed,
			\[ C_{j} = \pE_{a \sim \mu_{y_j}} a \left( m_x^{j \gets a} - m_x \right) \mper \]
			By the weighted Cauchy--Schwarz inequality,
			\begin{align*}
				\|C_j\|^2 &\le \left(\pE_{a \sim \mu_{y_j}} \frac{\left\| m_x^{j \gets a} - m_x \right\|^2}{\|m_x^{j \gets a}\|}\right) \cdot \left( \pE_{a \sim \mu_{y_j}} a^2 \cdot \|m_x^{j \gets a}\| \right) \\
					&\le \Delta \cdot \frac{\|m_x\|^2}{Q\|m_y\|} \cdot \left( \pE_{a \sim \mu_{y_j}} a^2 \cdot \|m_x^{j \gets a}\| \right)
			\end{align*}
			Therefore, summing over $j$,
			\begin{align*}
				\|C\|_F^2 &\le \Delta \cdot \frac{\|m_x\|^2}{Q\|m_y\|} \cdot \sum_{j} \pE_{a \sim \mu_{y_j}} a^2 \cdot \|m_x^{j \gets a}\| \\
					&\le \Delta \cdot \frac{\|m_x\|^2}{Q\|m_y\|} \cdot \left( \pE \|y\|^2 \cdot \sum_j \pE_{a \sim \mu_{y_j}} a^2 \cdot \|m_x^{j \gets a}\|^2\right)^{1/2}
			\end{align*}
			The constraints in \eqref{eq:eps_constraints} yield that $\pE \|y\|^2 = 1 + O(n^{-\beta})$. The numerator of the term within the square-root may be rewritten as
			\begin{align*}
				\sum_j \pE_{a \sim \mu_{y_j}} a^2 \cdot \|m_x^{j \gets a}\|^2 &= \sum_j \pE_{a \sim \mu_{y_j}} a^2 \cdot \left\|m_x + \left( m_x^{j \gets a} - m_x \right) \right\|^2 \\
					&= \left(\pE \|y\|^2\right) \cdot \|m_x\|^2 + 2 \left\langle m_x , \sum_{j} \pE_{a \sim \mu_{y_j}} a^2 m_x^{j \gets a} - m_x \right\rangle \\
					&\qquad\qquad+ \sum_{j} \pE_{a \sim \mu_{y_j}} a^2 \left\| m_x^{j \gets a} - m_x \right\|^2
			\end{align*}
			\eqref{eq:centeredness-tilted-pinning} in \Cref{lem:tilted-control} bounds the second quantity by $O(n^{-\beta})$, and \eqref{eq:secondmoment-tilted-pinning} bounds the third by $O(\Delta + n^{-\beta})$. Putting the pieces together, we thus have
			\[ \|C\|_F^2 \lesssim \Delta \cdot \frac{\|m_x\|^2}{Q\|m_y\|} \cdot \sqrt{ \Delta + \|m_x\|^2} \mper \qedhere \]
		\end{proof}

		Finally, we may prove \Cref{lem:potential-growth-43}.

		\begin{proof}[Proof of \Cref{lem:potential-growth-43}]
			\Cref{lem:better-nonlinear-cov-bound} coupled with our lower bound assumption on $\|C\|_F$ imply that
			\[ \Delta \cdot \frac{\|m_x\|^2}{Q\|m_y\|} \cdot \sqrt{ \Delta + \|m_x\|^2 } \gtrsim \gamma^2 \cdot \|m_x\|^2 \|m_y\|^2 \mper \]
			Rearranging,
			\[ \Delta \cdot \sqrt{ \Delta + \|m_x\|^2 } \gtrsim \gamma^2 \cdot Q \cdot \|m_y\|^3 \mper \]
			On the other hand, \Cref{lem:worse-linear-cov-bound} yields that
			\[ \Delta \gtrsim \gamma^2 \cdot \|m_x\|^2 \|m_y\|^2 \mper \]
			Indeed, the error of $O(n^{-\beta})$ is negligible in comparison to the Frobenius norm of the covariance, since the failure of the win condition and the mean anchoring imply that $\|C\|_{F}^2 \gtrsim n^{-O(1)}$.
			As a result,
			\[ \|m_x\|^2 \lesssim \frac{\Delta}{\gamma^2 \|m_y\|^2} \mper \]
			Substituting this into the previous string of inequalities,
			\[ \Delta \cdot \sqrt{ \frac{\Delta}{\gamma^2 \|m_y\|^2} } \gtrsim \gamma^2 \cdot Q \cdot \|m_y\|^3 \mcom \]
			and rearranging,
			\[ \Delta^{3/2} \ge \gamma^3 Q \cdot \|m_y\|^4 \gtrsim \gamma^3 Q \cdot \Phi(m_x,m_y)^{2} \mper \]
			Raising both sides to the $(2/3)$rd power completes the proof.
		\end{proof}

\subsection{Proof of the main theorem}

We now combine the pinning lemma from the previous section and the rounding success condition to prove the main theorem.

\begin{proof}[Proof of~\cref{thm:asym-bss-close-to-one-main}]
We first consider the case where $\eps > 1/n$ (we will return to the case where $\eps < 1/n$ later). First, we show that in the YES case the SDP relaxation must be feasible. Normalize a YES witness so that
$\norm{x}_2=\norm{y}_2=1$, and choose $\hat x,\hat y\in\Sigma^n$ with
\[
    \norm{x-\hat x}_2,\norm{y-\hat y}_2\leq c n^{-\beta}
\]
for a sufficiently small absolute constant $c$.  Then
\[
    \left|\norm{\hat x}_2^2-1\right|,
    \left|\norm{\hat y}_2^2-1\right|
    \leq n^{-\beta}.
\]
Moreover,
\[
    \norm{\hat x\otimes\hat y-x\otimes y}_2
    \leq
    \norm{(\hat x-x)\otimes\hat y}_2
    +\norm{x\otimes(\hat y-y)}_2
    =O(n^{-\beta}).
\]
Since $(I-\Pi)(x\otimes y)=0$, decreasing $c$ if necessary gives
\[
    \norm{(I-\Pi)(\hat x\otimes\hat y)}_2\leq n^{-\beta}.
\]
Thus the point mass at $(\hat x,\hat y)$ satisfies
$\mathcal A'(\beta)$, and the relaxation is feasible in the
YES case.

It remains to prove soundness.  Suppose that the relaxation is feasible.
Apply~\cref{lem:asymmetric-pinning} to the pseudodistribution induced by
the feasible solution. Thus, one of the branches
enumerated by~\cref{algo:asym-final-bss-rounding} satisfies
\[
    \norm{\Cov_r(x,y)}_F
    \leq
    \gamma\norm{\pE_r x}_2\norm{\pE_r y}_2 \quad \text{and} \quad \norm{\pE_r x}_2,\norm{\pE_r y}_2
    \geq\frac{\gamma}{\sqrt n}.
\]
Set $m=\pE_r x$ and $q=\pE_r y$.  The second conclusion of the pinning
lemma gives
\[
    \frac{n^{-\beta}}{\norm m_2\norm q_2}
    \leq
    \frac{n^{1-\beta}}{\gamma^2}.
\]
Recall that $\beta=4$, $\gamma=\sqrt\eps/4$, and $\eps\geq1/n$.  Therefore
\[
    \frac{n^{1-\beta}}{\gamma^2}
    =
    \frac{16}{n^3\eps}
    \leq\gamma
\]
for all sufficiently large $n$. 
Applying~\cref{lem:asym-final-mean-rounding}, we conclude that
\begin{align*}
    \norm{\Pi(m\otimes q)}_2^2
    \geq
    \left(1-(2\gamma)^2\right)\norm{m\otimes q}_2^2=
    \left(1-\frac{\eps}{4}\right)\norm{m\otimes q}_2^2>
    (1-\eps)\norm{m\otimes q}_2^2.
\end{align*}
Thus the normalized mean pair violates the NO condition.  Since the rounding algorithm tests every
conditioning branch of the required length, it finds a pair with at least
this objective value.  Consequently, the relaxation cannot be feasible in
the NO case.

In order to conclude the case where $\eps > 1/n$, it remains to address the runtime. The grid has size $\abs{\Sigma}=n^{O(1)}$, and
\[
    T=O\left(\frac{\sqrt n}{\gamma}\right)
    =O\left(\sqrt{\frac n\eps}\right).
\]
The degree-$O(T)$ sum-of-squares feasibility problem can therefore be solved
in time
\[
    n^{O(T)}
    =
    n^{O\left(\sqrt{n/\eps}\right)}.
\]
Enumerating the conditioning branches in
\cref{algo:asym-final-bss-rounding} has the same running time. 

	We now return to the case where $\eps < 1/n$.
	In this case, the algorithm searches over all pairs $u,v$ in an
	$(\eps/20)$-net of $\mathbb S^{n-1}$. We first show completeness. Normalize
	a YES witness so that $\norm{x}_2=\norm{y}_2=1$, and choose $u,v$ in the
	net such that
	\[
	    \norm{x-u}_2,\norm{y-v}_2\leq\frac{\eps}{20}.
	\]
	Then
	\begin{align*}
	    \norm{u\otimes v-x\otimes y}_2
	    &\leq
	    \norm{(u-x)\otimes v}_2+
	    \norm{x\otimes(v-y)}_2\\
	    &\leq\frac{\eps}{10}.
	\end{align*}
	Since $\Pi(x\otimes y)=x\otimes y$ and $\Pi$ is a contraction, it follows
	that
	\[
	    \norm{\Pi(u\otimes v)}_2
	    \geq
	    1-\frac{\eps}{10}.
	\]
	Thus
	\[
	    \norm{\Pi(u\otimes v)}_2^2
	    \geq
	    \left(1-\frac{\eps}{10}\right)^2
	    >
	    1-\eps,
	\]
	so the algorithm outputs YES. On the other hand, in the NO case every
	pair of unit vectors $u,v$ satisfies
	\[
	    \norm{\Pi(u\otimes v)}_2^2\leq1-\eps,
	\]
	and therefore the algorithm outputs NO.

	Finally, an $(\eps/20)$-net of $\mathbb S^{n-1}$ has size
	$(O(1/\eps))^n$. Hence, the search over all pairs in the net takes time
	\[
	    (O(1/\eps))^{2n}.
	\]
	Since $\eps<1/n$, we have
	\[
	    (O(1/\eps))^{2n}
	    =
	    n^{O\left(\sqrt{n/\eps}\right)}.
	\]
	This proves the theorem in the remaining regime and completes the proof.
\end{proof}

\section{Best Separable State: $1$ vs $q/n$}

\begin{restatable}{theorem}{lowsoundnessthm}
    \label{thm:asym-bss-low-soundness}
    Let $1 < q < n/2$ and let $\Pi \in \mathbb{R}^{n^2 \times n^2}$ be a projector to a subspace of $\mathbb{R}^{n^2}$. Then there exists an algorithm that runs in time
    \[
        n^{O\left(\sqrt {q}\right)}
    \]
    and distinguishes between the following cases:
    \begin{itemize}
        \item \textbf{YES:} There exists a nonzero $x, y \in \mathbb{R}^n$ such that $\Pi \left(x \otimes y\right) = x \otimes y$.
        \item \textbf{NO:} For all $x,y \in \mathbb{R}^n$ we have that $\norm{\Pi(x \otimes y)}_2^2 \leq \frac q n \cdot \norm{x \otimes y}_2^2$.
    \end{itemize}
    Furthermore, in the \textbf{YES} case, the algorithm produces $\hat{x}, \hat{y}$ such that $\norm{\Pi(\hat{x} \otimes \hat{y})}_2^2 > \frac q n \cdot \norm{\hat{x} \otimes \hat{y}}_2^2$.
\end{restatable}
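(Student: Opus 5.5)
We follow the template of the proof of \cref{thm:asym-bss-close-to-one-main}, now with singular-vector rounding in place of mean rounding. The algorithm solves the degree-$O(T)$ sum-of-squares relaxation of $\mathcal A'(\beta)$ over a grid $\Sigma$ with $|\Sigma|=n^{O(1)}$, for $T=\Theta(\sqrt q)$ and a suitable constant $\beta$. The rounding enumerates all conditionings of at most $T$ coordinates. For each branch it outputs the top singular pair $(u,v)$ of $R=\pE' xy^\top$, together with two further candidates: the pair $(\pE' x/\norm{\pE' x}, w)$, where $w$ is the top eigenvector of $v\mapsto\iprod{(\pE' x/\norm{\pE' x})\otimes v,\Pi(\pE' x/\norm{\pE' x}\otimes v)}$, and the analogous pair with $x$ and $y$ swapped.

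Completeness is exactly as before: a discretized YES witness is a point mass satisfying $\mathcal A'(\beta)$. Soundness rests on the analogue of \cref{lem:singular-rounding-tech-overview} with slack. Because $\norm{(I-\Pi)\operatorname{vec}(R)}\leq n^{-\beta}$, we get
\[
\sigma_1(R)\leq\norm{\Pi(u\otimes v)}\norm R_F+n^{-\beta}.
\]
It therefore suffices to reach a branch where $\sigma_1(R)^2\geq(2q/n)\norm R_F^2$ and $\sigma_1(R)\geq n^{-O(1)}$.

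To reach such a branch, we first apply \cref{lem:mean-anchoring} to both $x$ and $y$ with a parameter $Q$ to be chosen. As in the warmup, the pinned rows and columns of $R$ give
\[
\sigma_1(R)^2\geq Q\max(\norm{m_x}^2,\norm{m_y}^2).
\]
If singular-vector rounding fails, then $\norm{C}_F^2\geq(1-q/n)\norm R_F^2\gtrsim (n/q)\,Q\,\norm{m_y}^2$, taking $\norm{m_x}\leq\norm{m_y}$. Compare this with \cref{lem:potential-growth-43}: there we had $\norm C_F^2\geq\gamma^2\norm{m_x}^2\norm{m_y}^2$, and here we have the stronger, asymmetric bound $\norm C_F^2\gtrsim (n/q)\,Q\,\norm{m_y}^2$. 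We then rerun \cref{lem:better-nonlinear-cov-bound} and \cref{lem:worse-linear-cov-bound} with $\Phi$ as in \cref{subsec:potential-growth}, which bound $\norm C_F^2$ by roughly $\Delta\cdot\frac{\norm{m_x}^2}{Q\norm{m_y}}\sqrt{\Delta+\norm{m_x}^2}$ and by $\Delta$. The goal is potential growth of the form $\Delta\gtrsim\Phi^{4/3}\cdot(\text{something in }n/q,Q)$. Telescoping $\Phi^{-1/3}$ as in the proof of \cref{lem:asymmetric-pinning}, and balancing $O(nQ)$ anchoring steps against the growth steps with $Q\asymp\sqrt q/n$, should give $O(\sqrt q)$ total pinnings.

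The main obstacle is the regime where $\norm{m_x}\ll\norm{m_y}$. There the bound $\Delta\geq\norm C_F^2$ gives growth only proportional to $\norm{m_y}^2$, and the factor $\norm{m_x}^2/(Q\norm{m_y})$ in \cref{lem:better-nonlinear-cov-bound} may not combine well. This is where the extra output candidates are needed. If $\norm{m_x}$ is tiny while $\norm R_F$ is large relative to $\sigma_1$, we try to show that either the candidate $(m_y/\norm{m_y},w)$ succeeds, or we can reweight the pseudodistribution by a degree-$2$ square $f(x,y)^2$, for example $\iprod{w,x}^2$ or $\iprod{m_y,y}^2$, so that the $x$-mean becomes large. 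The reweighted $\pE$ still satisfies $\mathcal A'(\beta)$ up to degree loss, and reweighting increases $\Phi$ multiplicatively. The constraint $\pE\,\Pi(x\otimes y)\approx\pE\, x\otimes y$ survives reweighting by polynomials in $y$ alone, and the resulting $\pE'(x\otimes y)$ lies near the range of $\Pi$. This ties the failure of both candidate outputs to an increase in $\Phi$. I would first attempt the case analysis $\norm{m_x}^2\gtrless\sqrt{q}\,\norm{m_y}^2/n$: the first case is handled by the covariance argument above, and the second by reweighting followed by the eigenvector candidate. Making the constants and the bookkeeping of the $n^{-\beta}$ slack close is the step I expect to be delicate.

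The running time is $|\Sigma|^{O(T)}=n^{O(\sqrt q)}$, both for solving the relaxation and for enumerating the branches.
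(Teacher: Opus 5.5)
\textbf{There is a genuine gap: the potential-growth step that removes the logarithm is not established.} The setup is fine: completeness, the slack version of singular-vector rounding, mean anchoring giving $\sigma_1(R)^2\ge Q\max(\norm{m_x}^2,\norm{m_y}^2)$, and the $n^{O(T)}$ accounting. But the obstacle is not where you locate it.

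The only consequence of singular-vector failure you use is $\norm{C}_F^2\gtrsim (n/q)\,Q\,\norm{m_y}^2$. Fed into \Cref{lem:worse-linear-cov-bound}, this gives $\Delta\gtrsim (n/q)Q\,\Phi$. That is only the warmup's multiplicative rate $1+1/\sqrt q$ when $Q\asymp\sqrt q/n$, so it costs a logarithmic factor too many steps (e.g.\ $O(\sqrt q\log n)$).

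\Cref{lem:better-nonlinear-cov-bound} does not rescue this. The $\Phi^{4/3}$ growth in \Cref{lem:potential-growth-43} works because its hypothesis $\norm{C}_F\ge\gamma\norm{m_x}\norm{m_y}$ contains $\norm{m_x}^2$. That factor cancels the $\norm{m_x}^2$ in \Cref{lem:better-nonlinear-cov-bound}, and \Cref{lem:worse-linear-cov-bound} then caps $\norm{m_x}^2\lesssim\Delta/(\gamma^2\norm{m_y}^2)$. Your bound has $Q$ in place of $\norm{m_x}^2$, so neither happens. For instance, when $\norm{m_x}\approx\norm{m_y}$ and $\Delta\le\norm{m_x}^2$, the nonlinear bound yields only $\Delta\gtrsim (n/q)Q^2$, which is weaker than the linear one.

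Your proposed case split is also vacuous up to constants. Anchoring forces $\norm{m_x}^2\ge Q\asymp\sqrt q/n\gtrsim\sqrt q\,\norm{m_y}^2/n$. So everything lands in the case you say is handled by the covariance argument, which is exactly the case that argument does not handle.

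\textbf{The missing idea is to split on $\sigma_1(R)$ versus $\frac12\norm{m_x}\norm{m_y}$.}
\begin{itemize}
\item If $\sigma_1(R)\ge\frac12\norm{m_x}\norm{m_y}$, failure of singular-vector rounding gives $\norm{C}_F\gtrsim\sqrt{n/q}\,\norm{m_x}\norm{m_y}$. This is precisely the hypothesis of \Cref{lem:potential-growth-43} with $\gamma\asymp\sqrt{n/q}$.
\item If $\sigma_1(R)<\frac12\norm{m_x}\norm{m_y}$, then $u^\top Rv=\norm{m_x}\norm{m_y}+\pE[XY]$ forces $\pE[XY]\le-\frac12\norm{m_x}\norm{m_y}$. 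Here $u,v$ are the normalized means and $X,Y$ the centered projections. In this regime $C$ cancels the rank-one mean part, and no pinning-based covariance bound of the needed shape is available.
\end{itemize}
The paper handles the second regime with explicit affine-square reweighings. These are $(a\bar{X}+b)^2$ with $(a,b)$ the eigenvectors of the $2\times 2$ matrix $H$ in \Cref{lem:affine-cusp-reweigh-gathers}, plus the auxiliary reweighs of \Cref{lem:consequence-of-affine-1,lem:consequence-of-affine-2}. It shows that one of them increases $\Phi$ \emph{additively} by $\Omega(Q)$, which requires a delicate analysis of the third-moment residual $\xi$.

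\textbf{You have not identified the role of the best-response candidates.} If neither mean exceeds $\theta\asymp\sqrt{q/n}$ (otherwise a best response already scores $\gtrsim q/n$), then $\Phi\lesssim q/n$. Hence additive steps of size $Q\asymp\sqrt q/n$ can occur only $O(\sqrt q)$ times. Your suggestion of reweighing by $\iprod{w,x}^2$ or $\iprod{m_y,y}^2$, with a claimed multiplicative increase of $\Phi$, comes with no mechanism and no analysis.

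\textbf{A smaller fix is also needed in the algorithm.} As you state it, your rounding enumerates only pinnings. The reweighings must also be enumerated, each drawn from a constant-size explicit family computable from the current moments, and charged against the SoS degree.
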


As in the previous section, we consider the natural SoS relaxation of BSS.
Since the rounding algorithm conditions on coordinate values, we work over a
finite grid. Since over this grid it may not be possible to satisfy the canonical constraint set exactly, even in the YES case, we instead use the relaxed constraint set
\begin{align}
\mathcal A'(\beta)=\left\{
\begin{gathered}
    1-n^{-\beta}\leq\norm{x}_2^2\leq1+n^{-\beta},\qquad
    1-n^{-\beta}\leq\norm{y}_2^2\leq1+n^{-\beta},\\
    \norm{(I-\Pi)(x\otimes y)}_2^2\leq n^{-2\beta}
\end{gathered}
\right\}. \label{eq:constraints}
\end{align}
In our application below we take $\beta=2$, and let $\Sigma$ be a grid of $[-1,1]$
whose mesh is a sufficiently small constant multiple of
$n^{-1/2-\beta}$. 

The decision algorithm checks feasibility of the degree-$C\sqrt q$ SoS
relaxation satisfying~\eqref{eq:constraints}, where $C$ is a sufficiently
large absolute constant.  This is the same feasibility test as in the
$1$ versus $1-\eps$ regime; only the degree and the rounding argument differ.

We once again show that the relaxation must be infeasible in the NO case by providing a rounding algorithm. However, compared to the $1$ vs $1-\eps$ regime, the final rounding step is slightly more involved. We use a combination of coordinate pinnings and reweights, but use multiple different final rounding steps. There are two key differences from the algorithm in~\cref{sec:bss-asym}: we require reweights in addition to coordinate pinnings, and we also consider multiple final rounding steps.

\begin{mdframed}
  \begin{algorithm}[Low-soundness BSS rounding]
    \label{algo:low-soundness-rounding}\mbox{}
    \begin{description}
    \item[Input:] A degree-$C\sqrt q$ pseudodistribution satisfying
    $\mathcal A'(\beta)$.
    \item[Operations:] For every branch obtained by performing at most
    $C\sqrt q$ operations, each of which is either a pinning or an affine-square reweighing as in \Cref{lem:affine-cusp-reweigh-gathers,lem:affine-consequences,lem:consequence-of-affine-2}, perform the following rounding steps.
    \begin{enumerate}
        \item Write $m=\pE x$, $z=\pE y$, and $R=\pE xy^\top$.
        \item If $m\neq0$, set $u=m/\norm m_2$ and let $v$ be the ``best response,'' given by a top
        eigenvector of
        \[
            (u\otimes I)^\top\Pi(u\otimes I).
        \]
        Test the product vector $u\otimes v$.
        \item If $z\neq0$, set $v=z/\norm z_2$ and let $u$ be the ``best response,'' given by a top
        eigenvector of
        \[
            (I\otimes v)^\top\Pi(I\otimes v).
        \]
        Test the product vector $u\otimes v$.
        \item If $R\neq0$, compute top left and right singular vectors of
        $R$ and test the resulting product vector.
    \end{enumerate}
    \item[Output:] The pair with the largest value of
    $\norm{\Pi(u\otimes v)}_2^2$ among all candidates.
    \end{description}
  \end{algorithm}
\end{mdframed}

\subsection{The rounding success condition}

We now describe sufficient conditions for at least one of the final rounding
steps to succeed.

\begin{lemma}[Low-soundness rounding]
\label{lem:low-soundness-rouding-success}
Let $\pE$ satisfy~\eqref{eq:constraints}, and write
\[
    m=\pE x,\qquad z=\pE y,\qquad R=\pE xy^\top.
\]
The two best-response candidates have scores at least
\[
    \norm m_2^2-O(n^{-\beta})
    \qquad\text{and}\qquad
    \norm z_2^2-O(n^{-\beta}),
\]
respectively. If $R\neq0$, the top-singular-vector candidate has score at
least
\[
    \left(\frac{\sigma_1(R)-n^{-\beta}}{\norm R_F}\right)_+^2.
\]
\end{lemma}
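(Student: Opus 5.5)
We prove the three bounds separately. Each one rests on the same basic fact: the constraint $\norm{(I-\Pi)(x\otimes y)}_2^2\le n^{-2\beta}$, combined with Jensen's inequality (as in the proof of \cref{lem:asym-final-mean-rounding}), gives
\[
\norm{(I-\Pi)\operatorname{vec}(R)}_2\le n^{-\beta},
\]
where $\operatorname{vec}(R)=\pE(x\otimes y)$. We write $\Pi\operatorname{vec}(R)=\operatorname{vec}(R)-e$ with $\norm e_2\le n^{-\beta}$.

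\textbf{Singular-vector candidate.} This follows the proof of \cref{lem:singular-rounding-tech-overview}, now keeping track of $e$. Let $u,v$ be top singular vectors of $R$. Then
\[
\sigma_1(R)=\iprod{u\otimes v,\operatorname{vec}(R)}=\iprod{\Pi(u\otimes v),\operatorname{vec}(R)}+\iprod{u\otimes v,e}.
\]
The first term is at most $\norm{\Pi(u\otimes v)}_2\norm R_F$, and the second is at most $n^{-\beta}$ because $u\otimes v$ is a unit vector. Rearranging gives
\[
\norm{\Pi(u\otimes v)}_2\ge \frac{\sigma_1(R)-n^{-\beta}}{\norm R_F}.
\]
Taking the positive part and squaring yields the claimed score.

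\textbf{Best-response candidates.} By symmetry it suffices to treat the candidate built from $m$; assume $m\ne0$ and set $u=m/\norm m_2$. The best response $v$ maximizes $\norm{\Pi(u\otimes v)}_2^2$ over unit $v$, since that quantity equals $v^\top(u\otimes I)^\top\Pi(u\otimes I)v$. So it is enough to exhibit one unit test vector $w$ with $\norm{\Pi(u\otimes w)}_2^2\ge\norm m_2^2-O(n^{-\beta})$.

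We take $w$ proportional to $R^\top u=\pE\iprod{u,x}\,y$. Then $u\otimes R^\top u=(uu^\top\otimes I)\operatorname{vec}(R)$, and the variational characterization gives
\[
\norm{\Pi(u\otimes w)}_2\ge \frac{\iprod{u\otimes w,\operatorname{vec}(R)}-n^{-\beta}}{1}=\norm{R^\top u}_2-n^{-\beta}.
\]
This is the same computation as in the singular-vector case, using $\iprod{u\otimes w,\operatorname{vec}(R)}=\iprod{w,R^\top u}$.

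It remains to lower-bound $\norm{R^\top u}_2$. Since $\norm{y}^2\le 1+n^{-\beta}$ holds as a constraint,
\[
\norm{R^\top u}_2\ge \iprod{R^\top u,y'}
\]
for any unit $y'$. A cleaner route is Cauchy--Schwarz in the other direction:
\[
\norm m_2=\iprod{u,\pE x}=\pE\iprod{u,x}\cdot 1 .
\]
We relate this to $R^\top u$ through $\pE\iprod{u,x}\norm y^2$. Pseudo-Cauchy--Schwarz gives
\[
\Bigl|\pE\iprod{u,x}(\norm y^2-1)\Bigr|\le O(n^{-\beta}),
\]
so
\[
\norm m_2\approx \pE\iprod{u,x}\norm y^2=\sum_j\pE\iprod{u,x}y_j\,y_j .
\]
This quantity is not obviously bounded by $\norm{R^\top u}$, so we expect this step to be the main obstacle.

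Our first attempt is to choose $w$ differently: take the whole pseudodistribution over $y$ weighted by $\iprod{u,x}$. The quadratic form $Q_u(v)=\norm{\Pi(u\otimes v)}^2$ has top eigenvalue $\lambda$ at least $\pE\, Q_u(y)/\pE\norm y^2$ whenever the weights are nonnegative. The weights are not nonnegative here, so instead we use the SoS-valid inequality
\[
\iprod{u\otimes y,\Pi(x\otimes y)}\le \tfrac12\bigl(\norm{\Pi(u\otimes y)}^2\norm{x}^2+\iprod{u,x}^2\norm y^2\bigr)\cdot(\text{suitable scaling}).
\]
Concretely, we aim to prove
\[
\lambda\,\pE\norm y^2\norm x^2\ \ge\ \pE\norm{\Pi(u\otimes y)}^2\norm x^2\ \gtrsim\ \pE\iprod{u,x}^2-O(n^{-\beta}),
\]
and then $\pE\iprod{u,x}^2\ge\norm m_2^2$ by Jensen.

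The middle inequality comes from writing
\[
\iprod{u,x}\norm y^2=\iprod{u\otimes y,x\otimes y}=\iprod{\Pi(u\otimes y),x\otimes y}+O(\text{residual})
\]
and applying the SoS Cauchy--Schwarz
\[
\pE\iprod{\Pi(u\otimes y),x\otimes y}\,\iprod{u,x}\le\bigl(\pE\norm{\Pi(u\otimes y)}^2\norm x^2\bigr)^{1/2}\bigl(\pE\iprod{u,x}^2\norm y^2\bigr)^{1/2}.
\]
Multiplying by $\iprod{u,x}$ gives $\pE\iprod{u,x}^2\norm y^2$ on the left. Dividing by its square root, and using $\norm x^2,\norm y^2=1\pm n^{-\beta}$, yields the bound. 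The residual term is controlled by the constraint on $\norm{(I-\Pi)(x\otimes y)}^2$ together with Cauchy--Schwarz, which produces the $O(n^{-\beta})$ loss. Finally, $\lambda\ge\norm m_2^2-O(n^{-\beta})$, and the best response $v$ attains $\lambda$.
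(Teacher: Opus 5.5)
Your final argument is correct and is essentially the paper's proof. The singular-vector bound is the same computation, and for the best-response bound you use the same identity $\langle u,x\rangle\|y\|^2=\langle\Pi(u\otimes y),x\otimes y\rangle+\langle u\otimes y,(I-\Pi)(x\otimes y)\rangle$, Cauchy--Schwarz, the eigenvalue bound $\pE\|\Pi(u\otimes y)\|^2\|x\|^2\le\lambda_{\max}(T_u)\,\pE\|x\|^2\|y\|^2$ (the paper's $\Tr(T_uY)\le\lambda_{\max}(T_u)\Tr(Y)$), and Jensen; the only difference is that you multiply the identity by $\langle u,x\rangle$ and divide, where the paper squares it.

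Two small cleanups. First, the Cauchy--Schwarz you wrote pairs $\|x\|$ with $\Pi(u\otimes y)$ and $\|y\|$ with $\langle u,x\rangle$. That grouping comes from the pointwise bound $|\langle\Pi(u\otimes y),x\otimes y\rangle|\le\|\Pi(u\otimes y)\|\,\|x\|\,\|y\|$, which uses non-polynomial quantities. For pseudo-expectations, take $F=\Pi(u\otimes y)$ and $G=\langle u,x\rangle(x\otimes y)$, and absorb the extra $\|x\|^2$ using the norm constraints; this also shows your ``$\gtrsim$'' has constant $1$ up to $1\pm O(n^{-\beta})$. Second, delete the abandoned $R^\top u$ and ``suitable scaling'' attempts. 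The $R^\top u$ route genuinely fails: with $x=e_1$ and $y=\pm e_1$ uniformly, $R^\top u=0$ while $\|m\|_2=1$.
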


\begin{proof}
Suppose first that $m\neq0$, let $u=m/\norm m_2$, and set
\[
    T_u=(u\otimes I)^\top\Pi(u\otimes I).
\]
Put $\rho=(I-\Pi)(x\otimes y)$.  Using
\[
    \iprod{u,x}\norm y_2^2
    =
    \iprod{\Pi(u\otimes y)}{x\otimes y}
    +\iprod{u\otimes y}{\rho},
\]
the norm constraints, the residual constraint, and Cauchy--Schwarz give
\[
    \pE\iprod{u,x}^2
    \leq \pE\norm{\Pi(u\otimes y)}_2^2+O(n^{-\beta}).
\]
Writing $Y=\pE yy^\top$, the first term on the right is
$\Tr(T_uY)$.  Since $Y\succeq0$ and $\Tr(Y)=1+O(n^{-\beta})$, it is at
most $\lambda_{\max}(T_u)+O(n^{-\beta})$.  Finally,
$\norm m_2^2\leq\pE\iprod{u,x}^2$.
The top eigenvector of $T_u$ is a best response and hence has score at least
$\norm m_2^2-O(n^{-\beta})$.  The same argument with $x$ and $y$ exchanged
proves the other best-response bound.

For the singular-vector candidate, let $u,v$ be top left and right singular
vectors of $R$.  Jensen's inequality and the residual constraint give
\[
    \norm{(I-\Pi)R}_F\leq n^{-\beta}.
\]
Identifying $R$ with a vector in $\mathbb R^n\otimes\mathbb R^n$, we have
\begin{align*}
    \sigma_1(R)
    &=\iprod{u\otimes v}{R}\\
    &\leq\iprod{\Pi(u\otimes v)}{R}+n^{-\beta}\\
    &\leq\norm{\Pi(u\otimes v)}_2\norm R_F+n^{-\beta}.
\end{align*}
Rearranging and using nonnegativity proves the claimed bound.
\end{proof}

\subsection{A low-soundness reweighing lemma}

	We now prove the key technical ingredient, a ``low-soundness'' reweighing lemma which will establish one of the desired win conditions in~\cref{lem:low-soundness-rouding-success}.

	For the purposes of clarity of exposition we will work with the exact constraint set instead of \eqref{eq:eps_constraints}. It is easy to see that the $1/\poly(n)$ error does not impact the correctness of the argument below.

	\begin{lemma}
        \label{lem:low-soundness-pinning-lemma}
        Let $0 < \theta < c$ for some sufficiently small constant $c$, and let $\pE$ be a pseudodistribution of degree $\Omega\left( \theta\sqrt{n} \right)$ satisfying \eqref{eq:constraints}. Then, there exists a sequence of $O(\theta\sqrt{n})$ operations, each of which is either a pinning (in the sense of the conditioning on the value of some coordinate), or an affine reweigh (for some efficiently computable affine function $p = p(x,y)$ with $\pE p^2 > 0$, it reweighs the distribution by $p^2$), such that the resulting pseudodistribution $\pE'$ satisfies the following.
        \begin{enumerate}[label=(\alph*)]
            \item $\|\pE' x\|$ and $\|\pE' y\|$ are both bounded from below by $c \frac{\theta}{\sqrt{n}}$.
            \item $\|\pE' xy^\top \|_F \gtrsim \frac{\theta}{\sqrt{n}}$.
            \item one of the following win conditions is satisfied:
            \begin{itemize}
                \item $\|\pE x\| \gtrsim \theta$,
                \item $\|\pE y\| \gtrsim \theta$, or
                \item denoting $R = \pE xy^\top$, we have $\sigma_1(R) \gtrsim \theta \|R\|_{F}$.
            \end{itemize}
        \end{enumerate}
	\end{lemma}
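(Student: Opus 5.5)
The plan mirrors the proof of \Cref{lem:asymmetric-pinning}: an anchoring phase followed by a potential-growth phase driven by $\Phi(u,v)=2(\|u\|^2+\|v\|^2+\|u\|\|v\|)$. I would take $\theta \asymp \sqrt{q/n}$, which makes the operation count $O(\theta\sqrt n)=O(\sqrt q)$.

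\textbf{Anchoring.} First apply \Cref{lem:mean-anchoring} to $x$ and to $y$ with $Q\asymp \theta/\sqrt n$. This costs $O(nQ)=O(\theta\sqrt n)$ pinnings. Afterwards the pinned coordinates contribute at least $Q$ to each of $\|\pE x\|^2$ and $\|\pE y\|^2$. Pinned coordinates stay pinned under later pinnings, and I will make sure later reweighings only multiply by polynomials in unpinned variables. This keeps (a) true throughout, with $\|\pE x\|,\|\pE y\|\ge \sqrt Q \gtrsim \theta/\sqrt n$ once $\theta\lesssim 1$ is arranged. Condition (b) also follows from anchoring. The rows of $R$ indexed by pinned $x$-coordinates equal $a_{I_x}(\pE y)^\top$, so
\[
\|R\|_F\ \ge\ \sigma_1(R)\ \ge\ \sqrt Q\,\|\pE y\|\ \ge\ Q \gtrsim \theta/\sqrt n,
\]
up to adjusting constants.

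\textbf{Growth phase.} Suppose all three win conditions in (c) fail. Then $\|m_x\|,\|m_y\|\lesssim\theta$ and $\sigma_1(R)\le\theta\|R\|_F$. As in the overview, subtracting the rank-one matrix $m_xm_y^\top$ gives
\[
\|C\|_F^2\ \ge\ \|R\|_F^2-\sigma_1(R)^2\ \ge\ (1-\theta^2)\|R\|_F^2\ \ge\ \tfrac12\|R\|_F^2 .
\]
Anchoring gives $\|R\|_F^2\ge \theta^{-2}\sigma_1(R)^2\ge \theta^{-2}Q\max(\|m_x\|^2,\|m_y\|^2)$. Hence $\|C\|_F^2\gtrsim (Q/\theta^2)\,\Phi$, up to the factor relating $\max$ to $\Phi$. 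This is a much stronger covariance lower bound than in the $1$ versus $1-\eps$ case, where we only had $\gamma^2\|m_x\|^2\|m_y\|^2$. Combining it with \Cref{lem:worse-linear-cov-bound} gives $\Delta\gtrsim (Q/\theta^2)\Phi=\Phi/(\theta\sqrt n)$. That rate costs a $\log$ factor. To avoid it, I would rerun the argument of \Cref{lem:better-nonlinear-cov-bound} and \Cref{lem:potential-growth-43}, using the mismatched-noise bound \Cref{lem:mismatched-noise-helps-more} to gain from the larger mean. This should give a growth law of the form $\Delta\gtrsim \Phi^{\alpha}\cdot(\text{rate})$ with $\alpha>1$. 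Integrating it, as in the proof of \Cref{lem:asymmetric-pinning}, would bound the number of steps by $O(\theta\sqrt n)$ starting from $\Phi_0\gtrsim Q$. The phase must end by the time $\Phi\gtrsim\theta^2$, since at that point a mean win condition holds.

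\textbf{Main obstacle and the reweighing.} The weak point is that \Cref{lem:better-nonlinear-cov-bound} needs $\|m_x\|\le\|m_y\|$ and a pinning of the larger side. Here the covariance bound is in terms of $\|R\|_F$, which need not be aligned with the coordinate basis, so a single coordinate pinning may fail to realize the averaged gain cheaply. This is where the affine reweighings come in. Instead of choosing a coordinate $j$ with probability $\pE y_j^2$ as in the tilted decomposition \eqref{eq:magic-l2-measure-decomposition}, I would reweigh by $\langle w,y\rangle^2$ for unit directions $w$, for example the top right singular vector of $C$ or an average over an orthonormal basis. The decomposition $\sum_w \pE[\langle w,y\rangle^2 f]=\pE f$ still holds, so \Cref{lem:tilted-control} and \Cref{lem:noise-helps} carry over verbatim. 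Reweighing in a chosen direction also lets the shift $\pE[\langle w,y\rangle^2 x]/\pE\langle w,y\rangle^2-m_x$ capture the full $\sigma$-spectrum of $C$ rather than only its column norms. I would first try to establish the one-step bound
\[
\Delta\ \gtrsim\ \theta^{-1}Q^{2/3}\Phi^{4/3}
\]
for the best pinning or reweighing, by the same Cauchy--Schwarz/convexity argument used in \Cref{lem:better-nonlinear-cov-bound}. I would also check separately that reweighings preserve the anchoring, by reweighing only with polynomials in unpinned coordinates. This verification, together with the degree accounting (each reweigh costs degree $2$), is the step I expect to be the most delicate.
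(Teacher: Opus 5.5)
\textbf{There is a genuine gap: your argument proves the lemma only up to a logarithmic factor.} Your anchoring phase, the checks of (a) and (b), and the Eckart--Young bound $\|C\|_F^2\ge(1-\theta^2)\|R\|_F^2\gtrsim(Q/\theta^2)\max(\|m_x\|^2,\|m_y\|^2)$ are correct. As you note, though, they give only the linear rate $\Delta\gtrsim(Q/\theta^2)\Phi$, hence $O(\theta\sqrt n\log(\theta\sqrt n))$ operations rather than $O(\theta\sqrt n)$. The step meant to remove the logarithm is a hope rather than an argument. It also targets the wrong obstacle. The difficulty is not basis alignment, since the averaging behind \Cref{lem:worse-linear-cov-bound} does not depend on the orientation of $C$. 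The difficulty is the sign of the correlation along the mean directions.

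\textbf{The missing idea is a case split on $\sigma_1(R)$ versus $\frac{1}{2}\|m_x\|\|m_y\|$.}
\begin{itemize}
\item If $\sigma_1(R)\ge\frac{1}{2}\|m_x\|\|m_y\|$, failure of the singular-value win condition gives $\|C\|_F\gtrsim\theta^{-1}\|m_x\|\|m_y\|$. This bound is quadratic in the means, unlike your linear one. So \Cref{lem:potential-growth-43} applies verbatim with $\gamma\asymp 1/\theta$, giving $\Delta\gtrsim\theta^{-2}Q^{2/3}\Phi^{4/3}$ and hence $O(\theta^2/Q)$ steps, with no reweighing needed.
\item If $\sigma_1(R)<\frac{1}{2}\|m_x\|\|m_y\|$, test $R$ against $u=m_x/\|m_x\|$ and $v=m_y/\|m_y\|$. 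This gives $\pE[XY]\le-\frac{1}{2}\|m_x\|\|m_y\|$ for $X=\langle x-m_x,u\rangle$, $Y=\langle y-m_y,v\rangle$ (\Cref{lem:corr-bound}). Here the cross-covariance cancels $m_xm_y^\top$ inside $R$. You then have no lower bound of the form $\|C\|_F\gtrsim\theta^{-1}\|m_x\|\|m_y\|$, and the pinning machinery gives nothing better than a linear rate.
\end{itemize}
The paper does not seek superlinear growth in the second case. Instead, \Cref{lem:affine-cusp-bs-lemma} shows that one of a few explicit affine-square reweighings increases $\Phi$ additively by $\Omega(Q)$. These are the eigenvector reweighings of \Cref{lem:affine-cusp-reweigh-gathers} in the mean direction, or the witnesses of \Cref{lem:consequence-of-affine-1,lem:consequence-of-affine-2}. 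Failure of the mean win conditions keeps $\Phi\lesssim\theta^2$, so there are only $O(\theta^2/Q)$ such steps. Proving this additive gain, including controlling the third-moment residual $\xi$, is the real content of the proof, and nothing in your proposal replaces it.

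\textbf{Your proposed reweighing would not supply it.} Reweighing by $\langle w,y\rangle^2$ moves $m_x$ by $\bigl(2\langle w,m_y\rangle\,Cw+\pE[\langle w,y-m_y\rangle^2(x-m_x)]\bigr)/\pE\langle w,y\rangle^2$. So for $w\perp m_y$ it sees only third moments, not the spectrum of $C$. Averaging over an orthonormal basis just reproduces the convexity bound of \Cref{lem:tilted-control,lem:noise-helps}. That bound is no stronger than what pinning gives and does not use the anti-correlation.

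\textbf{Your target growth law is also too weak.} Even if proved, $\Delta\gtrsim\theta^{-1}Q^{2/3}\Phi^{4/3}$ integrated as in \Cref{lem:asymmetric-pinning} from $\Phi_0\gtrsim Q$ gives $O(\theta/Q)=O(\sqrt n)$ steps for $Q\asymp\theta/\sqrt n$, not $O(\theta\sqrt n)$. You need the factor $\theta^{-2}$, which the case split above provides.
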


		As in the previous section, we will start by performing the mean anchoring of \Cref{lem:mean-anchoring}.
		Define the potential
		\[ \Phi(u,v) = 2 \left( \|u\|^2 + \|v\|^2 + \|u\| \cdot \|v\| \right) \mper \]
		Our goal will be to show that if none of the win conditions above are satisfied, then we may reweigh the distribution to increase the potential $\Phi(m_x , m_y)$.

		As we shall see shortly, if $\sigma_1(R) \gtrsim \|\pE x\| \|\pE y\|$, then we may pin a coordinate as in \Cref{lem:potential-growth-43} to grow the potential.
		To deal with the other scenario, we shall use the following lemma.

		\begin{restatable}{lemma}{affinecuspbs}
			\label{lem:affine-cusp-bs-lemma}
			Let $\pE$ be a pseudodistribution on variables $(x,y)$ satisfying the constraints \eqref{eq:constraints}. Further suppose that all coordinates in some sets of indices $I_{x}, I_{y} \subseteq [n]$ have already been pinned such that they contribute squared $\ell_2$ mass of at least $Q$. Denoting $R = \pE xy^\top$, suppose that
			\[ \sigma_1(R) \le \frac{1}{2} \cdot \|\pE x\| \|\pE y\| \mper \]
			Then, there exists an affine function $p$ such that on reweighing the pseudodistribution by $p^2$, if $\wt{m}_x$ and $\wt{m}_y$ are the updated means, then
			\begin{equation}
				\label{eq:bsbsbs}
				\Phi\left( \wt{m}_x , \wt{m}_y \right) - \Phi\left( m_x , m_y \right) \gtrsim Q \mper
			\end{equation}
		\end{restatable}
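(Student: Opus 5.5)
The plan is to exploit the fact that the hypothesis forces a strong \emph{negative} correlation between the projections of $x$ and $y$ onto their own mean directions, and then reweigh by the square of an affine function tilted along that correlation. Let $u = m_x/\|m_x\|$ and $v = m_y/\|m_y\|$. Then
\[
u^\top R v = \pE\langle u,x\rangle\langle v,y\rangle \le \sigma_1(R) \le \tfrac12 \|m_x\|\|m_y\|,
\]
while $\pE\langle u,x\rangle\cdot\pE\langle v,y\rangle = \|m_x\|\|m_y\|$. Hence
\[
\widetilde{\Cov}\bigl(\langle u,x\rangle,\langle v,y\rangle\bigr)\le -\tfrac12\|m_x\|\|m_y\|.
\]
Note that by the mean anchoring (\Cref{lem:mean-anchoring}) both $\|m_x\|^2,\|m_y\|^2\ge Q$.

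I will take the centered linear forms $A=\langle u,x\rangle-\|m_x\|$ and $B=\langle v,y\rangle-\|m_y\|$. I will reweigh by $p^2$ with $p = 1 - \lambda(aA + bB)$ for scalars $a,b,\lambda$ to be tuned; the first try is $a = 1/\|m_x\|$, $b = 1/\|m_y\|$. Writing $L = aA+bB$, the new $x$-mean is
\[
\wt m_x=\frac{m_x - 2\lambda\,\pE[Lx] + \lambda^2\pE[L^2x]}{1+\lambda^2\pE L^2},
\]
and similarly for $y$. The first-order term $-2\lambda\langle u,\pE[Lx]\rangle = -2\lambda\bigl(a\Var(A)+b\,\widetilde{\Cov}(A,B)\bigr)$ contains the favorable piece $-2\lambda b\,\widetilde{\Cov}(A,B)\ge \lambda\|m_x\|$. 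The unfavorable piece $-2\lambda a\Var(A)$ would be removed by symmetrization, so I will drop the $aA$ term and use $L=bB$ for $x$; symmetrically I use $aA$ for $y$. Concretely, I will first try $p = 1-\lambda(A/\|m_x\|+B/\|m_y\|)$ and check that the cross-covariance contribution dominates the variance contributions. If it does not, I will instead compare the two reweighings $1\pm\lambda L$. Their average reweights by $1+\lambda^2L^2$, so by convexity of $\Phi$ (as in \Cref{lem:noise-helps}) one sign gains at least what the symmetric part loses.

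Next, I will compute $\Phi(\wt m_x,\wt m_y)-\Phi(m_x,m_y)$ to first order in $\lambda$. Since $\nabla\Phi$ at $(m_x,m_y)$ has components of size $\gtrsim\|m_x\|+\|m_y\|$ along $u$ and $v$, the linear gain is $\gtrsim \lambda(\|m_x\|+\|m_y\|)\cdot\min(\|m_x\|,\|m_y\|)$. The second-order terms $\lambda^2\pE[L^2x]$, $\lambda^2\pE L^2\, m_x$, and the curvature of $\Phi$ are each bounded by $O(\lambda^2\pE L^2)$. These bounds use the norm constraints ($\|x\|,\|y\|\le 1+o(1)$, which is SoS-certifiable), so $\pE L^2\lesssim 1/\|m_x\|^2+1/\|m_y\|^2 \lesssim 1/Q$. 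Optimizing gives $\lambda\asymp Q\cdot\min(\|m_x\|,\|m_y\|)$ and a net gain of $\gtrsim Q\cdot\min(\|m_x\|,\|m_y\|)^2 \ge Q^2$. This is weaker than \eqref{eq:bsbsbs} by a factor of $Q$.

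The main obstacle is therefore to sharpen the second-order control, and this is where the pinned coordinates should enter. On $I_x\cup I_y$ the variables are fixed, so $A$ and $B$ depend only on the unpinned parts of $u$ and $v$. The same argument as in \Cref{lem:mismatched-noise-helps-more} then shows that the motion of the means is partly orthogonal to them, which rewards the cross term $\|m_x\|\|m_y\|$ of $\Phi$. For the sharpened bound I would try normalizing $L=B/\|m_y\|$ with $\Var(B)\le 1$, so that $\pE L^2\le 1/\|m_y\|^2$, and taking $\lambda$ of constant order times $\|m_y\|^2$. This makes the linear gain $\gtrsim\|m_x\|\|m_y\|\cdot\|m_y\|^2/\|m_y\|\cdot$(constant), while the quadratic loss stays at a constant fraction of it. Together with $\|m_x\|^2,\|m_y\|^2\ge Q$, this should yield a gain of order $Q$. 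Verifying that the constant in $\sigma_1(R)\le\frac12\|m_x\|\|m_y\|$ leaves enough room for this cancellation is the step I expect to require the most care.
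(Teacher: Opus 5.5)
Your first step, $\widetilde{\Cov}(A,B)\le-\frac12\|m_x\|\|m_y\|$, is exactly \Cref{lem:corr-bound}. After that there is a genuine gap: no step actually produces a gain of order $Q$, and the steps asserting a positive first-order gain are not valid.

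To first order in $\lambda$, the change in $\Phi$ under $p=1-\lambda L$ is $-4\lambda\,\pE[L\,\ell]$, where $\ell=(2\|m_x\|+\|m_y\|)A+(2\|m_y\|+\|m_x\|)B$ is the tangent linear form. This has no definite sign, and in absolute value it is at most $4|\lambda|\bigl(\pE L^2\cdot\widetilde{\Var}(\ell)\bigr)^{1/2}$.

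You cannot keep the favorable piece $-2\lambda b\,\widetilde{\Cov}(A,B)$ for $x$ while ``using $aA$ for $y$'': a single reweigh moves both means. Averaging over $1\pm\lambda L$ does not fix this, because it cancels \emph{all} first-order terms, favorable ones included. What remains is a competition at order $\lambda^2$ between the convexity gain and the third-moment drift $\lambda^2\bigl(\pE[L^2x]-\pE[L^2]\,m_x\bigr)$ (and its $y$-analogue), which you never control.

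The sharpened choice $L=B/\|m_y\|$ has the same flaw: it ignores the unfavorable move $-2\lambda\widetilde{\Var}(B)/\|m_y\|$ of $m_y$ along $v$. Nothing you use rules out near-cancellation. Take $\|m_x\|\ll\|m_y\|$, $B=-A/2$ and $\widetilde{\Var}(A)=2\|m_x\|\|m_y\|$. Then $\widetilde{\Cov}(A,B)=-\|m_x\|\|m_y\|$, but $\ell=\frac32\|m_x\|A$ exactly. Your $L=B/\|m_y\|$ then has net first-order gain $6\lambda\|m_x\|^2$, not the roughly $4\lambda\|m_x\|\|m_y\|$ coming from the $x$-side alone. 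With $\lambda\asymp\|m_y\|^2$ this is $O(\|m_x\|^2\|m_y\|^2)$, far below $Q$ when, e.g., $\|m_x\|^2\asymp Q$ and $\|m_y\|=o(1)$.

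This degenerate regime is precisely the one the paper is driven into in \Cref{lem:affine-consequences}. There the gain comes from reweighs that move $\langle m_x,u\rangle$ by $\Theta(\widetilde{\Var}(A)^{1/2})$. That can greatly exceed $\|m_x\|$ and cross $0$, where $\|m_x\|\|m_y\|$ has a kink, so a small-$\lambda$ expansion about $(m_x,m_y)$ cannot see it. In the paper the pinned coordinates enter this lemma only through $\|m_x\|\|m_y\|\ge Q$. Orthogonal motion as in \Cref{lem:mismatched-noise-helps-more} is not the mechanism, and your sketch of it does not address the third-moment drift either.

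The missing ingredient is a non-perturbative pair of reweighs that absorbs the third moment exactly. Assume $\|m_x\|\le\|m_y\|$, and let $\sigma_x^2=\widetilde{\Var}(A)$, $\sigma_y^2=\widetilde{\Var}(B)$, $\ol X=A/\sigma_x$, $\ol Y=B/\sigma_y$ and $\rho=\pE[\ol X\ol Y]$. The paper reweighs by $p_i^2$ with $p_i=a_i\ol X+b_i$. Here $(a_i,b_i)$ are orthonormal eigenvectors of $H=\begin{pmatrix}\pE\ol X^3&1\\1&0\end{pmatrix}$, with eigenvalues $\lambda_1>0$ and $\lambda_2=-1/\lambda_1$. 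These reweighs move $\langle m_x,u\rangle$ by exactly $\sigma_x\lambda_i$, and $\langle m_y,v\rangle$ by $\sigma_y\bigl(\rho\lambda_i+\frac{\lambda_i^2}{1+\lambda_i^2}\xi\bigr)$ (\Cref{lem:affine-cusp-reweigh-gathers}).

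Mixing the two with weights $\frac1{1+\lambda_1^2}$ and $\frac{\lambda_1^2}{1+\lambda_1^2}$ gives zero drift of $\langle m_x,u\rangle$, with second moment exactly $\sigma_x^2$. Convexity of $\Phi$ (\Cref{lem:noise-helps}) and AM--GM then give a gain $\ge\sigma_x^2+\rho^2\sigma_y^2\ge2|\widetilde{\Cov}(A,B)|\ge\|m_x\|\|m_y\|\ge Q$. This holds up to the residual $\xi=\pE[\ol Y\ol X^2]-\rho\,\pE\ol X^3$.

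Small $|\xi|$ is absorbed. Large $|\xi|$ needs a separate argument. The tangent reweigh of \Cref{lem:best-affine-reweigh-for-linear-fn} and the reweighs $(\ol X\pm W)/\sqrt2$, with $W=(\ol Y-\rho\ol X)/\sqrt{1-\rho^2}$, force the degenerate configuration of \Cref{lem:affine-consequences}. That configuration is then contradicted by evaluating the two eigen-reweighs, including the $\psi_-$ branch where $\langle m_x,u\rangle$ is pushed through $0$. Your proposal has no counterpart to either step. Its closing claim that the sharpened choice ``should yield'' a gain of order $Q$ is exactly where the proof is missing.
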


		Let us put the pieces together and establish \Cref{lem:low-soundness-pinning-lemma}.

		\begin{proof}[Proof of \Cref{lem:low-soundness-pinning-lemma}]
			Start by pinning coordinates as in \Cref{lem:mean-anchoring}, for some $Q$ to be decided. Let the means of this distribution be $m_x^0$ and $m_y^0$. As in \Cref{lem:asymmetric-pinning}, while none of the win conditions are satisfied, we shall sequentially reweigh/pin the distribution. Let $m_x^t$ and $m_y^t$ be the mean after $t$ steps of this, and denote $\Phi_t = \Phi(m_x^t , m_y^t)$.

			If
			\[ \sigma_1(R) \ge \frac{1}{2} \cdot \|\pE x\| \cdot \|\pE y\| \mcom \]
			then the failure of the final win condition implies that
			\[ \|R_{F}\| \gtrsim \frac{1}{\theta} \|\pE x\| \cdot \|\pE y\| \mper \]
			When $\theta$ is sufficiently small, we may relate this to the Frobenius norm of the cross-covariance as
			\[ \|C\|_{F} \ge \|R\|_{F} - \|\pE x\| \cdot \|\pE y\| \gtrsim \frac{1}{\theta} \|\pE x\| \|\pE y\| \mper \]
			We then use \Cref{lem:potential-growth-43} to grow $\Phi$ by $\Omega\left( \theta^{-2} Q^{2/3} \Phi^{4/3} \right)$.
			
			On the other hand, if
			\[ \sigma_1(R) \le \frac{1}{2} \cdot \|\pE x\| \cdot \|\pE y\| \mcom \]
			then use \Cref{lem:affine-cusp-bs-lemma} so
			\[ \Phi_{t+1} \ge \Phi_{t} + \Omega(Q) \mper \]

			To summarize, we have that while none of the win conditions are satisfied,
			\[ \Phi_{t+1} - \Phi_{t} \gtrsim \min\left\{ Q , \theta^{-2} Q^{2/3} \Phi_t^{4/3} \right\} \mper \]
			Now, by the same proof as \Cref{lem:asymmetric-pinning}, the number of increments of the second type is at most $\frac{Q^{-1/3}}{\theta^{-2} Q^{2/3}} = O\left( \frac{\theta^2}{Q} \right)$.
			The first win condition additionally implies that $\Phi_{t} \lesssim \theta^2$ at all steps, so the number of increments of the first type is also $O\left( \frac{\theta^2}{Q} \right)$.

			Thus, we have performed $O(Qn)$ many rounds of pinning in the first phase, and $O\left( \frac{\theta^2}{Q} \right)$ many rounds of pinning/reweighing in the second. Balancing the two terms by picking $Q \asymp \frac{\theta}{\sqrt{n}}$, the desideratum follows.
		\end{proof}

		The two subsequent subsections are dedicated to establishing \Cref{lem:affine-cusp-bs-lemma}, and we encourage the reader to skip ahead to \Cref{subsec:low-soundness-proof}, where we use the above to prove our main theorem \Cref{thm:asym-bss-low-soundness}.

	\subsection{A low-soundness potential growth lemma}

		In this section, we prove the main technical input to the low-soundness result.

		\affinecuspbs*

		Assume without loss of generality that $\|m_x\| \le \|m_y\|$.
		
		The plan will be as follows. We start by determining the reweighings that change the mean $m_x$ as much as possible; this is the content of \Cref{lem:affine-cusp-reweigh-gathers}. However, we need to establish that growing the mean of $x$ does not suddenly make the mean of $y$ decay. This is where our assumption on the top singular value of $R$ will enter the picture; this is the content of \Cref{lem:corr-bound}, and will establish either the norm of $m_x$ or that of $m_y$ must grow. Unfortunately, this plan will not work out exactly as stated, as there will be a (possibly non-negligible) error term in these calculations.

		Define the normalized means $u = \frac{m_x}{\|m_x\|}$ and $v = \frac{m_y}{\|m_y\|}$. In the proof, we will bound
		\[ \Phi(\wt{m}_x , \wt{m}_y) - \Phi(m_x , m_y) \ge \psi\left( \langle \wt{m}_x , u \rangle , \langle \wt{m}_y , v \rangle \right) - \psi( \|m_x\| , \|m_y\| ) \mcom \]
		for $\psi(w_1,w_2) \defeq 2(w_1^2 + w_2^2 + |w_1w_2|)$. That is, we only use the contribution to the potential growth coming from changes along the mean.

		Let us also introduce the symbolic random variables $X = \langle x - \pE x , u \rangle$, $Y = \langle y - \pE y , v \rangle$, their variance $\sigma_x^2 = \wt{\Var} X$ and $\sigma_y^2 = \wt{\Var} Y$, as well as their normalized versions $\ol{X} = \frac{X}{\sigma_x}$ and $\ol{Y} = \frac{Y}{\sigma_y}$.

		The affine function we construct will be of the form $a \ol{X} + b$. In the following lemma, we compute how the means change on reweighing by a function of this form.
		
		\begin{lemma}
			\label{lem:affine-cusp-reweigh-gathers}
			Consider the $2 \times 2$ matrix
			\[ H = \begin{pmatrix} \pE \ol{X}^3 & 1 \\ 1 & 0 \end{pmatrix} \mper \]
			Define $\lambda > 0$ by $\lambda - \frac{1}{\lambda} = \pE \ol{X}^3$, so $\lambda_1 = \lambda$ and $\lambda_2 = - \frac{1}{\lambda}$ are the two eigenvalues of $H$.
			Then, there are reweighings $p_1 = a_1 \ol{X} + b_1$ and $p_2 = a_2 \ol{X} + b_2$, with coefficients equal to the corresponding eigenvectors of $H$, such that if the updated means are $m_x^1$ and $m_x^2$, then
			\[ \frac{ \langle m^i_x - m_x , u \rangle }{ \sigma_x } = \lambda_i \mper \]
			Furthermore, under these reweighings,
			\[ \frac{\langle m^i_y - m_y , v \rangle}{ \sigma_y } = \pE\left[\ol{Y} \cdot \ol{X}\right] \cdot \lambda_i + \left( \frac{\lambda_i^2}{1 + \lambda_i^2} \right) \cdot \xi \mcom \]
			where
			\[ \xi = \pE\left[ \ol{Y} \cdot \ol{X}^2 \right] - \pE \left[ \ol{Y} \cdot \ol{X} \right] \cdot \pE\left[ \ol{X}^3 \right] \mper \]
		\end{lemma}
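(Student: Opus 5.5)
The plan is a direct computation of the reweighted first moments. Reweighing by $p^2$ with $p = a\ol{X}+b$ replaces $\pE g$ by $\pE[p^2 g]/\pE[p^2]$. Since $p$ is affine, this is a legitimate reweighing that costs degree $2$. I will assume $\sigma_x>0$, so that $\ol{X}$ is defined; otherwise $X$ is pseudo-deterministic and the lemma is vacuous. The proof uses only three moment facts, all immediate from the definitions:
\[
\pE \ol X = 0,\qquad \pE\ol X^2 = 1,\qquad \pE \ol Y = 0.
\]

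\textbf{Step 1 ($x$-mean).} Since $\langle x - m_x, u\rangle = \sigma_x \ol{X}$, we have $\langle m_x^{\mathrm{new}} - m_x, u\rangle/\sigma_x = \pE[p^2\ol X]/\pE[p^2]$. Expanding with the moment facts gives
\[
\pE[p^2] = a^2+b^2,\qquad \pE[p^2\ol X] = a^2\pE\ol X^3 + 2ab .
\]
Their ratio is exactly the Rayleigh quotient of $H$ at $(a,b)$, namely $(a,b)H(a,b)^\top/\|(a,b)\|^2$. So taking $(a_i,b_i)$ to be the eigenvectors of $H$ gives $\lambda_i$. Next I verify the spectrum of $H$. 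Its characteristic polynomial is $t^2 - (\pE\ol X^3)\,t - 1$. This has roots $\lambda$ and $-1/\lambda$, where $\lambda>0$ solves $\lambda - 1/\lambda = \pE\ol X^3$; the two roots have product $-1$. Finally, $\pE p_i^2 > 0$ because $\pE p^2 = a^2+b^2$ and eigenvectors are nonzero.

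\textbf{Step 2 ($y$-mean).} Similarly, $\langle m_y^{\mathrm{new}}-m_y, v\rangle/\sigma_y = \pE[p^2\ol Y]/\pE[p^2]$. Using $\pE\ol Y=0$, the numerator is
\[
\pE[p^2\ol Y] = a^2\,\pE[\ol Y\ol X^2] + 2ab\,\rho, \qquad \rho := \pE[\ol Y\ol X].
\]
The eigenvector equation for eigenvalue $t\in\{\lambda_1,\lambda_2\}$ has second row $a=tb$. Substituting, the ratio becomes
\[
\frac{t^2\,\pE[\ol Y\ol X^2] + 2t\rho}{1+t^2}.
\]
It remains to show this equals $\rho t + \frac{t^2}{1+t^2}\xi$. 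Multiply the target by $(1+t^2)$ and expand; the difference from the numerator above reduces to the identity $t^2\,\pE\ol X^3 = t^3 - t$. This is exactly $t - 1/t = \pE\ol X^3$ multiplied by $t^2$. The key observation is that this identity holds for \emph{both} eigenvalues: replacing $\lambda$ by $-1/\lambda$ leaves $t-1/t$ unchanged. So one algebraic check covers $i=1,2$ simultaneously.

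\textbf{Main obstacle.} There is no real obstacle; the only care needed is bookkeeping. One must use $\pE\ol Y = 0$ to kill the $b^2$ term, and must not treat the $y$-shift as a Rayleigh quotient of $H$ — it involves the different moment vector $(\pE[\ol Y\ol X^2],\rho)$. Rewriting the answer in the stated form via the eigenvalue relation is the step most prone to sign errors.
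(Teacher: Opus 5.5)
Your proposal is correct and follows essentially the same route as the paper: compute the reweighted first moments directly, recognize the $x$-shift as the Rayleigh quotient of $H$, and use the eigenvector relation $a_i=\lambda_i b_i$ for the $y$-shift. The only cosmetic difference is that the paper first splits $\pE[\ol Y\ol X^2]=\xi+\rho\,\pE\ol X^3$ and reuses the Rayleigh quotient for the $\rho$-part, whereas you substitute $a=tb$ first and then verify the identity using $t-1/t=\pE\ol X^3$; both amount to the same computation.
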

		This quantity $\xi$, which we will analyse in much more detail shortly, represents the component of $\ol{Y}$ that is not predicted by $\ol{X}$ in a linear fashion.
		\begin{proof}
			This is a routine calculation. We reproduce the details here for completeness.
			\begin{align}
				\frac{ \langle \wt{m}_x - m_x , u \rangle }{ \sigma_x } &= \frac{\pE \ol{X} \left( a \ol{X} + b \right)^2 }{ \pE \left(a \ol{X} + b\right)^2 } \nonumber \\
					&= \frac{ a^2 \pE \ol{X}^3 + 2ab }{ a^2 + b^2 } \label{eq:x-movement} \mper
			\end{align}
			It is easy to interpret the numerator as the quadratic form $\begin{pmatrix} a \\ b \end{pmatrix}^\top H \begin{pmatrix} a \\ b \end{pmatrix}$.
			We similarly have, for $(a_i,b_i)$ being an eigenvector of the matrix with eigenvalue $\lambda_i$,
			\begin{align*}
				\frac{ \langle \wt{m}_y - m_y , v \rangle }{ \sigma_y } &= \frac{ \pE \left[a^2 \ol{Y} \ol{X}^2 + 2ab\ol{X} \ol{Y}\right] }{a^2 + b^2} \\
					&= \pE[\ol X\ol Y] \cdot \frac{a_i^2\pE\ol X^3+2a_i b_i}{a_i^2+b_i^2} + \frac{a_i^2}{a_i^2+b_i^2}\xi \\
					&= \pE[\ol X\ol Y] \cdot \lambda_i + \left( \frac{\lambda_i^2}{1 + \lambda_i^2} \right) \cdot \xi \mper
			\end{align*}
		\end{proof}

		The quantity $\E\left[ \ol{Y} \cdot \ol{X} \right]$ arising above can easily be bounded from below.

		\begin{lemma}
			\label{lem:corr-bound}
			Under the assumptions of \Cref{lem:affine-cusp-bs-lemma},
			\[ \pE\left[ XY \right] \le -\frac{1}{2} \|m_x\| \cdot \|m_y\| \mper \]
		\end{lemma}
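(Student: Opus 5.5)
The bound comes from writing $\pE[XY]$ as a bilinear form in the uncentered second moment $R=\pE xy^\top$ and then applying the singular-value hypothesis of \Cref{lem:affine-cusp-bs-lemma}. The only objects used are the fixed unit vectors $u=m_x/\|m_x\|$ and $v=m_y/\|m_y\|$. These are well defined because mean anchoring (\Cref{lem:mean-anchoring}) gives $\|m_x\|^2,\|m_y\|^2\ge Q>0$. They are constants, not variables of the pseudodistribution.

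The first step expands the product. Since $X=\langle x,u\rangle-\langle m_x,u\rangle$ and $Y=\langle y,v\rangle-\langle m_y,v\rangle$, bilinearity and linearity of $\pE$ give
\[
\pE[XY]=\pE\bigl[\langle x,u\rangle\langle y,v\rangle\bigr]-\langle m_x,u\rangle\langle m_y,v\rangle .
\]
Here we use $\pE\langle x,u\rangle=\langle m_x,u\rangle$ and $\pE\langle y,v\rangle=\langle m_y,v\rangle$, so the two cross terms each cancel against the constant term. This manipulation only needs degree $2$, so it is valid for pseudoexpectations.

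The second step identifies the two terms. The first term is $u^\top(\pE xy^\top)v=u^\top R v$. By the choice of $u$ and $v$ we have $\langle m_x,u\rangle=\|m_x\|$ and $\langle m_y,v\rangle=\|m_y\|$, so
\[
\pE[XY]=u^\top R v-\|m_x\|\,\|m_y\| .
\]

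The third step uses the variational characterization of the top singular value: for unit vectors $u,v$, $u^\top R v\le\sigma_1(R)$. Combining this with the hypothesis $\sigma_1(R)\le\tfrac12\|m_x\|\|m_y\|$ yields
\[
\pE[XY]\le \tfrac12\|m_x\|\|m_y\|-\|m_x\|\|m_y\|=-\tfrac12\|m_x\|\,\|m_y\| .
\]

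I do not expect a real obstacle. The only care needed is that the computation involves just degree-$2$ pseudo-moments and fixed vectors, so every step is linear in $\pE$. The WLOG assumption $\|m_x\|\le\|m_y\|$ and the pinned sets $I_x,I_y$ are not needed beyond guaranteeing that $u$ and $v$ are defined.
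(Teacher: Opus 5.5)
Your proof is correct and is essentially the paper's own argument: expand $\pE[XY]=u^\top R v-\|m_x\|\,\|m_y\|$ using $\langle m_x,u\rangle=\|m_x\|$ and $\langle m_y,v\rangle=\|m_y\|$, then bound $u^\top R v\le\sigma_1(R)\le\tfrac12\|m_x\|\,\|m_y\|$. Your added remarks are also accurate: $u,v$ are well defined because $\|m_x\|^2,\|m_y\|^2\ge Q>0$, and only degree-$2$ pseudo-moments are used.
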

		\begin{proof}
			By the definition of the singular value, the assumption in \Cref{lem:affine-cusp-bs-lemma} implies that
			\[ u^\top R v \le \frac{1}{2} \| m_x \| \cdot \|m_y\| \mcom \]
			so
			\[ \pE\left[ \langle x , u\rangle \cdot \langle y , v\rangle \right] \le \frac{1}{2} \| m_x \| \cdot \| m_y\| \mper \]
			Consequently,
			\begin{align*}
				\pE \left[ XY \right] &= \pE\left[ \langle x,u\rangle \cdot \langle y,v\rangle \right] - \|m_x\| \|m_y\| \\
					&\le -\frac{1}{2} \|m_x\| \|m_y\| \mper \qedhere
			\end{align*}
		\end{proof}

		If $\xi = 0$ in \Cref{lem:affine-cusp-reweigh-gathers}, then we are done. Indeed, consider the random reweigh which chooses $p_1$ with probability $\frac{1}{1+\lambda^2}$ and $p_2$ with probability $\frac{\lambda^2}{1+\lambda^2}$---this choice makes $\E \langle \wt{m}_y - m_y , u \rangle = \E \langle \wt{m}_x - m_x , u \rangle = 0$. We may then use a one-dimensional version of \Cref{lem:noise-helps} to show that
		\begin{align}
			\E \Phi\left( \wt{m}_x , \wt{m}_y \right) - \Phi\left( m_x , m_y \right) &\ge \sigma_x^2 + \sigma_y^2 \cdot \pE\left[ \ol{Y} \cdot \ol{X} \right]^2 \nonumber \\
				&\ge \sigma_x^2 + \frac{ \pE\left[ XY \right]^2 }{ \sigma_x^2 } \nonumber \\
				&\gtrsim \left| \pE\left[ XY \right]  \right| \nonumber \\
				&\gtrsim \|m_x\| \cdot \|m_y\| \gtrsim Q \mper \label{eq:xi-0-proof}
		\end{align}

		Let us reintroduce the ``residual'' $\xi$, with the same random choice of reweigh from above, showing that if $\xi$ is small, then the desired result follows.

		\begin{lemma}
			Suppose that $|\xi| \ll \frac{\|m_x\|}{\sigma_y}$. Then, one of the affine reweighings in \Cref{lem:affine-cusp-reweigh-gathers} satisfies \eqref{eq:bsbsbs}.
		\end{lemma}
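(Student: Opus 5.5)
We keep the randomized reweigh from the $\xi=0$ case: choose $p_1$ with probability $\frac{1}{1+\lambda^2}$ and $p_2$ with probability $\frac{\lambda^2}{1+\lambda^2}$. Then we redo the computation in \eqref{eq:xi-0-proof}, now carrying the extra $\xi$ term. By \Cref{lem:affine-cusp-reweigh-gathers}, the $x$-displacement along $u$ is $\sigma_x\lambda_i$. The weights are chosen so that $\E\lambda_i = \frac{\lambda}{1+\lambda^2}-\frac{\lambda^2}{\lambda(1+\lambda^2)}=0$ and $\E\lambda_i^2 = 1$. So the $x$-movement is still mean zero, with second moment $\sigma_x^2$. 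The $y$-displacement along $v$ is $\sigma_y(\rho\lambda_i + \frac{\lambda_i^2}{1+\lambda_i^2}\xi)$, where $\rho=\pE[\ol X\ol Y]$. Its mean is no longer zero. It equals $\sigma_y\xi\cdot\E\frac{\lambda_i^2}{1+\lambda_i^2} = \sigma_y\xi\cdot\frac{2\lambda^2}{(1+\lambda^2)^2}$, which has absolute value at most $\sigma_y|\xi|/2$.

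First, decompose the displacement into a mean part and a noise part. Write the new projected means as $(\|m_x\| + \bx, \|m_y\| + \delta + \by)$, where $\delta=\E[\text{$y$-shift}]$ and $\bx,\by$ are mean-zero. Apply the one-dimensional version of \Cref{lem:noise-helps} around the base point $(\|m_x\|, \|m_y\|+\delta)$. This gives
\[ \E\psi - \psi(\|m_x\|,\|m_y\|) \ge \bigl[\psi(\|m_x\|,\|m_y\|+\delta)-\psi(\|m_x\|,\|m_y\|)\bigr] + \E[\bx^2+\by^2]. \]
The noise term is handled as in \eqref{eq:xi-0-proof}. We have $\E\bx^2=\sigma_x^2$, and $\E\by^2 \ge \frac12\sigma_y^2\rho^2 - O(\sigma_y^2\xi^2)$ by $(a+b)^2\ge a^2/2-b^2$, since $\frac{\lambda_i^2}{1+\lambda_i^2}\le 1$. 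Combining with \Cref{lem:corr-bound}, $\sigma_x^2+\frac12\sigma_y^2\rho^2 \gtrsim |\pE XY| \gtrsim \|m_x\|\|m_y\|$, by AM-GM.

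Second, bound the loss from the drift $\delta$. Since $\psi$ is smooth in the second argument away from zero, $|\psi(\|m_x\|,\|m_y\|+\delta)-\psi(\|m_x\|,\|m_y\|)| \lesssim |\delta|(\|m_y\|+\|m_x\|)+\delta^2 \lesssim \sigma_y|\xi|\,\|m_y\| + \sigma_y^2\xi^2$. The hypothesis $|\xi|\ll \|m_x\|/\sigma_y$ gives $\sigma_y|\xi|\|m_y\| \ll \|m_x\|\|m_y\|$. It also gives $\sigma_y^2\xi^2\ll\|m_x\|^2\le\|m_x\|\|m_y\|$, using the WLOG assumption $\|m_x\|\le\|m_y\|$. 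The same bound absorbs the $O(\sigma_y^2\xi^2)$ loss in the noise term. Hence the expected gain is $\gtrsim \|m_x\|\|m_y\|$. By the mean anchoring of \Cref{lem:mean-anchoring}, both norms are at least $\sqrt Q$, so the gain is $\gtrsim Q$. Some $p_i$ achieves at least the expectation, and since $\Phi\ge\psi$ on the new means (as set up before \Cref{lem:affine-cusp-reweigh-gathers}), that $p_i$ satisfies \eqref{eq:bsbsbs}.

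The main obstacle is the bookkeeping in the second step. The drift must be compared against $\|m_x\|\|m_y\|$ and not against $\sigma_y^2\rho^2$, which could be small. We must also check that the cross term $2\E[\bx\by]$ does not matter. It does not: \Cref{lem:noise-helps} applies to correlated noise because it uses only convexity of $(w_1,w_2)\mapsto(|w_1|+|w_2|)^2$ and exact expansion of the squares. One further point to check is that $\psi$ uses absolute values, so $\|m_y\|+\delta$ stays positive. This holds because $|\delta|\le \sigma_y|\xi|/2\ll\|m_x\|\le\|m_y\|$.
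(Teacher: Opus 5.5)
Your proof is correct and follows the paper's argument. It uses the same random mixture of $p_1,p_2$ with weights $\frac{1}{1+\lambda^2},\frac{\lambda^2}{1+\lambda^2}$, and the same main term $\sigma_x^2+\sigma_y^2\rho^2\gtrsim|\pE XY|\gtrsim\|m_x\|\|m_y\|$ via \Cref{lem:corr-bound}; the $\xi$-errors are then absorbed using $\sigma_y|\xi|\ll\|m_x\|\le\|m_y\|$ and $\|m_x\|\|m_y\|\ge Q$. Your one refinement, centering the $y$-drift and reapplying \Cref{lem:noise-helps} at the shifted base point, lets convexity absorb the cross terms involving $\xi\,\E[\lambda_i\omega_i]$, which the paper instead expands and controls by AM--GM, so it is tidier bookkeeping of the same estimate.
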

		\begin{proof}
			It is no longer true, as in the manipulations in \eqref{eq:xi-0-proof}, that $\E \langle \wt{m}_y - m_y , v\rangle = 0$, but we nevertheless have above collection of equations with the extra error terms
			\begin{align}
				&\E \Phi\left( \wt{m}_x , \wt{m}_y \right) - \Phi\left( m_x , m_y \right) \nonumber \\
				&\qquad\ge \sigma_x^2 + \sigma_y^2 \cdot \pE[\ol{X}\ol{Y}]^2 \nonumber \\
				&\qquad\qquad + 2\xi \cdot \frac{\lambda^2}{(1+\lambda^2)^2} \left( 2 (2\|m_y\| + \|m_x\|) \sigma_y + \left( \lambda - \frac{1}{\lambda} \right) \left( \sigma_x \sigma_y + 2 \pE[\ol{X}\ol{Y}] \sigma_y^2 \right) \right) \nonumber \\
				&\qquad\qquad + 2 \xi^2 \cdot \frac{\lambda^2}{(1+\lambda^2)^2} \cdot \sigma_y^2 \nonumber \\
				&\qquad\gtrsim \sigma_x^2 + \sigma_y^2 \cdot \pE[\ol{X}\ol{Y}]^2 \\
				&\qquad\qquad + 2\xi \cdot \frac{\lambda^2}{(1+\lambda^2)^2} \left( 2 (2\|m_y\| + \|m_x\|) \sigma_y + \left( \lambda - \frac{1}{\lambda} \right) \left( \sigma_x \sigma_y + 2 \pE[\ol{X}\ol{Y}] \sigma_y^2 \right) \right) \mper \label{eq:monstrosity}
			\end{align}
			Let us condense the second term down. First, we have $\frac{\lambda^2}{(1+\lambda^2)^2} \le \frac{1}{4}$, and $\left| \frac{\lambda^2}{(1+\lambda^2)^2} \cdot \left( \lambda - \frac{1}{\lambda} \right) \right| \le \frac{1}{2}$. Thus,
			\begin{align*}
				&\left| 2\xi \cdot \frac{\lambda^2}{(1+\lambda^2)^2} \left( 2 (2\|m_y\| + \|m_x\|) \sigma_y + \left( \lambda - \frac{1}{\lambda} \right) \left( \sigma_x \sigma_y + 2 |\pE[\ol{Y} \cdot \ol{X}]| \sigma_y^2 \right) \right) \right| \\
				&\qquad\qquad\lesssim |\xi| \cdot \left( (2\|m_y\| + \|m_x\|) \sigma_y + \left( \sigma_x \sigma_y + 2 |\pE[\ol{Y} \cdot \ol{X}]| \sigma_y^2 \right) \right) \\
				&\qquad\qquad\lesssim |\xi| \cdot \left( \|m_y\| \sigma_y + \left( \sigma_x \sigma_y + 2 \pE[\ol{Y} \cdot \ol{X}] \sigma_y^2 \right) \right) \mper
			\end{align*}
			We next show that the parenthesised terms may be discarded. First, \Cref{lem:corr-bound} yields that 
			\[ \|m_x\| \cdot \|m_y\| \lesssim |\E[XY]| = \sigma_x \sigma_y \cdot |\E[\ol{X}\ol{Y}]| \lesssim \sigma_x^2 + \E[\ol{X}\ol{Y}]^2 \sigma_y^2 \mper \]
			We may now apply the AM-GM inequality on the parenthesised terms so for some small constant $\eps$,
			\[ |\xi| \cdot \left(\sigma_x \sigma_y + \sigma_y^2 \left| \pE[\ol{Y}\ol{X}] \right|\right) \lesssim \eps \left(\sigma_x^2 + \E[\ol{X}\ol{Y}]^2 \sigma_y^2\right) + \frac{1}{\eps} \xi^2 \sigma_y^2 \mper \]
			Therefore, if $|\xi| \sigma_y \ll \|m_x\|$, then we may simplify the above as
			\begin{align*}
				&\E \Phi\left( \wt{m}_x , \wt{m}_y \right) - \Phi\left( m_x , m_y \right) \\
					&\ge \underbrace{ \left(\sigma_x^2 + \sigma_y^2 \cdot \pE[\ol{X}\ol{Y}]^2\right)\left( 1 - C \eps \right) }_{\gtrsim \|m_x\| \cdot \|m_y\|} - \underbrace{|\xi| \sigma_y \|m_y\|}_{\ll \|m_x\| \|m_y\|} - \underbrace{\xi^2 \sigma_y^2}_{\ll \|m_x\|^2} \\
					&\gtrsim \|m_x\| \|m_y\| \ge Q \mper
			\end{align*}
			Here, the second inequality uses the same manipulations from \Cref{eq:xi-0-proof}.
		\end{proof}

		The final part of the proof, which we execute in \Cref{subsec:cusp}, will establish that if the $\xi$ term cannot be absorbed into the first, then it is easy to get a gain of $Q$.
		
		\begin{restatable}{lemma}{largexilemma}
			\label{lem:home-stretch-affine-cusp}
			Suppose that $|\xi| \gtrsim \frac{\|m_x\|}{\sigma_y}$. Then, there is an affine reweighing satisfying \eqref{eq:bsbsbs}.
		\end{restatable}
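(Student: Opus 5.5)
The plan is to keep the ansatz of \Cref{lem:affine-cusp-reweigh-gathers} but enlarge the family of affine functions, and to lower bound the gain in $\Phi$ through the convexity of $\psi$. Since $\psi(w_1,w_2)=2(w_1^2+w_2^2+|w_1w_2|)$ is convex and differentiable near $(\|m_x\|,\|m_y\|)$ (both coordinates are positive), we have
\[
\psi(w)-\psi(\|m_x\|,\|m_y\|)\ \ge\ (4\|m_x\|+2\|m_y\|)\,\Delta_x+(4\|m_y\|+2\|m_x\|)\,\Delta_y ,
\]
where $\Delta_x=\langle \wt m_x-m_x,u\rangle$ and $\Delta_y=\langle \wt m_y-m_y,v\rangle$. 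It therefore suffices to find an affine reweighing, or a random mixture of affine reweighings as in \eqref{eq:xi-0-proof}, whose expected first-order gain is $\gtrsim \|m_x\|\|m_y\|$. This suffices because $\|m_y\|\ge\|m_x\|$ and $\|m_x\|^2\ge Q$ by the mean anchoring of \Cref{lem:mean-anchoring}, so $\|m_x\|\|m_y\|\ge Q$.

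First I rewrite $\xi=\pE[\ol Y\,W]$ with $W=\ol X^2-\pE[\ol X^3]\,\ol X-1$. This uses $\pE \ol Y=0$, and $W$ is the residual of $\ol X^2$ after projecting onto $\mathrm{span}\{1,\ol X\}$. Next I take the mixture from \eqref{eq:xi-0-proof}: $p_1$ with probability $\frac1{1+\lambda^2}$ and $p_2$ with probability $\frac{\lambda^2}{1+\lambda^2}$. By \Cref{lem:affine-cusp-reweigh-gathers}, under this mixture $\E\Delta_x=0$ and the linear part $\pE[\ol X\ol Y]\lambda_i$ of $\Delta_y$ averages to zero. What remains is
\[
\E\Delta_y=\frac{2\lambda^2}{(1+\lambda^2)^2}\,\sigma_y\,\xi .
\]
When $\xi>0$ and $\lambda=\Theta(1)$, i.e.\ $|\pE\ol X^3|=O(1)$, the first-order gain is therefore $\gtrsim \|m_y\|\,\sigma_y\,\xi\gtrsim \|m_y\|\,\|m_x\|$ by the hypothesis $|\xi|\gtrsim\|m_x\|/\sigma_y$, and we are done.

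Two bad cases remain, and I expect them to be the main obstacle. In the first, the third moment $\pE\ol X^3$ is large, so $\lambda$ is large and the weight $\lambda^2/(1+\lambda^2)^2$ degenerates. Here I would not mix. Instead I would use $p_1$ alone, which moves $\langle m_x,u\rangle$ by $\sigma_x\lambda$, or the tilted family $p=\ol X+b$, for which
\[
\Delta_x=\frac{\sigma_x(\pE\ol X^3+2b)}{1+b^2},\qquad
\Delta_y=\frac{\sigma_y\bigl(2b\,\pE[\ol X\ol Y]+\pE[\ol Y\ol X^2]\bigr)}{1+b^2}.
\]
I would then choose $b$ to balance these two moves. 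I would bound the possibly negative term $\pE[\ol X\ol Y]$ using \Cref{lem:corr-bound} together with $|\pE[XY]|\le\sigma_x\sigma_y$, and bound $\pE[\ol Y\ol X^2]$ by $\xi$ plus a controlled term.

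In the second bad case $\xi<0$, and squares of affine functions of $\ol X$ alone can only push $\Delta_y$ in the wrong direction along $v$. There I would first try reweighing by an affine function of both variables, $p=\ol X+c\,\ol Y+b$. The cross term $2c\,\pE[\ol Y\cdot \ol Y\ol X]$ then contributes to $\Delta_y$ with a sign that can be chosen through $c$. I would pick $c=\Theta(1)$ with the appropriate sign and use the large $|\xi|$ together with Cauchy--Schwarz on $\pE[\ol Y W]$ to show that $\Delta_y\gtrsim |\xi|\sigma_y$ up to terms absorbable as in \eqref{eq:monstrosity}. Failing that, I would fall back on second-order (convexity) gains, as in \Cref{lem:noise-helps}. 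The key input for that fallback is that $|\xi|\gtrsim\|m_x\|/\sigma_y$ forces the variance of $\Delta_y$ across the two reweighings $p_1,p_2$ to be $\gtrsim\|m_x\|^2$, which at least yields a gain of order $Q$ directly.
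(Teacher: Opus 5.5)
\textbf{Verdict: the proposal has a genuine gap.} Your mixture computation is correct ($\E\Delta_x=0$ and $\E\Delta_y=\frac{2\lambda^2}{(1+\lambda^2)^2}\sigma_y\xi$). Combined with the tangent-plane bound for $\psi$, it settles the sub-case $\xi>0$, $|\pE\ol X^3|=O(1)$. The other two cases are only announced, and the one concrete claim you make for them is false. When $\xi<0$, the mixture's $y$-increment is not mean-zero, so \Cref{lem:noise-helps} does not apply. The first-order term $(4\|m_y\|+2\|m_x\|)\frac{2\lambda^2}{(1+\lambda^2)^2}\sigma_y\xi$ is then a \emph{loss} of order at least $\|m_x\|\|m_y\|$ (for $\lambda=\Theta(1)$). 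A variance of $\Delta_y$ of order $\|m_x\|^2$ cannot offset it, since $\|m_x\|$ may be much smaller than $\|m_y\|$. Likewise, for $p=\ol X+c\ol Y+b$ the term $2c\,\pE[\ol X\ol Y^2]$ has no lower bound. You also do not show how Cauchy--Schwarz on $\pE[\ol Y W]$ (with your residual $W$) would give a \emph{lower} bound $\Delta_y\gtrsim|\xi|\sigma_y$. The ``choose $b$ to balance'' plan for large $|\pE\ol X^3|$ is also unverified.

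\textbf{Why first-order reasoning cannot close the $\xi<0$ case.} For $p=a\ol X+b$ one has $\Delta_y=\beta\Delta_x+\omega\sigma_y\xi$, where $\beta=\rho\sigma_y/\sigma_x$ and $\omega=a^2/(a^2+b^2)$. So the tangent-plane gain is $\Delta_x\bigl[(4\|m_x\|+2\|m_y\|)+\beta(4\|m_y\|+2\|m_x\|)\bigr]+(4\|m_y\|+2\|m_x\|)\,\omega\sigma_y\xi$. In the regime your argument must handle, where the auxiliary reweighings fail, \Cref{lem:affine-consequences}(ii) gives $\beta=-\frac12+O(\delta)$. Then the bracket is only $O(\|m_x\|+\delta\|m_y\|)$, while the last term is negative.

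Nothing in your argument excludes, for example, the following configuration: $\pE\ol X^3=0$ (so $\lambda_{1,2}=\pm1$), $\beta=-\frac12$, $\|m_x\|\ll\|m_y\|$, $\sigma_x^2=c'\|m_x\|\|m_y\|$, and $\sigma_y\xi=-c\|m_x\|$ with $c>\frac34c'$. There, even the exact $\psi_+$-increments of both $p_1$ and $p_2$ are about $(\frac32c'-2c)\|m_x\|\|m_y\|<0$. The lemma still holds in this configuration, but only because $p_2$ moves $\langle m_x,u\rangle$ to $\|m_x\|-\sigma_x<0$. At that point $\psi\ge\psi_-(w)=2(w_1^2+w_2^2-w_1w_2)$, which gives a gain of order $\sigma_x\|m_y\|\gg\|m_x\|\|m_y\|$.

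\textbf{The missing idea is this kink of $|w_1w_2|$, which is exactly what the paper uses.} The paper argues by contradiction, assuming all five explicit reweighings fail. \Cref{lem:affine-consequences} then gives $\|m_x\|\lesssim\delta\|m_y\|$, $\beta\approx-\frac12$, $\sigma_x\lesssim\sqrt\delta\,\|m_y\|$, $|\xi|\sigma_y\lesssim\sqrt\delta\,\sigma_x$ and $\sigma_x^2\gtrsim\|m_x\|\|m_y\|$. Evaluating $p_1$ via $\psi_+$ gives $\sigma_x^2\lambda_1^2\lesssim\delta\|m_y\|^2$. Evaluating $p_2$ via $\psi_-$ gives $|\lambda_2|\sigma_x\lesssim\|m_x\|$. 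Hence $|\lambda_1|\sigma_x=\sigma_x^2/(|\lambda_2|\sigma_x)\gtrsim\|m_y\|$, a contradiction. This argument does not depend on the sign of $\xi$ or on the size of $\pE\ol X^3$, so no case split is needed. Your proposal lacks both the consequences-of-failure step and the use of the non-smooth branch $\psi_-$.
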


		Before doing this, we exhibit numerous consequences of the failure of every affine reweigh.

	\subsection{Consequences of affine failure}

		Recall
		\[ \xi = \pE\left[ \ol{Y} \ol{X}^2 \right] - \pE\left[ \ol{Y} \ol{X} \right] \cdot \pE\left[ \ol{X}^3 \right] \mper \]
		Let us express $\ol{Y} = \rho \ol{X} + \sqrt{1-\rho^2} W$, where $\rho = \E[ \ol{X} \ol{Y} ]$, so $\E[\ol{X} W] = 0$. We then have the far simpler expression
		\begin{equation}
			\label{eq:xi-expression}
			\xi = \sqrt{1-\rho^2} \pE\left[ \ol{X}^2 W \right] \mper
		\end{equation}
		
		In this section, we establish the following consequences of the failure of every affine function.

		\begin{lemma}
			\label{lem:affine-consequences}
			Suppose that there is no affine reweigh satisfying \eqref{eq:bsbsbs}, in the sense that for some small constant $\delta$, for any affine function $p$, if $\wt{m}_x$ and $\wt{m}_y$ are the updated means, then
			\begin{equation}
				\label{eq:quantitative-no-progress}
				\Phi(\wt{m}_x , \wt{m}_y) - \Phi(m_x , m_y) < \delta Q \mper
			\end{equation}
			Further assume that
			\[ |\xi| \gtrsim \frac{\|m_x\|}{\sigma_y} \]
			Then,
			\begin{enumerate}[label=(\roman*)]
				\item $\displaystyle \frac{\|m_x\|}{\|m_y\|} \lesssim \delta$,
				\item $\rho = \left( - \frac{1}{2} + O(\delta) \right) \frac{\sigma_x}{\sigma_y}$,
				\item $\displaystyle\sigma_x \lesssim \sqrt{\delta} \cdot \|m_y\|$,
				\item $\displaystyle |\xi| \lesssim \sqrt{\delta} \cdot \frac{\sigma_x}{\sigma_y}$, and
				\item $\sigma_x^2 \gtrsim \|m_x\| \|m_y\| \mper$
			\end{enumerate}
		\end{lemma}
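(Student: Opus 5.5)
The plan is to turn the no-progress hypothesis \eqref{eq:quantitative-no-progress} into quantitative inequalities by testing it against a few explicit affine reweighings. The main tools are the formulas of \Cref{lem:affine-cusp-reweigh-gathers}, the correlation bound of \Cref{lem:corr-bound}, and the lower bound $\|m_x\|\|m_y\| \geq Q$ from mean anchoring. Throughout, I will lower bound $\Phi(\wt m_x,\wt m_y)-\Phi(m_x,m_y)$ by $\psi(\langle \wt m_x,u\rangle,\langle \wt m_y,v\rangle)-\psi(\|m_x\|,\|m_y\|)$ and expand $\psi$ to second order in the displacements.

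\textbf{Step 1: items (v) and (iii) from the randomized pair $p_1,p_2$.} Mix the two eigen-reweighings with probabilities $\frac{1}{1+\lambda^2}$ and $\frac{\lambda^2}{1+\lambda^2}$. As in \eqref{eq:xi-0-proof}, the $x$-displacement has mean zero and second moment $\sigma_x^2$. If some realization yields gain below $\delta Q \le \delta\|m_x\|\|m_y\|$, then the expansion in \eqref{eq:monstrosity} must be dominated by its $\xi$-linear cross term, since the quadratic part is $\gtrsim \|m_x\|\|m_y\|$. Roughly, the $y$-displacement contributes $(4\|m_y\|+2\|m_x\|)\cdot(\text{shift})$, and the average shift is $\frac{\lambda^2}{(1+\lambda^2)}\xi\sigma_y$ times a factor of order one.

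This observation gives (v) directly: by \Cref{lem:corr-bound}, $\|m_x\|\|m_y\| \lesssim \sigma_x\sigma_y|\rho| \lesssim \sigma_x^2+\rho^2\sigma_y^2$. One then argues that $\rho^2\sigma_y^2 \lesssim \sigma_x^2$; otherwise a single reweighing moving $y$ along the correlated direction gains $\gtrsim \rho^2\sigma_y^2$. Given (v), item (iii) should follow from the observation that reweighing by $p=\ol X+t$, for a suitable $t$ of constant order, moves $\langle m_x,u\rangle$ by $\Theta(\sigma_x)$. Such a move contributes a gain of order $\sigma_x\|m_y\|$ through the mixed term $\|u\|\|v\|$, while the $y$-side is controlled. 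No-progress then forces $\sigma_x\|m_y\| \lesssim \delta Q + \ldots$, and after squaring against (v) this should yield $\sigma_x\lesssim\sqrt\delta\|m_y\|$.

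\textbf{Step 2: item (i).} Combine (v) and (iii): $\|m_x\|\|m_y\|\lesssim\sigma_x^2\lesssim\delta\|m_y\|^2$, which gives $\|m_x\|/\|m_y\|\lesssim\delta$.

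\textbf{Step 3: item (ii).} Use \Cref{lem:corr-bound}, $\rho\sigma_x\sigma_y\le -\frac12\|m_x\|\|m_y\|$, together with an upper bound on $|\rho|$ from no-progress. Consider the reweighing that pushes $x$ along $+u$ by roughly $\sigma_x$. This also moves $y$ along $v$ by $\rho\sigma_y\cdot\Theta(1)$ plus a $\xi$-term, and hence changes $\Phi$ by about $4\|m_y\|(\rho\sigma_y+\tfrac{\sigma_x}{2})+\ldots$. Here the factor $\frac12$ in front of $\sigma_x$ comes from the mixed term, whose coefficient is half that of $\|m_y\|^2$. Both signs of the push must give gain below $\delta Q$, so the linear-in-$\|m_y\|$ part must nearly cancel: $\rho\sigma_y = -\tfrac12\sigma_x + O(\delta)\sigma_x$. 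This is (ii).

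\textbf{Step 4: item (iv).} This follows from the same cancellation. Since $\xi$ enters the $y$-shift with coefficient $\sigma_y\lambda^2/(1+\lambda^2)$, near-cancellation in both eigen-directions bounds $|\xi|\sigma_y\|m_y\|$ by the residual. That residual is $\lesssim \delta Q+\sigma_x^2 \lesssim \sqrt\delta\,\sigma_x\|m_y\|$ by (iii), which gives $|\xi|\lesssim\sqrt\delta\,\sigma_x/\sigma_y$.

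\textbf{Main obstacle.} I expect Step 3 to be hardest. Getting the exact constant $-\frac12$ requires carefully tracking the first-order terms of $\psi$, in particular the derivative $2(2w_2+w_1)$ in the $w_2$ direction versus $2(2w_1+w_2)$ in the $w_1$ direction. It also requires checking that the second-order terms and the $\xi$ contribution, which is only assumed $\gtrsim\|m_x\|/\sigma_y$, are genuinely lower order. I would first try the one-parameter family $p=\ol X+t$ with $t\to\pm\infty$, which corresponds to infinitesimal tilts, and read off the first-order condition.
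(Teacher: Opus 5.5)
Your proposal has a genuine gap, located at exactly the cusp this lemma is about.

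\textbf{The first-order argument for (iii) fails.} In Step~1 you derive (iii) by claiming that an affine reweigh in $\ol{X}$ which moves $\langle m_x,u\rangle$ by $\Theta(\sigma_x)$ gains $\Theta(\sigma_x\|m_y\|)$ through the mixed term ``while the $y$-side is controlled.'' Such a reweigh, with $x$-shift $\sigma_x\lambda$ as in \Cref{lem:affine-cusp-reweigh-gathers}, also shifts $\langle m_y,v\rangle$ by $\rho\sigma_y\lambda+\sigma_y\omega\xi$, where $\omega=\lambda^2/(1+\lambda^2)$. Using $\|m_x\|\ll\|m_y\|$, the first-order change of $\Phi$ is therefore about $2\|m_y\|(\sigma_x+2\rho\sigma_y)\lambda+4\|m_y\|\sigma_y\omega\xi$. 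In the regime of (ii) the $\lambda$-term cancels, and this is the very cancellation your Step~3 relies on. So Step~1 assumes the $y$-motion is negligible while Step~3 concludes it exactly offsets the mixed-term gain; the two steps are inconsistent. What survives at first order is the $\xi$-term, which is a loss when $\xi<0$. The second-order analysis in \Cref{lem:home-stretch-affine-cusp} is needed precisely because no first-order gain of size $\sigma_x\|m_y\|$ exists here.

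\textbf{The same problem affects your other steps.}
\begin{itemize}
\item The reweigh you claim gains $\gtrsim\rho^2\sigma_y^2$ (used for (v)) and the near-cancellation bound on $|\xi|$ in Step~4 both need the $\xi$-linear terms in \eqref{eq:monstrosity} to be harmless. That fails when $|\xi|$ is large.
\item Your infinitesimal tilts $p=\ol{X}+t$ only push $\xi$ to second order in $1/t$. There it is joined by $\pE\ol{X}^3$ and $\pE[\ol{Y}\ol{X}^2]$, which are not bounded a priori, so you cannot extract the $O(\delta)$ precision in (ii).
\item You never use the hypothesis $|\xi|\gtrsim\|m_x\|/\sigma_y$, yet it is what produces (i) and (iii).
\end{itemize}

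\textbf{The missing ideas} are two test reweighs whose effect does not depend on the sign or size of third moments.

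First, take the tangent functional $\ell=(2\|m_x\|+\|m_y\|)\langle u,x\rangle+(2\|m_y\|+\|m_x\|)\langle v,y\rangle$. It satisfies $\pE\ell=\Phi$, and $\ell(a,b)^2\le\Phi\cdot\psi(a,b)$ by Cauchy--Schwarz in the quadratic form of $\psi$. \Cref{lem:best-affine-reweigh-for-linear-fn} supplies an affine $p$ with $|\pE[p^2\ell]|\ge\sqrt{\pE\ell^2}$, whatever the third-moment entry is, so no-progress forces $\wt{\Var}\,\ell\le\delta Q\Phi$. Writing $\ol{Y}=\rho\ol{X}+\sqrt{1-\rho^2}\,W$, this gives $\sigma_y\sqrt{1-\rho^2}\lesssim\sqrt{\delta Q}$ and the near-cancellation \eqref{eq:main-cusp-equation}. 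Together with \Cref{lem:corr-bound} it gives (v).

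Second, the reweighs $(\ol{X}\pm W)/\sqrt2$ have $x$-means differing by $2\sigma_x\pE[\ol{X}^2W]$. Each new $\langle\wt{m}_x,u\rangle^2$ is at most $\Phi+\delta Q\lesssim\|m_y\|^2$, so $|\pE[\ol{X}^2W]|\lesssim\|m_y\|/\sigma_x$.

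Only at this point does the large-$\xi$ hypothesis enter. Through $\xi=\sqrt{1-\rho^2}\,\pE[\ol{X}^2W]$ it gives $\|m_x\|\sigma_x\lesssim\sqrt{\delta Q}\,\|m_y\|$, hence (iii) using $\|m_x\|\ge\sqrt{Q}$. Then (i) follows as in your Step~2, (ii) follows from \eqref{eq:main-cusp-equation} once (i) and (v) are in hand, and (iv) follows from the same chain of bounds combined with (v).
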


		Before proceeding, we introduce the following lemma which describes how much an affine reweigh can shift an affine function by.

		\begin{lemma}
			\label{lem:best-affine-reweigh-for-linear-fn}
			For any linear function $\ell$, we have
			\[
				\max_{\substack{\text{affine reweigh $p$} \\ \pE p^2 = 1}} \left| \pE\left[ p^2 \ell \right]  \right| \ge \sqrt{ \pE \ell^2 } \mper
			\]
			Furthermore, there is an efficiently computable affine function $p$ witnessing this inequality.
		\end{lemma}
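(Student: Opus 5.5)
The plan is to reduce the statement to a $2\times 2$ symmetric eigenvalue problem. Reweighing by an affine function $p$ of $\ell$ gives a quadratic form in $p$. Choosing an orthonormal basis (with respect to $\pE$) for the relevant two-dimensional space of affine functions turns the maximization into computing the spectral norm of a symmetric $2\times 2$ matrix. That norm is at least the length of any column, and the first column has length exactly $\sqrt{\pE \ell^2}$.

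\textbf{Setup.} Write $\mu = \pE \ell$ and $\sigma^2 = \pE \ell^2 - \mu^2 \ge 0$; nonnegativity holds because the pseudodistribution is PSD on degree-$2$ squares.
\begin{itemize}
\item If $\sigma = 0$, take $p = 1$. Then $\pE p^2 = 1$ and $|\pE p^2 \ell| = |\mu| = \sqrt{\pE \ell^2}$, so we are done.
\item Otherwise set $Z = (\ell - \mu)/\sigma$. Then $\pE Z = 0$ and $\pE Z^2 = 1$, so $\{1, Z\}$ is orthonormal in the inner product $\langle f, g\rangle = \pE fg$.
\end{itemize}
We restrict to $p = a + bZ$, which is affine in $(x,y)$ because $\ell$ is linear. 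For such $p$ we have $\pE p^2 = a^2 + b^2$.

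\textbf{The quadratic form.} Expanding $\ell = \mu + \sigma Z$ gives
\[
\pE\left[(a+bZ)^2 \ell\right] = \begin{pmatrix} a \\ b\end{pmatrix}^\top K \begin{pmatrix} a \\ b\end{pmatrix}, \qquad K = \begin{pmatrix} \mu & \sigma \\ \sigma & \mu + \sigma \pE Z^3\end{pmatrix}.
\]
Each entry of $K$ comes from a single moment:
\begin{itemize}
\item the $(1,1)$ entry is $\pE \ell = \mu$;
\item the off-diagonal entry is $\pE Z\ell = \sigma$;
\item the $(2,2)$ entry is $\pE Z^2 \ell = \mu + \sigma\,\pE Z^3$.
\end{itemize}
This uses degree-$3$ moments, so the pseudodistribution needs degree at least $3$, which is available. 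The same computation appears in \Cref{lem:affine-cusp-reweigh-gathers} with $\ell = \ol X$.

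\textbf{Conclusion.} Maximizing $|\pE p^2 \ell|$ over $\pE p^2 = 1$ within this family is maximizing $|w^\top K w|$ over unit $w\in\R^2$. Since $K$ is symmetric, this equals $\|K\|_{\mathrm{op}}$, attained at a top (in absolute value) eigenvector. Moreover
\[
\|K\|_{\mathrm{op}} \ge \|K e_1\| = \sqrt{\mu^2+\sigma^2} = \sqrt{\pE \ell^2},
\]
which proves the inequality. The witness $p = a + bZ$, with $(a,b)$ that eigenvector, is efficiently computable. It only requires $\pE \ell$, $\pE \ell^2$, $\pE \ell^3$ (all read off the moment tensor) and a $2\times 2$ eigendecomposition.

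The only subtle points are the degenerate case $\sigma = 0$, handled above, and checking that reweighing by $p^2$ is legitimate, i.e.\ that $\pE p^2 = a^2+b^2 > 0$; this holds for any unit $(a,b)$. I do not expect a real obstacle; the key observation is to use the orthonormal basis $\{1, Z\}$ so that the denominator $\pE p^2$ becomes the identity form.
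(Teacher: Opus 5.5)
Your proposal is correct and matches the paper's argument: write $\ell = \mu + \sigma Z$ with $Z$ centered and normalized, restrict to $p \in \operatorname{span}\{1, Z\}$ so that $\pE p^2 = a^2+b^2$, and lower bound the operator norm of the resulting symmetric $2\times 2$ matrix by the length of its column $(\mu,\sigma)$, which is $\sqrt{\pE \ell^2}$. The one addition is your explicit treatment of the degenerate case $\sigma = 0$ via $p = 1$, which the paper leaves implicit.
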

		\begin{proof}
			Express $\ell = h \sigma + \mu$, with $\pE h = 0$ and $\pE h^2 = 1$. We restrict our attention to $p$ of the form $p = ah + b$. We then have
			\begin{align*}
				\max_{p} \left| \pE\left[ p^2 \ell \right]  \right| &\ge \max_{(a,b)} \left| \begin{pmatrix} a \\ b \end{pmatrix}^\top \begin{pmatrix} \mu & \sigma \\ \sigma & \pE[h^2 \ell] \end{pmatrix} \begin{pmatrix} a \\ b \end{pmatrix} \right| \\
					&= \left\| \begin{pmatrix} \mu & \sigma \\ \sigma & \pE[h^2 \ell] \end{pmatrix} \right\|_{\mathrm{op}} \\
					&\ge \left\| \begin{pmatrix} \mu & \sigma \\ \sigma & \pE[h^2 \ell] \end{pmatrix} \begin{pmatrix} 1 \\ 0 \end{pmatrix} \right\| \\
					&= \sqrt{ \mu^2 + \sigma^2 } = \sqrt{ \pE \ell^2 } \mper \qedhere
			\end{align*}
		\end{proof}

		\begin{lemma}
			\label{lem:consequence-of-affine-1}
			Under the assumptions of \Cref{lem:affine-consequences},
			\begin{enumerate}[label=(\roman*)]
				\item $\displaystyle\sigma_y \sqrt{ 1 - \rho^2 } \lesssim \sqrt{\delta Q}$ and
				\item $\displaystyle\sigma_x^2 \gtrsim \|m_x\| \cdot \|m_y\|\mper$	
			\end{enumerate}
			If either of these inequalities fails, a witness affine function that violates \eqref{eq:quantitative-no-progress} is the affine reweigh described by \Cref{lem:best-affine-reweigh-for-linear-fn} for the linear function $\left( 2\|m_x\| + \|m_y\| \right) X + \left( 2\|m_y\| + \|m_x\| \right) Y$.
		\end{lemma}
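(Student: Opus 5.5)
We prove the contrapositive: if either inequality fails, the affine reweigh from \Cref{lem:best-affine-reweigh-for-linear-fn} applied to
\[ \ell = (2\|m_x\|+\|m_y\|)X + (2\|m_y\|+\|m_x\|)Y \]
increases $\Phi$ by at least $\delta Q$, contradicting \eqref{eq:quantitative-no-progress}.

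\emph{Step 1: reduce the potential change to $\pE[p^2\ell]$.} Write $\wt{m}_x = m_x + \Delta_x$ and $\wt{m}_y = m_y + \Delta_y$ for the shifted means after reweighing by $p^2$ with $\pE p^2=1$. Project onto the current mean directions, as in the reduction to $\psi$ from the previous subsection. This gives $\langle \Delta_x,u\rangle = \pE[p^2X]$ and $\langle\Delta_y,v\rangle=\pE[p^2Y]$. Expand
\[ \psi(\|m_x\|+s,\|m_y\|+t)-\psi(\|m_x\|,\|m_y\|) = 2(2\|m_x\|+\|m_y\|)s + 2(2\|m_y\|+\|m_x\|)t + 2(s^2+t^2+st), \]
valid when the shifted arguments stay nonnegative. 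The quadratic part $s^2+t^2+st$ is nonnegative, so the gain is at least $2\pE[p^2\ell]$. Because $\ell$ is centered, replacing $p$ by the reflected affine function flips the sign of the linear part. We may therefore choose the sign so that $\pE[p^2\ell]\geq\sqrt{\pE\ell^2}$. If a shifted argument becomes negative, the direct bound $\Phi\ge\psi$ of the new projections still gives a gain of order $s^2$, and that case only helps.

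\emph{Step 2: lower bound $\pE\ell^2$.} Set $A=2\|m_x\|+\|m_y\|$ and $B=2\|m_y\|+\|m_x\|$. Since $\|m_x\|\le\|m_y\|$, we have $B\asymp\|m_y\|$ and $A\ge\|m_y\|$. Decompose $Y=\sigma_y\rho\ol X+\sigma_y\sqrt{1-\rho^2}W$ with $W\perp\ol X$. Then
\[ \pE\ell^2 = (A\sigma_x+B\sigma_y\rho)^2 + B^2\sigma_y^2(1-\rho^2) \geq B^2\sigma_y^2(1-\rho^2). \]
If (i) fails, this gives $\pE\ell^2\gtrsim\|m_y\|^2\,\delta Q$. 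The gain is then $\gtrsim\|m_y\|\sqrt{\delta Q}$, which is $\ge\delta Q$ because $\|m_y\|^2\ge Q$ from anchoring, and in fact it exceeds $\delta Q$ by a large constant once constants are tuned.

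\emph{Step 3: item (ii).} By \Cref{lem:corr-bound}, $|\pE XY|=\sigma_x\sigma_y|\rho|\geq\tfrac12\|m_x\|\|m_y\|$. Suppose (ii) fails, so $\sigma_x^2\ll\|m_x\|\|m_y\|$. Then $\sigma_y|\rho|\gtrsim \|m_x\|\|m_y\|/\sigma_x$, which is much larger than $\sigma_x$. So the $B\sigma_y\rho$ term dominates $A\sigma_x$ up to the constant ratio $A/B\le 3$. Concretely,
\[ |A\sigma_x+B\sigma_y\rho| \geq B\sigma_y|\rho|-A\sigma_x \gtrsim \|m_y\|\cdot\frac{\|m_x\|\|m_y\|}{\sigma_x}, \]
and this is $\gg \|m_y\|\sqrt{\|m_x\|\|m_y\|}\ge\|m_y\|\sqrt{Q}\cdot$const. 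Hence $\sqrt{\pE\ell^2}\gg\delta Q$, again a contradiction.

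\emph{Main obstacle.} The delicate step is Step 1: justifying that the second-order terms and the projection-onto-$u,v$ lower bound do not eat the linear gain. It also requires confirming that the sign of $p$ can be chosen, using $\pE\ell=0$ and the reflection $p\mapsto$ its counterpart built from the $2\times2$ matrix with $\mu=0$, whose eigenvalues are $\tfrac12(c\pm\sqrt{c^2+4\sigma^2})$. One eigenvalue has absolute value at least $\sigma$ with each sign available up to the smaller-magnitude eigenvalue. If only the negative sign achieves size $\sigma$, we will use the quadratic term $s^2$ to absorb it, since a large negative shift of the mean component also increases $\psi$ once the norms cross zero. Failing that, we will accept a gain of $\gtrsim\sigma^2$ via the convexity argument of \Cref{lem:noise-helps} applied to the symmetric mixture of both eigen-reweighs.
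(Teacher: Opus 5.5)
\textbf{Gap: Step 1's sign choice.} The proposal breaks down at Step 1. For a centered $\ell=\sigma h$ (with $\pE h=0$, $\pE h^2=1$, $\kappa=\pE h^3$), the sign cannot be chosen. Replacing $p=ah+b$ by $-ah+b$ or by $-p$ does not change the set of attainable values $\pE[p^2\ell]=\sigma(\kappa a^2+2ab)$, $a^2+b^2=1$. These are the Rayleigh quotients of $\sigma\begin{pmatrix}\kappa&1\\1&0\end{pmatrix}$. The positive eigenvalue of that matrix, $\tfrac{\sigma}{2}(\kappa+\sqrt{\kappa^2+4})\approx\sigma/|\kappa|$, is much smaller than $\sigma$ when $\kappa$ is very negative. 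Even for an honest distribution, no reweighting can push $\pE\ell$ above $\sup\ell$, and for a negatively skewed centered $\ell$ this can be far below $\sqrt{\pE\ell^2}$. So the first-order gain $2\sqrt{\pE\ell^2}$ that your Steps 2 and 3 rely on is not available in general; this $\lambda$ versus $-1/\lambda$ asymmetry is exactly what the paper's cusp analysis has to fight.

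Your first repair does not help. After a shift with $As+Bt=\mu_2<0$, the only general lower bound on the quadratic term is $\mu_2^2/(A^2+B^2)$. That beats the first-order loss $2|\mu_2|$ only when $|\mu_2|\gtrsim\|m_y\|^2$, and ``crossing zero'' needs shifts comparable to the means themselves.

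\textbf{What does work is a second-order gain.} The paper extracts it with no sign issue at all. It uses the \emph{uncentered} tangent function $\ell(\langle x,u\rangle,\langle y,v\rangle)$, whose pseudo-expectation is $\Phi$. It then notes that $\ell(a,b)^2\le\Phi\,\psi(a,b)$ for \emph{all} $(a,b)$, by Cauchy--Schwarz in the inner product given by $\begin{pmatrix}2&1\\1&2\end{pmatrix}$. Combined with \Cref{lem:best-affine-reweigh-for-linear-fn}, this gives $\Phi\cdot\max_p\psi\ge\pE\ell^2=\Phi^2+\Var(\ell)$. That is a gain of at least $\Var(\ell)/\Phi$. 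Since $\Phi\asymp\|m_y\|^2$, no progress forces $\Var(\ell)\lesssim\delta Q\|m_y\|^2$.

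Your second repair can be made to give the same bound, with two adjustments:
\begin{itemize}
\item Mix the two eigen-reweighs with weights proportional to $|\mu_2|$ and $\mu_1$ (not equally), so that the expected change in $\ell$ is zero.
\item Use $2(s^2+t^2+st)\ge(As+Bt)^2/(A^2+B^2)$.
\end{itemize}
The expected gain is then $\mu_1|\mu_2|/(A^2+B^2)=\Var(\ell)/(A^2+B^2)$. The normalization by $A^2+B^2\asymp\|m_y\|^2$ is essential: an unnormalized gain $\gtrsim\Var(\ell)$, as you wrote, gives no contradiction in Step 2 when $\|m_y\|^2\ll1$.

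\textbf{Redoing Steps 2 and 3.} With the quadratic gain, both steps must be redone, but they go through:
\begin{itemize}
\item Your $W$-component bound yields (i).
\item For (ii), your estimate $\Var(\ell)\ge(A\sigma_x+B\sigma_y\rho)^2\gtrsim\|m_y\|^2(\|m_x\|\|m_y\|)^2/\sigma_x^2$ gives a gain $\gg\|m_x\|\|m_y\|\ge Q$ when (ii) fails.
\end{itemize}
The paper handles (ii) differently: it tests $\ell$ against $X$ to get $|\tfrac{A}{B}\sigma_x^2+\pE[XY]|\lesssim\sigma_x\sqrt{\delta Q}$, and then invokes \Cref{lem:corr-bound}.
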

		\begin{proof}
			Consider the ``tangent'' linear function
			\[ \ell(a,b) = \left( 2\|m_x\| + \|m_y\| \right) a + \left( 2 \|m_y\| + \|m_x\| \right) b \mcom \]
			so $\pE \ell(\langle u,x\rangle , \langle v,y\rangle) = \Phi\left( m_x , m_y \right)$. Then, Cauchy--Schwarz implies that for any $(a,b)$, shortening $\Phi = \Phi(m_x,m_y)$,
			\begin{equation}
				\label{eq:tangent-cauchy-schwarz}
				\ell(a,b)^2 \le \Phi \cdot \psi(a,b) \mper	
			\end{equation}
			Now, we have
			\begin{align*}
				\Phi \cdot \max_{\substack{\text{affine reweigh $p$} \\ \pE p^2 = 1}} \psi\left( \wt{m}_x , \wt{m}_y \right) &\stackrel{\eqref{eq:tangent-cauchy-schwarz}}{\ge} \max_{\substack{\text{affine reweigh $p$} \\ \pE p^2 = 1}} \left(\pE\left[ p^2 \ell\left( \langle x,u\rangle , \langle y,v\rangle \right) \right]\right)^2 \\
					&\ge \pE\left[ \ell\left( \langle x,u\rangle , \langle y,v\rangle \right)^2 \right] \\
					&= \Phi^2 + \wt{\Var} \ell\left( \langle x,u\rangle , \langle y,v\rangle \right) \mcom
			\end{align*}
			where the second inequality is \Cref{lem:best-affine-reweigh-for-linear-fn}.
			It follows that 
			\[ \wt{\Var} \ell\left( \langle x,u\rangle , \langle y,v\rangle \right) \le \delta Q \Phi \mper \]
			To avoid having to deal with a cross-covariance term, we can write
			\begin{align*}
				\ell\left( X, Y \right) &= \ell ( X , \sigma_y \cdot \left(\frac{\rho}{\sigma_x} X + \sqrt{1-\rho^2} W\right) ) \mcom
			\end{align*}
			so
			\begin{align*}
				\wt{\Var} \ell\left( X,Y \right) &\gtrsim \|m_y\|^2 \cdot \sigma_y^2 \cdot (1-\rho^2) \mcom
			\end{align*}
			and thus
			\[ \sigma_y \sqrt{1-\rho^2} \lesssim \sqrt{\delta Q} \mcom \]
			establishing the first inequality. For the second inequality, we have
			\begin{align}
				\delta Q \Phi &\gtrsim \Var \ell(X,Y) \nonumber \\
					&\gtrsim \frac{1}{\sigma_x^2} \left(\E\left[ \ell(X,Y) \cdot X \right]\right)^2 \nonumber \\
				\sqrt{\delta Q} &\gtrsim \frac{1}{\sigma_x} \left| \frac{2\|m_x\| + \|m_y\|}{\|m_x\| + 2\|m_y\|} \cdot \sigma_x^2 + \E[XY] \right| \mper \label{eq:main-cusp-equation}
			\end{align}
			Now recall \Cref{lem:corr-bound}. This implies that
			\[ \|m_x\| \cdot \|m_y\| \lesssim -\E\left[XY\right] \lesssim \sigma_x^2 + \sigma_x \sqrt{\delta Q} \lesssim \sigma_x^2 + \delta Q \mcom \]
			and since the $\delta Q$ term is negligible in comparison to the left-hand side, we have the desired inequality.
		\end{proof}

		We also have the following bound on a third moment quantity that shows up in the definition of $\xi$.

		\begin{lemma}
			\label{lem:consequence-of-affine-2}
			Under the assumptions of \Cref{lem:affine-consequences},
			\[ \left| \pE[\ol{X}^2 W] \right| \lesssim \frac{\|m_y\|}{\sigma_x} \mper \]
			If this inequality fails, a witness affine function that violates \eqref{eq:quantitative-no-progress} is one of the two choices $\frac{\ol{X} \pm W}{\sqrt{2}}$.
		\end{lemma}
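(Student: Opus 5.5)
We argue by contrapositive: if $|\pE[\ol X^2 W]|\gg \|m_y\|/\sigma_x$, then one of the affine functions $p_\pm=(\ol X\pm W)/\sqrt2$ gives a $\Phi$-gain of at least $\delta Q$, contradicting \eqref{eq:quantitative-no-progress}. We only track movement of $\wt m_y$ along $v$. As in the proof of \Cref{lem:affine-cusp-bs-lemma}, the gain is lower bounded by
\[
\psi\bigl(\langle \wt m_x,u\rangle,\langle\wt m_y,v\rangle\bigr)-\psi(\|m_x\|,\|m_y\|).
\]
The $y$-coordinate alone contributes $2(2\|m_y\|+\|m_x\|)\Delta_y+2\Delta_y^2$, where $\Delta_y=\langle \wt m_y-m_y,v\rangle$. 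The $x$-coordinate contributes $2(2\|m_x\|+\|m_y\|)\Delta_x+2\Delta_x^2$ up to the cross term. So the key quantities are $\Delta_x,\Delta_y$ under reweighing by $p_\pm^2$.

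Since $\pE\ol X=\pE W=\pE \ol XW=0$ and $\pE\ol X^2=\pE W^2=1$, we have $\pE p_\pm^2=1$. Hence
\[
\Delta_y=\sigma_y\,\pE[p_\pm^2\ol Y]=\tfrac{\sigma_y}{2}\,\pE\bigl[(\ol X^2\pm2\ol XW+W^2)(\rho\ol X+\sqrt{1-\rho^2}W)\bigr].
\]
Expanding, the term $\pm\sqrt{1-\rho^2}\,\sigma_y\,\pE[\ol X^2W]$ has opposite sign in the two choices. The remaining terms are common to both choices up to the other odd term $\pm\rho\sigma_y\pE[\ol X^2W]$ and third moments. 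Therefore one of the two choices makes $\Delta_y$ at least of order $\sigma_y\max(|\rho|,\sqrt{1-\rho^2})\,|\pE[\ol X^2W]|$ in the favorable direction, beyond the common part. If the common part is itself large, we get even more. Concretely, average over $\pm$ and use $\max\ge$ average for the symmetric piece and the sign choice for the antisymmetric piece.

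By \Cref{lem:consequence-of-affine-1}(i) and item (ii) of \Cref{lem:affine-consequences}, $|\rho|$ is of order $\sigma_x/\sigma_y$. Thus $\sigma_y|\rho|\approx\sigma_x$, and the favorable $y$-movement is at least of order $\sigma_x|\pE[\ol X^2W]|\gg\|m_y\|$. Once $\Delta_y\gg \|m_y\|$, the quadratic term $2\Delta_y^2$ dominates. It exceeds $\|m_y\|^2\gtrsim Q$ and beats any linear loss of size $\|m_y\||\Delta_y|$ even if $\Delta_y$ were negative.

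The main obstacle is controlling $\Delta_x=\sigma_x\pE[p_\pm^2\ol X]$ and ensuring it cannot cancel the gain. The $x$-contribution is bounded below by $-2(2\|m_x\|+\|m_y\|)|\Delta_x|$. Here $|\Delta_x|\le\sigma_x\sqrt{\pE\ol X^2\cdot\pE p_\pm^4}$ is not obviously small, because $\pE[\ol X^3]$ and $\pE[\ol XW^2]$ are unrestricted. I would first try to show $|\Delta_x|\lesssim|\Delta_y|$. This uses item (ii): $\rho\approx-\sigma_x/(2\sigma_y)$ gives $\sigma_y\rho\ol X\approx-\tfrac12\sigma_x\ol X$, so the $\ol X$-part of $\Delta_y$ is comparable to $\Delta_x$. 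The sign-dependent terms then dominate, and the total change is at least $c\Delta_y^2-O(\|m_y\|(|\Delta_x|+|\Delta_y|))\gtrsim \|m_y\|^2\ge Q$. If $\Delta_x$ is instead huge, its own quadratic term $2\Delta_x^2$ already yields the gain. Splitting into these two cases, $|\Delta_x|\ge|\Delta_y|$ or not, finishes the argument. This reproduces the constant-chasing style of \eqref{eq:monstrosity} with $\delta$ small.
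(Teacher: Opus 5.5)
Your proposal has a genuine gap, and it sits in the step where you extract a sign-dependent lower bound on $\Delta_y$. If you expand $p_\pm^2\ol{Y}=\tfrac12(\ol{X}^2\pm 2\ol{X}W+W^2)(\rho\ol{X}+\sqrt{1-\rho^2}\,W)$, the part of $\Delta_y$ that flips sign is
\[
\pm\,\sigma_y\bigl(\rho\,\pE[\ol{X}^2W]+\sqrt{1-\rho^2}\,\pE[\ol{X}W^2]\bigr).
\]
By contrast, $\sqrt{1-\rho^2}\,\pE[\ol{X}^2W]$ belongs to the common part, with coefficient $\tfrac12$ and no sign, which is the opposite of what you wrote. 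The third moment $\pE[\ol{X}W^2]$ is uncontrolled, so it can cancel $\rho\,\pE[\ol{X}^2W]$ and give $\Delta_y^+=\Delta_y^-$. In that situation nothing forces either $|\Delta_y^\pm|$ to be $\gg\|m_y\|$, and your case split has no starting point.

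There is a second problem. You invoke item (ii) of \Cref{lem:affine-consequences}, both for $|\rho|\sigma_y\asymp\sigma_x$ and for $|\Delta_x|\lesssim|\Delta_y|$. That item is proved in \Cref{lem:consequence-of-affine-3}, whose proof uses the very inequality you are trying to establish, so the argument is circular as written. The cruder fact $|\rho|\sigma_y\asymp\sigma_x$ can be obtained non-circularly from \eqref{eq:main-cusp-equation} together with \Cref{lem:consequence-of-affine-1}(ii), but that does not repair the cancellation issue above.

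The missing idea is that $\Delta_x$ is not an obstacle to be bounded above; it is the whole proof. Since
\[
\pE[p_\pm^2\ol{X}]=\tfrac12\bigl(\pE\ol{X}^3+\pE[\ol{X}W^2]\bigr)\pm\pE[\ol{X}^2W],
\]
the two reweighs satisfy $\langle\wt{m}_x^+-\wt{m}_x^-,u\rangle=2\sigma_x\pE[\ol{X}^2W]$ exactly, with no other sign-dependent terms. If both reweighs fail, then
\[
\langle\wt{m}_x^\pm,u\rangle^2\le\Phi(\wt{m}_x^\pm,\wt{m}_y^\pm)\le\Phi(m_x,m_y)+\delta Q\lesssim\|m_y\|^2,
\]
where the last step uses $\|m_x\|\le\|m_y\|$ and $\|m_y\|^2\ge Q$. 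The triangle inequality then gives $\sigma_x|\pE[\ol{X}^2W]|\lesssim\|m_y\|$. This is the paper's argument. It needs no information about $\rho$ or $\Delta_y$, and none of the later consequences in \Cref{lem:affine-consequences}.
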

		\begin{proof}
			Consider the two reweighs
			\[ p_{\pm} = \frac{\ol{X} \pm W}{\sqrt{2}} \mper \]
			Due to the orthonormality of $\ol{X}$ and $W$, we have $\pE[p_{\pm}^2] = 1$. Furthermore, their updated means $\wt{m}_x^{\pm}$ satisfy
			\[ \langle \wt{m}_x^+ - \wt{m}_x^- , u\rangle = \sigma_x \pE\left[ \left(p_+^2 - p_-^2\right) \ol{X} \right] = 2\sigma_x \pE\left[ \ol{X}^2 W \right] \mper \]
			We further have
			\[ \Phi(\wt{m}_x^\pm , \wt{m}_y^{\pm}) \le \Phi(m_x , m_y) + \delta Q \lesssim \|m_y\|^2 \mcom \]
			where the final inequality is simply because $\Phi(m_x , m_y) \asymp \|m_y\|^2$. We can lower bound the quantity on the left as
			\[ \Phi(\wt{m}_x^\pm , \wt{m}_y^{\pm}) \ge \langle \wt{m}_x^{\pm} , u \rangle^2 \mper \]
			It follows that
			\[ \sigma_x \left| \pE\left[ \ol{X}^2 W \right] \right| \lesssim \|m_y\| \mcom \]
			and rearranging completes the proof.
		\end{proof}

		\begin{lemma}
			\label{lem:consequence-of-affine-3}
			Under the assumptions of \Cref{lem:affine-consequences},
			\[ \frac{\|m_x\|}{\|m_y\|} \lesssim \delta \]
			and
			\[ \E\left[XY\right] = \left( - \frac{1}{2} + O(\delta) \right) \sigma_x^2 \mper \]
		\end{lemma}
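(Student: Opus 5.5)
The plan is to combine the inequalities already extracted in \Cref{lem:consequence-of-affine-1} and \Cref{lem:consequence-of-affine-2} with the large-$\xi$ hypothesis, together with the mean-anchoring lower bound. Write $a = \|m_x\|$ and $b = \|m_y\|$, so that $a \le b$. Because the pinned coordinates of $x$ contribute squared mass at least $Q$ to $m_x$, we have $a^2 \ge Q$.

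The first step is to show $\sigma_x \lesssim \sqrt{\delta}\, b$. By \eqref{eq:xi-expression}, $\xi = \sqrt{1-\rho^2}\,\pE[\ol X^2 W]$. Combining the hypothesis $|\xi| \gtrsim a/\sigma_y$ with part (i) of \Cref{lem:consequence-of-affine-1} ($\sigma_y\sqrt{1-\rho^2} \lesssim \sqrt{\delta Q}$) and with \Cref{lem:consequence-of-affine-2} ($|\pE[\ol X^2W]| \lesssim b/\sigma_x$) gives
\[ a \lesssim \sigma_y |\xi| = \sigma_y\sqrt{1-\rho^2}\,|\pE[\ol X^2 W]| \lesssim \sqrt{\delta Q}\cdot \frac{b}{\sigma_x} \le \sqrt{\delta}\, \frac{a b}{\sigma_x}. \]
The last inequality uses $\sqrt Q \le a$. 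Dividing by $a$ yields $\sigma_x \lesssim \sqrt\delta\, b$.

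The ratio bound then follows. Part (ii) of \Cref{lem:consequence-of-affine-1} gives $\sigma_x^2 \gtrsim ab$, and combining this with the previous step gives $ab \lesssim \delta b^2$, i.e.\ $a/b \lesssim \delta$. This is the first claim.

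For the second claim I return to \eqref{eq:main-cusp-equation}, which was derived under exactly these assumptions:
\[ \left|\kappa\,\sigma_x^2 + \pE[XY]\right| \lesssim \sigma_x\sqrt{\delta Q}, \qquad \kappa = \frac{2a+b}{a+2b}. \]
Since $a/b \lesssim \delta$, we have $\kappa = \tfrac12 + O(a/b) = \tfrac12 + O(\delta)$. It remains to show the right-hand side is $O(\delta)\sigma_x^2$, i.e.\ $\sqrt{\delta Q}/\sigma_x \lesssim \delta$. Using $\sqrt Q \le a$ and $\sigma_x \gtrsim \sqrt{ab}$,
\[ \frac{\sqrt{\delta Q}}{\sigma_x} \lesssim \sqrt\delta\,\frac{a}{\sqrt{ab}} = \sqrt{\delta}\sqrt{a/b} \lesssim \delta. \]
Hence $\pE[XY] = -\kappa\sigma_x^2 + O(\delta)\sigma_x^2 = \left(-\tfrac12 + O(\delta)\right)\sigma_x^2$.

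The main subtlety is getting the error down to $O(\delta)$ rather than $O(\sqrt\delta)$. The crude bound $a \le \sigma_x$ only gives $\sqrt{\delta Q}/\sigma_x \lesssim \sqrt\delta$. The fix is to first establish the ratio bound $a/b \lesssim \delta$, and only then feed it, via $\sigma_x \gtrsim \sqrt{ab}$, into the estimate for $\sqrt{\delta Q}/\sigma_x$. So the steps must be carried out in the order above.
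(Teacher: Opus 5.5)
Your proof is correct and is essentially the paper's argument. It uses the same ingredients: the large-$\xi$ hypothesis, \Cref{lem:consequence-of-affine-1}, \Cref{lem:consequence-of-affine-2} and $\sqrt{Q}\le\|m_x\|$ for the ratio bound, then \eqref{eq:main-cusp-equation} with $\sqrt{\delta Q}/\sigma_x\lesssim\sqrt{\delta}\sqrt{\|m_x\|/\|m_y\|}\lesssim\delta$ for the second claim. The only difference is order: you extract $\sigma_x\lesssim\sqrt{\delta}\|m_y\|$ (item (iii) of \Cref{lem:affine-consequences}) first, whereas the paper first records $\|m_x\|/\|m_y\|\le\sqrt{\delta Q}/\sigma_x$ and derives (iii) from it afterwards.
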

		\begin{proof}
			Combining the assumption and \Cref{lem:consequence-of-affine-2}, along with the expression for $\xi$ in \Cref{eq:xi-expression}, we have
			\[ \sqrt{1-\rho^2} \cdot \frac{\|m_y\|}{\sigma_x} \gtrsim \frac{\|m_x\|}{\sigma_y} \mper \]
			Substituting \Cref{lem:consequence-of-affine-1}(i), we have
			\begin{equation}
				\label{eq:another-conseq-4}
				\frac{\|m_x\|}{\|m_y\|} \le \frac{\sqrt{\delta Q}}{\sigma_x}
			\end{equation}
			That is, the $x$-mean is significantly smaller than the $y$-mean. On the other hand, \Cref{lem:consequence-of-affine-1}(ii) yields that
			\[ \sigma_x^2 \gtrsim \|m_x\| \cdot \|m_y\| \mper \]
			Substituting this back in above yields
			\[ \frac{\|m_x\|^{3/2}}{\|m_y\|^{1/2}} \lesssim \sqrt{\delta Q} \mcom \]
			and bounding $\|m_x\| \ge \sqrt{Q}$ yields
			\[ \frac{\|m_x\|}{\|m_y\|} \lesssim \delta \mper \]
			For the second part, we plug this back into \eqref{eq:main-cusp-equation}, which asserts that
			\[ \sqrt{\delta Q} \gtrsim \frac{1}{\sigma_x} \left| \frac{2\|m_x\| + \|m_y\|}{\|m_x\| + 2\|m_y\|} \cdot \sigma_x^2 + \E[XY] \right| \mper \]
			Indeed, the first term within the absolute value is now equal to $\left( \frac{1}{2} + O(\delta) \right) \cdot \sigma_x^2$, so
			\[ \E[XY] = \sigma_x^2 \left(- \frac{1}{2} + O\left(\delta\right) + O\left(\frac{\sqrt{\delta Q}}{\sigma_x}\right)\right) \]
			To conclude, we have
			\[ \frac{\sqrt{Q}}{\sigma_x} \le \frac{\|m_x\|}{\sigma_x} \lesssim \sqrt{ \frac{\|m_x\|}{\|m_y\|} } \lesssim \sqrt{\delta} \mper \qedhere \]
		\end{proof}

		Finally, let us put the above lemmas together to establish \Cref{lem:affine-consequences}.

		\begin{proof}[Proof of \Cref{lem:affine-consequences}]
			(i) and (ii) are \Cref{lem:consequence-of-affine-3}. (iii) follows from \eqref{eq:another-conseq-4} and bounding $\sqrt{Q} \le \|m_x\|$. (v) is \Cref{lem:consequence-of-affine-1}(ii). For (iv), we have by \Cref{lem:consequence-of-affine-2} that
			\[ |\xi| = \sqrt{1-\rho^2} \E[\ol{X}^2 W] \lesssim \sqrt{1-\rho^2} \cdot \frac{\|m_y\|}{\sigma_x} \mper \]
			Now substitute \Cref{lem:consequence-of-affine-1}(i), we may bound the above further as
			\[ |\xi| \lesssim \sqrt{\delta Q} \cdot \frac{\|m_y\|}{\sigma_x \sigma_y} \le \sqrt{\delta} \cdot \frac{\|m_x\| \|m_y\|}{\sigma_x \sigma_y} \lesssim \sqrt{\delta} \cdot \frac{\sigma_x}{\sigma_y} \mcom \]
			where the final inequality is (v).
		\end{proof}

	\subsection{\texorpdfstring{The proof when $|\xi|$ is large}{The proof when |ξ| is large}}
		\label{subsec:cusp}

		Let us complete the proof.

		\largexilemma*

		\begin{proof}
			Recall the two reweighs from \Cref{lem:affine-cusp-reweigh-gathers}, corresponding to the evaluation of $\psi$ at the two points
			\[ z_i = \left( \|m_x\| + \sigma_x \lambda_i , \|m_y\| + \rho \sigma_y \lambda_i + \sigma_y \left( \frac{\lambda_i^2}{1 + \lambda_i^2} \right) \xi \right) \mcom \]
			and $z_0 = \left( \|m_x\| , \|m_y\| \right)$, where the $\lambda_i$ are defined as follows. We start by defining $\lambda$ by $\lambda - \frac{1}{\lambda} = \pE \ol{X}^3$, then set $\lambda_1 = \lambda$ and $\lambda_2 = - \frac{1}{\lambda}$.

			For starters, observe that the failure of these reweighs implies that
			\begin{equation}
				\label{eq:another-progress-condition}
				\sigma_x |\lambda_i| \lesssim \|m_y\| \mper
			\end{equation}
			Indeed,
			\[ \sigma_x^2 \lambda_i^2 \lesssim \|m_x\|^2 + \psi(z_i) \le \|m_x\|^2 + \psi(z_0) + \delta Q \lesssim \|m_y\|^2 \mper \]

			Let us compute the value of the potential at each of these two values. Define
			\[ \psi_+(z) = 2 \left( z_1^2 + z_2^2 + z_1z_2 \right) \text{ and } \psi_-(z) = 2 \left( z_1^2 + z_2^2 - z_1z_2 \right) \mper \]
			Let us start with $\psi_+(z_1) - \psi(z_0)$. Write $\omega_1 = \left( \frac{\lambda_1^2}{1 + \lambda_1^2} \right) \le 1$. Then, denoting $\beta = \frac{\rho \sigma_y}{\sigma_x} = -\frac{1}{2} + O(\delta)$,
			\begin{align}
				\frac{1}{2} \left(\psi_+(z_1) - \psi(z_0)\right) &= \sigma_x^2 \lambda_1^2 + \sigma_y^2 \left( \rho \lambda_1 + \omega_1 \xi \right)^2 + 2 \sigma_x \|m_x\| \lambda_1 + 2 \sigma_y \|m_y\| \left( \rho \lambda_1 + \omega_1 \xi \right) \nonumber\\
					&\qquad\qquad + \left(\|m_x\| + \sigma_x\lambda_1\right) \left( \rho \sigma_y \lambda_1 + \omega_1 \sigma_y \xi \right) + \|m_y\| \cdot \sigma_x \lambda_1 \nonumber\\
					&= \sigma_x^2 \lambda_1^2 \left( 1 + \beta^2 + \beta \right) \nonumber\\
					&\qquad + \sigma_x \lambda_1 \underbrace{ \left( 2\|m_x\| + \|m_y\| + \beta \left( \|m_x\| + 2\|m_y\| \right) \right) }_{\lesssim \delta \|m_y\| \text{ by \Cref{lem:affine-consequences}(i,ii)}} \nonumber\\
					&\qquad + \omega_1 \xi \sigma_y \left( \underbrace{\left( 2 \|m_y\| + \|m_x\| \right)}_{\gg \lambda_1 \sigma_x \text{ by \Cref{lem:affine-consequences}(iii)}} + \lambda_1 \sigma_x \underbrace{\left(2\beta + 1\right)}_{O(\delta)\text{ by \Cref{lem:affine-consequences}(ii)}} \right) \nonumber\\
					&\qquad + \underbrace{\omega_1^2 \xi^2 \sigma_y^2}_{\ge 0} \nonumber\\
					&\ge \frac{1}{2} \cdot \sigma_x^2\lambda_1^2 - C \omega_1 \sigma_y \|m_y\| \cdot |\xi| \\
					&\ge \frac{1}{2} \cdot \sigma_x^2\lambda_1^2 - C \delta \cdot \omega_1 \|m_y\|^2 \label{eq:positive-branch} \mcom
			\end{align}
			where the second-to-last inequality absorbs the second term into the first, and the final inequality is \Cref{lem:affine-consequences}(iii,iv).
			Since the above increment is bounded from above by $\delta Q$, we have
			\begin{equation}
				\label{eq:dplus-small}
				\sigma_x^2\lambda_1^2 \lesssim \delta Q + \delta \cdot \omega_1 \|m_y\|^2 \lesssim \delta \|m_y\|^2 \mper
			\end{equation}
			Similarly, for the other branch, we may compute $\psi_-(z_2) - \psi(z_0)$.
			Again, write $\omega_2 = \left( \frac{\lambda_2^2}{1 + \lambda_2^2} \right)$, and denote $\beta = \frac{\rho \sigma_y}{\sigma_x} = -\frac{1}{2} + O(\delta)$, so
			\begin{align}
				\frac{1}{2} \left( \psi_-(z_2) - \psi(z_0) \right) &= - 2 \|m_x\| \|m_y\| + \sigma_x^2 \lambda_2^2 + \sigma_y^2 \left( \rho \lambda_2 + \omega_2 \xi \right)^2 + 2 \sigma_x \|m_x\| \lambda_2 + 2 \sigma_y \|m_y\| \left( \rho \lambda_2 + \omega_2 \xi \right) \nonumber \\
					&\qquad\qquad - \left(\|m_x\| + \sigma_x\lambda_2\right) \left( \rho \sigma_y \lambda_2 + \omega_2 \sigma_y \xi \right) - \|m_y\| \cdot \sigma_x \lambda_2 \nonumber \\
						&= - 2 \|m_x\| \|m_y\| \nonumber \\
						&\qquad + \underbrace{\sigma_x^2 \lambda_2^2 \left( 1 + \beta^2 - \beta \right)}_{\ge 0} \nonumber \\
						&\qquad + \underbrace{\sigma_x \lambda_2 \left( 2\|m_x\| - \|m_y\| + \beta \left( - \|m_x\| + 2\|m_y\| \right) \right)}_{\gtrsim \sigma_x |\lambda_2| \|m_y\|} \nonumber \\
						&\qquad +  \underbrace{ \omega_2 \xi \sigma_y\left( \left( 2 \|m_y\| - \|m_x\| \right) + \lambda_2 \sigma_x \left(2\beta - 1\right) \right) }_{\stackrel{\eqref{eq:another-progress-condition}}{\gtrsim} - |\omega_2 \xi| \sigma_y \|m_y\|} \nonumber \\
						&\qquad + \underbrace{\omega_2^2 \xi^2 \sigma_y^2}_{\ge 0} \nonumber \\
					&\ge \sigma_x |\lambda_2| \|m_y\| - 2 \|m_x\| \|m_y\| - C |\omega_2 \xi| \sigma_y \|m_y\| \label{eq:negative-branch} \mper
			\end{align}
			Now, since the increment above in \Cref{eq:negative-branch} is bounded from above by $O(\|m_x\|^2) = O(\|m_x\| \|m_y\|)$, we have
			\begin{align*}
				\sigma_x |\lambda_2| &\lesssim \|m_x\| + |\omega_2 \xi| \sigma_y \\
					&\lesssim \|m_x\| + \sqrt{\delta} \cdot \sigma_x |\omega_2|  \\
					&= \|m_x\| + \sqrt{\delta} \cdot \sigma_x \cdot \frac{|\lambda_2|}{|\lambda_1| + |\lambda_2|} \\
					&\le \|m_x\| + \sqrt{\delta} \cdot \sigma_x \cdot \frac{|\lambda_2|}{2\sqrt{|\lambda_1\lambda_2|}} \\
					&= \|m_x\| + \sqrt{\delta} \cdot \frac{\sigma_x|\lambda_2|}{2} \\
				|\lambda_2| \sigma_x &\lesssim \|m_x\| \mper
			\end{align*}
			Now, $|\lambda_1| = \frac{1}{|\lambda_2|}$, and by \Cref{lem:affine-consequences}(v), we have $\sigma_x^2 \gtrsim \|m_x\| \cdot \|m_y\|$. Plugging these in above yield that
			\begin{equation}
				\label{eq:dplus-large}
				|\lambda_1| \sigma_x \gtrsim \|m_y\| \mper
			\end{equation}
			However, \eqref{eq:dplus-small} contradicts \eqref{eq:dplus-large}, completing the proof.
		\end{proof}

\subsection{Proof of the main theorem}
\label{subsec:low-soundness-proof}

We are now ready to prove the main result, restated for convenience.

\lowsoundnessthm*
\begin{proof}
Let $\beta = 2$ in $\mathcal{A}$. We first prove completeness.  Normalize a YES witness so that
$\norm x_2=\norm y_2=1$, and round both vectors to the grid to obtain
$\hat x,\hat y\in\Sigma^n$ with
\[
    \norm{x-\hat x}_2,\norm{y-\hat y}_2\leq c_0n^{-2}
\]
for a sufficiently small absolute constant $c_0$.  Then
\[
    \left|\norm{\hat x}_2^2-1\right|,
    \left|\norm{\hat y}_2^2-1\right|\leq n^{-2},
\]
and
\[
    \norm{\hat x\otimes\hat y-x\otimes y}_2
    \leq\norm{(\hat x-x)\otimes\hat y}_2
      +\norm{x\otimes(\hat y-y)}_2
    =O(n^{-2}).
\]
Since $(I-\Pi)(x\otimes y)=0$, decreasing $c_0$ if necessary shows that
the point mass at $(\hat x,\hat y)$ satisfies~\eqref{eq:constraints}.
Hence the relaxation is feasible in the YES case.

We next prove soundness.  Suppose that the relaxation is feasible, and set
\[
    \theta=K\sqrt{\frac qn}
\]
for a sufficiently large absolute constant $K$.  We first dispose of the
range in which $\theta$ is not smaller than the constant required by the
pinning lemma.  In that range $q=\Omega(n)$, and we apply
\cref{thm:asym-bss-close-to-one-main} with
$\eps=1-q/n$.  Since $\eps>1/2$, its running time is
$n^{O(\sqrt n)}=n^{O(\sqrt q)}$, and its soundness condition is precisely
the NO condition above.  We may therefore assume that $\theta$ is smaller
than the required constant and apply
\cref{lem:low-soundness-pinning-lemma}.  After
\[
    O(\theta\sqrt n)=O(\sqrt q)
\]
coordinate pinnings and affine-square reweightings, one of its three win
conditions holds.  Fix an absolute constant $c_{\mathrm{round}}>0$ smaller
than the implicit constants in these conditions.

If either mean is large, the corresponding best-response bound in
\cref{lem:low-soundness-rouding-success} gives
\[
    \norm{\Pi(u\otimes v)}_2^2
    \geq c_{\mathrm{round}}^2K^2\frac qn-O(n^{-2}).
\]
Here $n^{-2}\leq(q/n)/n$ because $q>1$.

It remains to consider the singular-value condition.
The final conclusion of~\cref{lem:low-soundness-pinning-lemma} gives
\[
    \norm R_F\geq\frac{\sqrt q}{n},
    \qquad
    \frac{n^{-2}}{\norm R_F}
    \leq\frac{1}{n\sqrt q}
    \leq\frac1{\sqrt n}\sqrt{\frac qn}.
\]
The singular-vector bound in
\cref{lem:low-soundness-rouding-success} therefore yields
\[
    \norm{\Pi(u\otimes v)}_2^2
    \geq
    \left(c_{\mathrm{round}}K-\frac1{\sqrt n}\right)^2\frac qn.
\]
Choosing $K$ sufficiently large makes both rounding bounds strictly larger
than $q/n$.  Thus, in every case,
\[
    \norm{\Pi(u\otimes v)}_2^2>\frac qn.
\]
Therefore, a feasible relaxation contradicts the NO condition.  The algorithm
enumerates all coordinate branches of the required depth, and the affine
reweightings and all final rounding candidates are computable from the
corresponding moment matrices.  It therefore finds such a pair whenever
the relaxation is feasible.

Finally, the grid has polynomial size, the relaxation has degree
$O(\sqrt q)$, and there are $n^{O(\sqrt q)}$ coordinate branches.  Both the feasibility test and
the rounding procedure consequently run in time
\[
    n^{O(\sqrt q)}.
\]
The efficient computability of the affine reweighs is immediate because each reweigh is one of five explicit reweighs: the two in \Cref{lem:affine-cusp-reweigh-gathers}, that in \Cref{lem:consequence-of-affine-1}, or the two in \Cref{lem:consequence-of-affine-2}.
\end{proof}

\section{Fine-grained pinning lemma}
\label{sec:fine-grained-pinning}
We prove the following fine-grained pinning lemma:

\begin{lemma}
\label{lem:fine-grained-pinning-loglog}
    Let $\mu_0$ be a distribution on $\mathbb{R}^n$ with finite second moment and let $0<\gamma \leq \sqrt n$. There is a distribution over $(S,x_S)$ where $S \subseteq [n]$ with $|S| \leq O(\sqrt n/\gamma)$ and $x_S \in \R^{|S|}$ such that
    \begin{gather*}
        \E_{S,x_S} \Norm{\E_{x \sim \mu_0 \mid x_S} x - \E_{x \sim \mu_0} x}_2^2 \ge \min\left\{ \frac{1}{\gamma} \cdot \E_{S,x_S} \Norm{\Cov(\mu_0 \, | \, x_S)}_F \mcom \frac{1}{\gamma^2} \cdot \Tr\left(\Cov(\mu_0)\right) \right\}
    \end{gather*}
    and furthermore $(S,x_S)$ give a decomposition of $\mu_0$, in that $\E_{S,x_S} (\mu_0 \, | \, x_S) = \mu_0$.
    The same conclusion holds for a pseudodistribution $\mu_0$ of degree at least $O(\sqrt n/\gamma)$.
\end{lemma}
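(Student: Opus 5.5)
We would run a random, measure-decomposing pinning process for $T = \lceil C\sqrt n/\gamma\rceil$ steps and track the potential $\Psi_t = \E \norm{m_t - m_0}_2^2$. Here $m_t$ is the conditional mean after $t$ pinnings and $m_0 = \E_{\mu_0} x$. At step $t$, given the current conditional distribution $\mu_t$ with covariance $\Sigma_t$, we pick a coordinate $j$ with probability proportional to $\Var_{\mu_t}(x_j)$ and pin $x_j$ to a value drawn from its marginal under $\mu_t$. Because every pinning samples from the current marginal, the martingale property gives $\E_{S,x_S}(\mu_0 \mid x_S) = \mu_0$ at every time. Since $(m_t)$ is a martingale, $\Psi_t = \E \Tr(\Sigma_0) - \E \Tr(\Sigma_t)$ with $\Sigma_0 = \Cov(\mu_0)$. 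So $\Psi$ is nondecreasing and bounded by $\Tr(\Cov(\mu_0))$. The output $(S, x_S)$ will be the state at a uniformly random stopping time $\tau \in \{0,\dots,T\}$ (or a slightly modified rule); this is still a valid decomposition.

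The increment is controlled by the variance-reduction inequality \eqref{eq:variance-reduction-tech-overview}, exactly as in \eqref{eq:tech-overview-cov-progress}:
\[ \E[\Psi_{t+1}-\Psi_t \mid \mu_t] \geq \frac{\norm{\Sigma_t}_F^2}{\Tr(\Sigma_t)}. \]
We then bound $\Tr(\Sigma_t)$ in terms of the drift. If $\E\Tr(\Sigma_t) \leq (1-1/\gamma^2)\Tr(\Sigma_0)$, then already $\Psi_t \geq \Tr(\Sigma_0)/\gamma^2$, and the second branch of the minimum holds. (We would stop as soon as this happens; since $\Psi$ is monotone, the branch persists.) Otherwise $\Tr(\Sigma_t) \approx \Tr(\Sigma_0)$ in expectation. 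Combined with the trivial bound $\norm{\Sigma}_F \geq \Tr(\Sigma)/\sqrt n$, this gives the per-step gain
\[ \frac{\norm{\Sigma_t}_F^2}{\Tr\Sigma_t} \geq \max\Big\{\frac{\norm{\Sigma_t}_F^2}{\Tr\Sigma_0},\ \frac{\norm{\Sigma_t}_F}{\sqrt n}\Big\} \]
in expectation.

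Next we sum over $t \leq T$ using the second form. This gives $\Psi_T \geq \frac{1}{\sqrt n}\sum_{t<T} \E\norm{\Sigma_t}_F$, i.e. $\Psi_T \gtrsim \frac{T}{\sqrt n}\,\E_\tau \norm{\Sigma_\tau}_F = \frac{C}{\gamma}\E_\tau\norm{\Sigma_\tau}_F$ for uniform $\tau$. We also need to compare with $\Psi_\tau$ rather than $\Psi_T$. For this we use a doubling/pigeonhole argument over dyadic blocks of time: choose the block $[T/2^{k+1}, T/2^{k}]$ where the Frobenius mass is concentrated, and output a random time in that block. There $\Psi$ at the end of the block dominates the sum over the block, up to constant loss. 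This is the $1/\gamma$ branch.

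The main obstacle is making the first-branch inequality hold at a single random time, with $\Psi_\tau$ on the left rather than $\Psi_T$. We must avoid losing a $\log n$ factor, as the title "loglog" suggests this is delicate. The first thing we would try is to stop at the first time $t$ where $\Psi_t \geq \frac1\gamma \E\norm{\Sigma_t}_F$. We would argue that before this time the increments $\Psi_{t+1}-\Psi_t \geq \E\norm{\Sigma_t}_F/\sqrt n > \gamma\Psi_t/\sqrt n$ force geometric growth. That yields $\Psi$ exceeding $\Tr(\Sigma_0)/\gamma^2$ within $O(\frac{\sqrt n}{\gamma}\log(\cdot))$ steps. The log would then be removed by using the stronger quadratic gain $\norm{\Sigma_t}_F^2/\Tr\Sigma_0$ once $\Psi$ is large, mirroring the two-phase $\Phi^{4/3}$ trick of \Cref{lem:potential-growth-43}. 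Extending to pseudodistributions then only requires that all quantities used are degree-$O(T)$ moments and that conditioning is defined, as in the earlier sections.
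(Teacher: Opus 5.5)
Your final plan is the paper's construction: the variance-weighted, marginal-sampling pinning process, the potential $\Psi_t=\E\norm{m_t-m_0}_2^2=\Tr(\Sigma_0)-\E\Tr(\Sigma_t)$, and outputting the whole distribution at the first (deterministic) time $t$ with $\Psi_t\ge\frac1\gamma\E\norm{\Sigma_t}_F$.

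The random-time/dyadic-block detour is unnecessary, because both sides of the target inequality are deterministic functions of $t$. As written it would not work anyway: a uniform time in a block only guarantees $\Psi$ at the block's \emph{start}, and a block of length $T/2^{k+1}$ only buys a factor $C/(2^{k+1}\gamma)$. Also, the quadratic gain must come from Cauchy--Schwarz, $\E[\norm{\Sigma_t}_F^2/\Tr\Sigma_t]\ge(\E\norm{\Sigma_t}_F)^2/\E\Tr\Sigma_t$, and not branchwise, since a single branch can have $\Tr\Sigma_t>\Tr\Sigma_0$.

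The genuine gap is removing the logarithm, which is exactly what separates the statement from the paper's warmup \Cref{claim:fine-grained-simple-hitting}. Your geometric-growth argument reproduces that warmup and gives only $O(\frac{\sqrt n}{\gamma}\log n)$. Your proposed fix, switching to the quadratic gain $\gamma^2\Psi_t^2/\Tr(\Sigma_0)$ ``once $\Psi$ is large,'' does not remove it. The quadratic rate beats the geometric rate $\gamma\Psi_t/\sqrt n$ only above $\Psi\approx\Tr(\Sigma_0)/(\gamma\sqrt n)$. Climbing geometrically from $\Psi_1\approx\Tr(\Sigma_0)/n$ to that level already costs $\frac{\sqrt n}{\gamma}\log(\sqrt n/\gamma)$ steps.

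The missing idea is an \emph{additive} first phase. From $\norm{\Sigma}_F^2/\Tr\Sigma\ge\Tr\Sigma/n$ you get, with no assumption on the win condition, $\Psi_{t+1}-\Psi_t\gtrsim(\Tr(\Sigma_0)-\Psi_t)/n\ge\Tr(\Sigma_0)/(2n)$ while $\Psi_t\le\Tr(\Sigma_0)/2$. So $\Psi$ reaches $\Tr(\Sigma_0)/(2\gamma\sqrt n)$ within $O(\sqrt n/\gamma)$ steps. This uses $\gamma\ge1/\sqrt n$; for smaller $\gamma$, simply pin all $n\le\sqrt n/\gamma$ coordinates. From there, $\Psi_{t+1}\ge\Psi_t(1+\gamma^2\Psi_t/\Tr(\Sigma_0))$ crosses each dyadic bucket $[a,2a]$ in $O(\Tr(\Sigma_0)/(\gamma^2a))$ steps. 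This is a geometric series dominated by the first bucket, so the total is $O(\sqrt n/\gamma)$.

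The geometric rate is never needed: it is the geometric mean of the additive and quadratic rates, and relying on it is precisely what costs the log. The additive phase is the analogue of mean anchoring in \Cref{lem:asymmetric-pinning}. Your ``mirroring'' remark points at the right template but applies the improvement to the wrong phase.
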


We first record the estimates used in the proof. Throughout the claims, fix an initial distribution $\mu_0$ on $\mathbb{R}^n$ with finite second moment and $\Tr(\Cov(\mu_0))>0$. For a conditional distribution $\nu$ of $\mu_0$, let
\[
    m_\nu \vcentcolon= \E_{x \sim \nu} x \qquad \text{and} \qquad \Sigma_\nu \vcentcolon= \Cov(\nu)\,.
\]
Let $m_0 \vcentcolon= \E_{x \sim \mu_0}x$.
For a distribution $\pi_\ell$ over conditional distributions obtained from pinning coordinates of $\mu_0$, define
\[
    s_\ell \vcentcolon= \frac{\E_{\nu \sim \pi_\ell} \norm{m_\nu-m_0}_2^2}{\Tr(\Cov(\mu_0))}.
\]
As in other analyses of global correlation rounding, we will track the evolution of a potential function, in this case $s_\ell$, as we pin coordinates.
We construct our pinning of $\mu_0$ via the following process: given a current set of pinned coordinates $S$ pinned to values $x_S$, choose a coordinate $i$ on each branch with weights proportional to $\Var_\nu(x_i)$.
Then sample a value $x_i$ from the marginal distribution on coordinate $i$ conditioned on $x_S$, and update $S \leftarrow S \cup \{i \}$.
In the following claims, $\pi_\ell$ is the distribution over pinnings resulting from $\ell$ steps of this process.

\begin{claim}
\label{claim:fine-grained-step}
We have
\[
    s_{\ell+1}-s_\ell
    \geq
    \frac{1}{\Tr(\Cov(\mu_0))}\E_{\nu \sim \pi_\ell}
    \frac{\Norm{\Sigma_\nu}_F^2}{\Tr(\Sigma_\nu)}\,,
\]
where the ratio is interpreted as $0$ when $\Tr(\Sigma_\nu)=0$. Also,
\[
    s_\ell + \frac{\E_{\nu \sim \pi_\ell}\Tr(\Sigma_\nu)}{\Tr(\Cov(\mu_0))}=1\,.
\]
\end{claim}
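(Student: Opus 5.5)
The plan is to prove the conservation identity first and then use it to turn the one-step gain into a variance-reduction statement. Throughout, write $T_0 = \Tr(\Cov(\mu_0))$.

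\emph{Conservation identity.} For any conditional distribution $\nu$, the law of total variance relative to the fixed point $m_0$ gives
\[
\E_{x\sim\nu}\norm{x-m_0}_2^2 = \norm{m_\nu-m_0}_2^2 + \Tr(\Sigma_\nu).
\]
The pinning process is a measure decomposition, $\E_{\nu\sim\pi_\ell}\nu=\mu_0$, because each step samples $x_i$ from the current conditional marginal. Averaging the display over $\nu\sim\pi_\ell$ therefore yields
\[
\E_{\nu\sim\pi_\ell}\norm{m_\nu-m_0}_2^2 + \E_{\nu\sim\pi_\ell}\Tr(\Sigma_\nu) = \E_{x\sim\mu_0}\norm{x-m_0}_2^2 = T_0 .
\]
Dividing by $T_0$ gives the second claim. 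For pseudodistributions the same computation goes through, since only degree-$2$ moments of the conditioned pseudo-expectations appear, and conditioning preserves the linearity used in the averaging.

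\emph{One-step increment.} By the identity, $s_{\ell+1}-s_\ell = T_0^{-1}\bigl(\E_{\pi_\ell}\Tr(\Sigma_\nu) - \E_{\pi_{\ell+1}}\Tr(\Sigma_{\nu'})\bigr)$. So it suffices to fix a branch $\nu$ with $\Tr(\Sigma_\nu)>0$ and show that the expected trace after one step drops by at least $\norm{\Sigma_\nu}_F^2/\Tr(\Sigma_\nu)$. Branches with zero trace contribute nothing and cannot increase the trace.

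Fix coordinate $i$ and condition on $x_i$ drawn from $\nu$. For each $k$, the variance decrease $\Var(x_k)-\E_{x_i}\Var(x_k\mid x_i)$ equals the variance of the conditional mean $\E[x_k\mid x_i]$. That conditional mean has covariance $\Sigma_{ki}$ with $x_i$, so Cauchy--Schwarz gives
\[
\Var\bigl(\E[x_k\mid x_i]\bigr) \ge \frac{\Cov(x_k,x_i)^2}{\Var(x_i)} .
\]
This is \eqref{eq:variance-reduction-tech-overview} with constant $1$. Summing over $k$, pinning $i$ reduces the expected trace by at least $\sum_k \Sigma_{ki}^2/\Sigma_{ii}$.

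Now choose $i$ with probability $\Sigma_{ii}/\Tr(\Sigma_\nu)$. The expected reduction is then $\sum_{i,k}\Sigma_{ki}^2/\Tr(\Sigma_\nu) = \norm{\Sigma_\nu}_F^2/\Tr(\Sigma_\nu)$. Averaging over $\nu\sim\pi_\ell$ and dividing by $T_0$ gives the first claim.

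\emph{Main obstacle.} The only delicate point is justifying the conditional-variance inequality in the pseudodistribution setting. There I would use the local distribution on $\{i,k\}$, which is a genuine distribution at degree $\ge 2$ after conditioning, so the argument transfers coordinatewise. Coordinates with $\Sigma_{ii}=0$ are never chosen and have $\Sigma_{ki}=0$, so they drop out of the sum.
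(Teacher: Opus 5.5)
Your proof is correct and is essentially the paper's argument. The trace of the total-covariance decomposition gives the conservation identity, and sampling $i$ with probability $\Sigma_{ii}/\Tr(\Sigma_\nu)$ together with the conditional-variance bound gives the expected trace drop. You convert that drop into mean drift by applying the identity at steps $\ell$ and $\ell+1$, rather than through the paper's branchwise computation $\E\norm{m_{\nu\mid x_i}-m_0}_2^2-\norm{m_\nu-m_0}_2^2=\Tr(\Sigma_\nu)-\E\Tr(\Sigma_{\nu\mid x_i})$.

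One point in your favor: your Cauchy--Schwarz derivation of $\Var(\E[x_k\mid x_i])\geq \Cov(x_k,x_i)^2/\Var(x_i)$ gives constant exactly $1$. That is what the claim as stated requires, whereas the paper's cited Fact only records an $\Omega(1)$ constant.
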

We first state a simple and standard fact that will be necessary in the proof of the claim.
\begin{fact}[Variance drop from conditioning (see~\cite{BarakRS11})]
\label{fact:variance-drop-conditioning}
    For any two real-valued random variables $x,y$ with $\Var(y)>0$,
    \[
        \Var(x)-\E_y\Var(x\mid y)
        =
        \Var_y(\E[x\mid y])
        \geq
        \Omega(1) \cdot \frac{\Cov(x,y)^2}{\Var(y)}.
    \]
\end{fact}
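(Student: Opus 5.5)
The plan is to prove the equality by the law of total variance and the inequality by a single application of Cauchy--Schwarz to the conditional mean; in fact the constant in the $\Omega(1)$ can be taken to be $1$. Since the rest of the paper only invokes this fact through local distributions, which are genuine finitely supported probability distributions on the two relevant coordinates, I will argue for honest random variables with finite second moments.

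\emph{Step 1 (the equality).} Write $g(y) \vcentcolon= \E[x \mid y]$. Expand $\Var(x) = \E[(x - g(y) + g(y) - \E x)^2]$. The cross term $\E[(x-g(y))(g(y)-\E x)]$ vanishes: conditioning on $y$, the factor $g(y)-\E x$ is fixed and $\E[x - g(y)\mid y]=0$. Hence
\[
\Var(x) = \E_y \Var(x\mid y) + \E_y\bigl(g(y)-\E x\bigr)^2 = \E_y\Var(x\mid y) + \Var_y(\E[x\mid y])\,,
\]
using $\E_y g(y) = \E x$ by the tower property.

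\emph{Step 2 (covariance passes to the conditional mean).} By the tower property,
\[
\Cov(x,y) = \E\bigl[(x-\E x)(y-\E y)\bigr] = \E_y\bigl[(g(y)-\E x)(y-\E y)\bigr] = \Cov(g(y),y)\,.
\]

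\emph{Step 3 (Cauchy--Schwarz).} Applying Cauchy--Schwarz to the pair $g(y)-\E x$ and $y-\E y$ gives
\[
\Cov(x,y)^2 \le \Var(g(y))\,\Var(y)\,.
\]
Dividing by $\Var(y)>0$ yields $\Var_y(\E[x\mid y]) \ge \Cov(x,y)^2/\Var(y)$, which is the claim with constant $1$.

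There is no real obstacle here. The only point needing care is that all expectations are well defined. For finitely supported distributions this is automatic. For pseudodistributions, both sides involve only the local distribution on $\{x,y\}$, which is a genuine distribution, so Steps 1--3 apply verbatim to it.
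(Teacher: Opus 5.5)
Your proof is correct. The law of total variance gives the equality, and writing $\Cov(x,y)=\Cov(\E[x\mid y],y)$ and applying Cauchy--Schwarz gives the inequality with constant exactly $1$. The paper does not prove this itself; it cites it as standard from \cite{BarakRS11}, and your argument is the standard one. The constant $1$ is also what the paper silently uses: it drops the $\Omega(1)$ both in the proof of \Cref{claim:fine-grained-step} and in the technical overview.
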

We now return to the proof of~\cref{claim:fine-grained-step}.
\begin{proof}
    The covariance identity
    \[
        \Cov(\mu_0)
        =
        \E_{\nu \sim \pi_\ell}\Sigma_\nu
        +
        \E_{\nu \sim \pi_\ell}(m_\nu-m_0)(m_\nu-m_0)^\top
    \]
    follows because sampling $\nu \sim \pi_\ell$ and then $x \sim \nu$ gives a sample from $\mu_0$. Taking traces and dividing by $\Tr(\Cov(\mu_0))$ gives the second display.

    Fix a branch $\nu$ with $\Tr(\Sigma_\nu)>0$; trace-zero branches contribute $0$ and are left unchanged. By \cref{fact:variance-drop-conditioning}, conditioning on coordinate $i$ decreases the variance of $x_j$ by at least
    \[
        \Var_\nu(x_j)-\E_{x_i}\Var_{\nu \mid x_i}(x_j)
        \geq
        \frac{\Cov_\nu(x_i,x_j)^2}{\Var_\nu(x_i)}\,,
    \]
    where zero-variance coordinates are omitted. If $i$ is sampled with probability proportional to $\Var_\nu(x_i)$, then averaging over the choice of $i$ gives
    \[
        \Tr(\Sigma_\nu)-\E_{i,x_i}\Tr(\Sigma_{\nu \mid x_i})
        \geq
        \frac{\Norm{\Sigma_\nu}_F^2}{\Tr(\Sigma_\nu)}\,.
    \]
    We then pass to the conditional distribution $\nu \mid x_i$.

    It remains to translate the trace drop to mean drift. For each fixed coordinate $i$, since $\E_{x_i} m_{\nu \mid x_i}=m_\nu$,
    \begin{align*}
        \E\norm{m_{\nu \mid x_i}-m_0}_2^2-\norm{m_\nu-m_0}_2^2
        &=
        \E\norm{m_{\nu \mid x_i}-m_\nu}_2^2 \\
        &=
        \E_{x_i}\left[
        \E_{x \sim \nu \mid x_i}\norm{x-m_\nu}_2^2
        -
        \E_{x \sim \nu \mid x_i}\norm{x-m_{\nu \mid x_i}}_2^2
        \right] \\
        &=
        \Tr(\Sigma_\nu)-\E\Tr(\Sigma_{\nu \mid x_i})\,.
    \end{align*}
    The second equality uses $x-m_\nu=(x-m_{\nu \mid x_i})+(m_{\nu \mid x_i}-m_\nu)$ and the fact that $\E[x-m_{\nu \mid x_i}\mid x_i]=0$. Averaging over the sampled coordinate $i$ and over $\nu \sim \pi_\ell$ proves the claim.
\end{proof}

Note that~\cref{claim:fine-grained-step} gives an unconditional lower bound on the increase in our potential function $s_\ell$.
We now translate this to a concrete lower bound on the increase whenever the mean rounding success condition fails.

\begin{claim}
\label{claim:fine-grained-recurrence}
    If
    \[
        \E_{\nu \sim \pi_\ell}\Norm{\Sigma_\nu}_F
        >
        \gamma \E_{\nu \sim \pi_\ell}\norm{m_\nu-m_0}_2^2,
    \]
    then
    \[
        s_{\ell+1}-s_\ell
        \geq
        \max\left\{
            \frac{1-s_\ell}{n},
            \frac{\gamma^2s_\ell^2}{1-s_\ell}
        \right\}.
    \]
\end{claim}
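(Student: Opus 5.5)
Both lower bounds come from the first display of \cref{claim:fine-grained-step},
\[
    s_{\ell+1}-s_\ell \ge \frac{1}{\Tr(\Cov(\mu_0))}\E_{\nu\sim\pi_\ell}\frac{\Norm{\Sigma_\nu}_F^2}{\Tr(\Sigma_\nu)},
\]
together with the identity $\E_\nu \Tr(\Sigma_\nu) = (1-s_\ell)\Tr(\Cov(\mu_0))$ from its second display. Write $T_0 = \Tr(\Cov(\mu_0))$. Each of the two bounds needs one lower bound on $\Norm{\Sigma_\nu}_F^2$ and one averaging step (Cauchy--Schwarz or Jensen).

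\textbf{First bound: $(1-s_\ell)/n$.} This one is unconditional. Since $\Sigma_\nu \succeq 0$ is an $n\times n$ matrix, Cauchy--Schwarz on its eigenvalues gives
\[
    \Norm{\Sigma_\nu}_F^2 \ge \frac{\Tr(\Sigma_\nu)^2}{n},
\]
so each branch contributes at least $\Tr(\Sigma_\nu)/n$. Averaging over $\nu$ and dividing by $T_0$ gives
\[
    s_{\ell+1}-s_\ell \ge \frac{\E_\nu \Tr(\Sigma_\nu)}{n T_0} = \frac{1-s_\ell}{n}.
\]

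\textbf{Second bound: $\gamma^2 s_\ell^2/(1-s_\ell)$.} This is where the hypothesis enters. Write $A=\E_\nu \Norm{\Sigma_\nu}_F$. By Cauchy--Schwarz in the form $(\E a)^2 \le \E[a^2/b]\,\E[b]$, with $a=\Norm{\Sigma_\nu}_F$ and $b=\Tr(\Sigma_\nu)$,
\[
    \E_\nu \frac{\Norm{\Sigma_\nu}_F^2}{\Tr(\Sigma_\nu)} \ge \frac{A^2}{\E_\nu\Tr(\Sigma_\nu)} = \frac{A^2}{(1-s_\ell)T_0}.
\]
Branches with zero trace also have zero covariance, so they contribute nothing to either side and can be dropped. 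The hypothesis gives $A > \gamma\,\E_\nu\norm{m_\nu-m_0}_2^2 = \gamma s_\ell T_0$. Substituting,
\[
    s_{\ell+1}-s_\ell \ge \frac{\gamma^2 s_\ell^2 T_0^2}{(1-s_\ell)T_0^2} = \frac{\gamma^2 s_\ell^2}{1-s_\ell}.
\]

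Since both bounds hold, the increment is at least their maximum. The argument is short; the steps most likely to cause trouble are bookkeeping ones. First, the normalization: $s_\ell$ is defined with the fixed denominator $T_0$, and the $T_0$ factors must cancel exactly as above. Second, the degenerate cases. If $s_\ell=1$, every branch has zero trace, so $A=0$ and the hypothesis fails; the claim is then vacuous. Third, the constant: \cref{fact:variance-drop-conditioning} is stated with an $\Omega(1)$, but the displayed inequality in \cref{claim:fine-grained-step} carries no constant, so I take it as given exactly as written. If only the $\Omega(1)$ version is available, the claim holds up to an absolute constant, which does not affect its later use.
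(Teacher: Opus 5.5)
Your proof is correct and matches the paper's argument step for step. The paper also gets the unconditional bound from $\Norm{\Sigma_\nu}_F^2 \ge \Tr(\Sigma_\nu)^2/n$, and the conditional bound from Cauchy--Schwarz in the form $(\E a)^2 \le \E[a^2/b]\,\E[b]$ together with $\E_\nu \Tr(\Sigma_\nu) = (1-s_\ell)\Tr(\Cov(\mu_0))$ and the hypothesis; your explicit treatment of zero-trace branches and of the case $s_\ell = 1$ is extra care that the paper leaves implicit.
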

\begin{proof}
    First, $\Norm{\Sigma_\nu}_F^2/\Tr(\Sigma_\nu)\geq \Tr(\Sigma_\nu)/n$ for every branch, so \Cref{claim:fine-grained-step} gives
    \[
        s_{\ell+1}-s_\ell\geq \frac{1-s_\ell}{n}.
    \]
    Second, by Cauchy-Schwarz and the identities in \Cref{claim:fine-grained-step},
    \begin{align*}
        \frac{1}{\Tr(\Cov(\mu_0))}\E_{\nu \sim \pi_\ell}
        \frac{\Norm{\Sigma_\nu}_F^2}{\Tr(\Sigma_\nu)}
        &\geq
        \frac{1}{\Tr(\Cov(\mu_0))}\cdot
        \frac{\left(\E_{\nu \sim \pi_\ell}\Norm{\Sigma_\nu}_F\right)^2}
        {\E_{\nu \sim \pi_\ell}\Tr(\Sigma_\nu)}  \\
        &\ge
        \frac{\gamma^2s_\ell^2}{1-s_\ell}.
    \end{align*}
    Combining the two bounds proves the claim.
\end{proof}

We now first use this to show as a warmup that $O\left(\frac{\sqrt{n}}{\gamma}\log n\right)$ steps of conditioning suffices to fulfill the mean rounding success condition.
This follows via a simple analysis of the recurrence outlined in the previous claim.

\begin{claim}
\label{claim:fine-grained-simple-hitting}
    Let $0<\gamma\leq \sqrt n$. Suppose $s_0=0$ and, until reaching $s_{\mathrm{end}}=1/\gamma^2$, a sequence satisfies the recurrence in Claim~\ref{claim:fine-grained-recurrence}. Then $s_\ell \geq s_{\mathrm{end}}$ for some
    \[
        \ell \leq O\left(\frac{\sqrt n}{\gamma}\log n\right).
    \]
\end{claim}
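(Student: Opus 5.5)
We split the trajectory into two phases according to which term of the recurrence in Claim~\ref{claim:fine-grained-recurrence} dominates, and count steps in each phase separately. Throughout we use only the recurrence
\[
s_{\ell+1}-s_\ell \ge \max\left\{\frac{1-s_\ell}{n},\ \frac{\gamma^2 s_\ell^2}{1-s_\ell}\right\},
\]
valid while $s_\ell<s_{\mathrm{end}}=1/\gamma^2$. If $\gamma\le 1$ then $s_{\mathrm{end}}\ge 1$ may be unreachable. In that case we read the target as $\min\{s_{\mathrm{end}},1/2\}$, which is what the minimum in Lemma~\ref{lem:fine-grained-pinning-loglog} needs. So we assume $s_{\mathrm{end}}\le 1/2$, giving $1-s_\ell\ge 1/2$ throughout.

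\emph{Phase 1 (linear term).} The first term gives $s_{\ell+1}-s_\ell\ge 1/(2n)$ per step. Within $2n\cdot s_*=O(\sqrt n/\gamma)$ steps we reach the threshold $s_*\asymp 1/(\gamma\sqrt n)$. Since $\gamma\le\sqrt n$, we have $s_*\lesssim s_{\mathrm{end}}$ up to constants, so this phase is consistent with the target. (If $\gamma$ is so close to $\sqrt n$ that $s_*\ge s_{\mathrm{end}}$, then phase 1 alone reaches $s_{\mathrm{end}}$ in $O(n s_{\mathrm{end}})=O(n/\gamma^2)\le O(\sqrt n/\gamma)$ steps, and we are done.)

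\emph{Phase 2 (quadratic term).} Here $s_{\ell+1}\ge s_\ell+\gamma^2 s_\ell^2$. Dyadic counting is simplest. Each time $s$ lies in $[2^k s_*,2^{k+1}s_*)$, a step increases $s$ by at least $\gamma^2 4^k s_*^2$. Doubling therefore takes at most
\[
\frac{2^k s_*}{\gamma^2 4^k s_*^2}=\frac{1}{2^k\gamma^2 s_*}\lesssim \frac{\sqrt n}{\gamma\,2^k}
\]
steps, plus one step lost at each dyadic boundary. At most $\log_2(s_{\mathrm{end}}/s_*)=O(\log n)$ doublings are needed. The total is
\[
O\!\left(\frac{\sqrt n}{\gamma}\right)+O(\log n)\le O\!\left(\frac{\sqrt n}{\gamma}\log n\right),
\]
since $\sqrt n/\gamma\ge 1$. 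This already beats the claimed bound; the $\log n$ in the claim is slack that also covers a cruder per-scale count.

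Alternatively, the inverse trick from Lemma~\ref{lem:asymmetric-pinning} works: $1/s_\ell-1/s_{\ell+1}\ge \gamma^2/2$ whenever $\gamma^2 s_\ell\le 1$. This bounds phase 2 by $2/(\gamma^2 s_*)=O(\sqrt n/\gamma)$.

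\emph{Main subtlety.} The one genuinely delicate point is that the recurrence holds only while the success condition fails. Once it holds, the process stops, which is exactly the hitting-time statement. So the counting must be phrased as: if the condition never held, $s_\ell$ would exceed $s_{\mathrm{end}}$ within the stated number of steps. The normalization of $s_{\mathrm{end}}$ relative to the bound $s_\ell\le 1$ is handled by the $1/2$ cap above.
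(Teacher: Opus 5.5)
Your argument is correct wherever the claim makes sense ($s_{\mathrm{end}}=1/\gamma^2\le 1$), but it takes a different route from the paper's proof of this claim.

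The paper takes the geometric mean of the two branches of the recurrence. The factors $1-s_\ell$ cancel, giving a single multiplicative rate $s_{\ell+1}\ge(1+\gamma/\sqrt n)s_\ell$ starting from $s_1\ge 1/n$. Climbing from $1/n$ to $1/\gamma^2$ then takes $O(\log n)$ doublings of $O(\sqrt n/\gamma)$ steps each, which is where the $\log n$ comes from.

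You instead run an additive phase on the first branch up to $s_*\asymp 1/(\gamma\sqrt n)$. You follow it with a quadratic phase on the second branch, counted by dyadic buckets or by telescoping $1/s_\ell$. This is essentially the paper's proof of the sharper \Cref{claim:fine-grained-hitting}, and it gives $O(\sqrt n/\gamma)$ outright.

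The paper's version is two lines and needs no lower bound on $1-s_\ell$, so it works uniformly up to $s_\ell$ near $1$. Yours removes the logarithm at the cost of a phase split. Your assumption $1-s_\ell\ge 1/2$ is used only in phase 1, where $s_\ell\le s_*\le 1/2$ holds automatically, so no cap is needed when $\gamma\ge 1$.

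Your remark for $\gamma\le 1$ is not right, although the paper also glosses over this regime (its geometric-mean step needs $s_\ell<1$). Reaching $s_\ell\ge 1/2$ does not give either branch of the minimum in \Cref{lem:fine-grained-pinning-loglog}. It yields only $\E\|m_\nu-m_0\|^2\ge\frac{1}{2}\Tr(\Cov(\mu_0))$, which is below $\gamma^{-2}\Tr(\Cov(\mu_0))$. The success condition is forced only once $1-s_\ell\le\gamma s_\ell$, i.e.\ $s_\ell\ge 1/(1+\gamma)$, via $\E\|\Sigma_\nu\|_F\le\E\Tr(\Sigma_\nu)=(1-s_\ell)\Tr(\Cov(\mu_0))$. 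So the correct cap is $1/(1+\gamma)$. The stretch from $1/2$ to $1/(1+\gamma)$ is then covered by the geometric-mean bound $s_{\ell+1}-s_\ell\ge\gamma s_\ell/\sqrt n\ge\gamma/(2\sqrt n)$, costing $O(\sqrt n/\gamma)$ further steps.
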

\begin{proof}
    The first term of the recurrence gives $s_1\geq 1/n$. Until the endpoint is reached, the recurrence implies
    \[
        s_{\ell+1}-s_\ell
        \geq
        \sqrt{\frac{1-s_\ell}{n}\cdot \frac{\gamma^2s_\ell^2}{1-s_\ell}}
        =
        \frac{\gamma}{\sqrt n}s_\ell.
    \]
    Hence $s_{\ell+1}\geq (1+\gamma/\sqrt n)s_\ell$ until the endpoint is reached. Starting from $s_1\geq 1/n$, this takes at most $O((\sqrt n/\gamma)\log n)$ additional rounds.
\end{proof}

We now give a slightly refined analysis of the recurrence to show that in fact $O\left(\frac{\sqrt{n}}{\gamma}\right)$ steps of conditioning suffices.
\begin{claim}
\label{claim:fine-grained-hitting}
    Let $1/\sqrt n \leq \gamma \leq \sqrt n$. Suppose $s_0=0$ and, until reaching $s_{\mathrm{end}}=1/\gamma^2$, a sequence satisfies the recurrence in Claim~\ref{claim:fine-grained-recurrence}. Then $s_\ell\geq s_{\mathrm{end}}$ for some
    \[
        \ell \leq O\left(\frac{\sqrt n}{\gamma}\right).
    \]
\end{claim}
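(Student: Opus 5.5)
Recall that while the endpoint has not been reached, $s_{\ell+1}-s_\ell \ge \max\{(1-s_\ell)/n,\ \gamma^2 s_\ell^2/(1-s_\ell)\}$. The plan is to split the trajectory into two phases according to which term of the maximum dominates, and to bound the length of each phase by $O(\sqrt n/\gamma)$ directly. This avoids the multiplicative-growth argument of \Cref{claim:fine-grained-simple-hitting}, which is what costs the $\log n$.

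\textbf{Linear phase.} Since $\gamma \ge 1/\sqrt n$, we have $s_{\mathrm{end}} = 1/\gamma^2 \le n$. If $s_{\mathrm{end}} > 1/2$ the endpoint can be handled separately. So in the main case assume $s_\ell \le 1/2$ throughout, which gives $1-s_\ell \ge 1/2$. The first term of the recurrence then guarantees an additive increment of at least $1/(2n)$ per step. Set the threshold $s^\ast = 1/(\gamma\sqrt n)$. The sequence reaches $s^\ast$ (or $s_{\mathrm{end}}$, whichever is smaller) within $2n s^\ast = O(\sqrt n/\gamma)$ steps.

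\textbf{Quadratic phase.} From $s^\ast$ onward I use the second term, which gives $s_{\ell+1}-s_\ell \ge \gamma^2 s_\ell^2$ (using $1-s_\ell\le 1$). Apply the standard reciprocal trick:
\[
\frac1{s_\ell}-\frac1{s_{\ell+1}} = \frac{s_{\ell+1}-s_\ell}{s_\ell s_{\ell+1}} \ge \frac{\gamma^2 s_\ell^2}{s_\ell s_{\ell+1}}.
\]
The obstacle is that $s_{\ell+1}$ could be much larger than $s_\ell$. I handle this with a case split:
\begin{itemize}
\item If $s_{\ell+1}\le 2s_\ell$, then $1/s_\ell$ drops by at least $\gamma^2/2$.
\item Otherwise $s$ at least doubles in that step.
\end{itemize}
Steps of the first kind can occur at most $(1/s^\ast)/(\gamma^2/2) = 2\sqrt n/\gamma$ times, since $1/s$ starts at $\gamma\sqrt n$, stays nonnegative, and never increases.

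The doubling steps need a separate count, and this is the delicate point. A naive count gives $\log(s_{\mathrm{end}}/s^\ast) = \log(\sqrt n/\gamma)$ such steps, which is at most $O(\sqrt n/\gamma)$ since $\log t \le t$. So doubling steps are harmless too.

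Adding the two phases gives $O(\sqrt n/\gamma)$ steps in total.

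\textbf{Regime $s$ near $1$.} The remaining concern is when $s_{\mathrm{end}}$ is close to $1$, i.e.\ $\gamma$ near $1$, so that the factor $1-s_\ell$ degenerates. Here the second term only improves, because the $1/(1-s_\ell)$ factor is at least $1$ and $1-s_\ell$ appears in the denominator. The first term is the one that shrinks. So the quadratic-phase argument above still applies verbatim, using only $\gamma^2 s_\ell^2/(1-s_\ell) \ge \gamma^2 s_\ell^2$. In the linear phase I only need $s_\ell \le s^\ast \le 1/2$, which holds whenever $\gamma\sqrt n \ge 2$. The boundary case $\gamma\sqrt n < 2$ gives $\sqrt n/\gamma = \Omega(n)$. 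There, the first term alone gives $1-s$ shrinking geometrically by the factor $(1-1/n)$. Reaching $s_{\mathrm{end}}=1/\gamma^2\le 1$ within distance $1-1/\gamma^2$ then takes $O(n\log(\cdot))$ steps. I expect this corner to need the most care, but it is compatible with $O(\sqrt n/\gamma)$ up to constants once one notes $s_{\mathrm{end}}\le 1$ forces $\gamma\ge1$, and for such $\gamma$ and $\gamma\sqrt n<2$ only bounded $n$ remain.
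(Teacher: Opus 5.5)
Your proof has the same two-phase structure as the paper's. First comes a linear phase driven by the $(1-s_\ell)/n$ term, up to a threshold of order $1/(\gamma\sqrt n)$. Then comes a quadratic phase driven by $s_{\ell+1}\ge s_\ell+\gamma^2 s_\ell^2$. The only difference is how you count the quadratic phase. You use the potential $1/s_\ell$, which drops by at least $\gamma^2/2$ on every non-doubling step, together with a $\log_2(s_{\mathrm{end}}/s^\ast)+1$ bound on doubling steps. The paper instead crosses each dyadic bucket $[a,2a]$ in $O(1/(\gamma^2 a))$ steps and sums the geometric series. The two counts are interchangeable, and this part of your argument is correct.

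The step that does not work as written is the corner $1\le\gamma\sqrt n<2$. You created this corner by taking $s^\ast=1/(\gamma\sqrt n)$, a threshold that can exceed $1/2$. There you try to reach $s_{\mathrm{end}}=1/\gamma^2$ using the first term alone, which costs about $n\log\bigl(\gamma^2/(\gamma^2-1)\bigr)$ steps. This blows up as $\gamma\downarrow 1$. Noting that only $n\le 3$ survives does not help, because the blow-up is in $\gamma$, not in $n$.

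The fix is the paper's threshold $s^\ast=1/(2\gamma\sqrt n)$, which is at most $1/2$ for every $\gamma\ge 1/\sqrt n$. With it, the linear phase takes at most $\lceil\sqrt n/\gamma\rceil$ steps. Your quadratic-phase count then gives at most $4\sqrt n/\gamma+\log_2(2\sqrt n/\gamma)+1$ further steps, and no corner case arises.
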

\begin{proof}
    First we use the first term in the recurrence to reach $s_\ell \geq 1/(2\gamma\sqrt n)$. Since $\gamma \geq 1/\sqrt n$, this threshold is at most $1/2$, and
    \[
        1-s_{\ell+1}\leq \left(1-\frac1n\right)(1-s_\ell).
    \]
    Hence this first regime takes at most
    \[
        O\left(n\log\left(\frac{1}{1-1/(2\gamma\sqrt n)}\right)\right)
        \leq
        O\left(\frac{\sqrt n}{\gamma}\right)
    \]
    rounds.
    Next use the simpler consequence
    \[
        s_{\ell+1}\geq s_\ell(1+\gamma^2s_\ell).
    \]
    Consider a bucket $s_\ell \in [a,2a]$ with $a \leq 1/\gamma^2$. While the sequence remains in this bucket,
    \[
        s_{\ell+1}\geq s_\ell(1+\gamma^2a),
    \]
    so the bucket is crossed in $O(1/(\gamma^2a))$ rounds. Summing over geometrically increasing buckets starting at $1/(2\gamma\sqrt n)$, this second regime reaches either $s_{\mathrm{end}}$ within
    \[
        O\left(\frac{1}{\gamma^2\cdot 1/(2\gamma\sqrt n)}\right)
        =
        O\left(\frac{\sqrt n}{\gamma}\right)
    \]
    additional rounds, concluding the proof.
\end{proof}

We now combine the previous claims to prove the pinning lemma.
\begin{proof}[Proof of \cref{lem:fine-grained-pinning-loglog}]
    If $\Tr(\Cov(\mu_0))=0$, then $\mu_0$ is supported on a point and the conclusion holds without any conditioning. If $\gamma<1/\sqrt n$, condition on all coordinates; this uses $n\leq \sqrt n/\gamma$ conditionings and leaves zero covariance. Otherwise let $\pi_0$ be the point mass on $\mu_0$.

    If
    \[
        \E_{\nu \sim \pi_\ell}\Norm{\Sigma_\nu}_F
        >
        \gamma \E_{\nu \sim \pi_\ell}\norm{m_\nu-m_0}_2^2
    \]
    for every $\ell < T = O\left(\frac{\sqrt n}{\gamma}\right)$,
    then \Cref{claim:fine-grained-recurrence,claim:fine-grained-hitting} imply that some $s_\ell$ reaches $s_{\mathrm{end}}=1/\gamma^2$ by time $T$, which corresponds to the second term in the minimum, completing the proof. It is easy to see that the entire proof also goes through for pseudodistributions of appropriately high degree, since we only use the Cauchy--Schwarz inequality and conditioning.
\end{proof}

\printbibliography

\end{document}